\newcommand{\R}{\mathbb{R}}
\newcommand{\N}{\mathbb{N}}
\newcommand{\Rb}{\overline{\R}}
\renewcommand{\S}{\mathcal{S}} 
\newcommand{\Ls}[1]{\text{L}^{#1}}
\renewcommand{\L}[1]{\text{L}^{#1}(\Omega)}
\newcommand{\BV}{\text{BV}(\Omega)} 
\newcommand{\D}{\text{D}}
\newcommand{\id}{\textnormal{Id}}
\DeclareMathOperator*{\argmin}{\textnormal{argmin}}
\DeclareMathOperator*{\proj}{\text{proj}}
\newcommand{\E}[1]{\mathbf{E}\left(#1\right)}
\newcommand{\Prob}{\mathbb{P}}
\newcommand{\Var}[1]{\mathbf{Var}\left(#1\right)}
\newcommand{\eps}{\varepsilon}
\newcommand{\set}[1]{\left\{ #1 \right\}}
\newcommand{\abs}[1]{\left| #1 \right|}
\newcommand{\norm}[1]{\left\| #1 \right\|}
\newcommand{\inner}[2]{\left\langle #1, #2 \right\rangle}
\newcommand{\bigo}{\mathcal{O}}
\newcommand{\ra}{\rightarrow}
\theoremstyle{plain}
\newtheorem{theorem}{Theorem}[section]
\newtheorem{proposition}[theorem]{Proposition}
\theoremstyle{definition}
\newtheorem{remark}[theorem]{Remark}
\newtheorem{lemma}[theorem]{Lemma}
\title[Statistical Multiresolution Estimation for Variational
Imaging]{Statistical Multiresolution Estimation for Variational Imaging: With an
Application in Poisson-Biophotonics}
\begin{document}

\author{Klaus Frick}
\address{Institute for Mathematical Stochastics\\
University of G{\"o}ttingen\\
Goldschmidtstra{\ss}e 7, 37077 G{\"o}ttingen}
\email{frick@math.uni-goettingen.de}
\thanks{Correspondence to frick@math.uni-goettingen.de. Part of this work is
already published in \cite{FriMar12}}

\author{Philipp Marnitz}
\address{Institute for Mathematical Stochastics\\
University of G{\"o}ttingen\\
Goldschmidtstra{\ss}e 7, 37077 G{\"o}ttingen}
\email{stochastik@math.uni-goettingen.de}

\author{Axel Munk}
\address{Institute for Mathematical Stochastics\\
University of G{\"o}ttingen\\
Goldschmidtstra{\ss}e 7, 37077 G{\"o}ttingen\\ {\color{white} white space} \\
and} \address{Max Planck Institute for Biophysical Chemistry \\
Am Fa{\ss}berg 11, 37077 G{\"o}ttingen}
\email{munk@math.uni-goettingen.de}

\begin{abstract}
In this paper we present a spatially-adaptive method for image reconstruction
that is based on the concept of \emph{statistical multiresolution estimation} as
introduced in \cite{FriMarMun12}. It constitutes a variational
regularization technique that uses an $\ell_\infty$-type distance measure as
data-fidelity combined with a convex cost functional. The resulting convex
optimization problem is approached by a combination of an inexact alternating
direction method of multipliers and Dykstra's projection algorithm. We
describe a novel method for balancing data-fit and regularity that is fully automatic and
allows for a sound statistical interpretation. The performance of our estimation
approach is studied for various problems in imaging. Among others, this includes
deconvolution problems that arise in Poisson nanoscale fluorescence microscopy.
\end{abstract}

\keywords{statistical multiresolution, extreme-value statistics,
total-variation regularization, statistical inverse problems, statistical
imaging, alternating direction method of multipliers, Poisson regression}

\maketitle 

\section{Introduction}

In this paper we are concerned with the reconstruction of an unknown
gray-valued image $u^0 \in \L{2}$ with $\Omega = [0,1]^2$ given the data
\begin{equation}\label{intro:data}
  Y_{ij} = (Ku^0)_{ij} + \eps_{ij},\quad 1\leq i \leq m, 1\leq j\leq n.
\end{equation}
For the moment, we assume that $\eps_{ij}$ are independent
and identically distributed Gaussian random variables with $\E{\eps_{11}}
= 0$ and $\E{\eps_{11}^2} = \sigma^2 > 0$ and that 
$K:\L{2}\ra \R^{m\times n}$ is a linear and bounded operator. $K$ is assumed to
model image acquisition and sampling at the same time, i.e. $(Ku)_{ij}$ is
assumed to be a sample at the pixel $(i\slash m, j\slash n)$ of a smoothed
version of $u$. Throughout the paper we will assume that $\sigma^2$ is known
(for reliable estimation techniques for $\sigma^2$ see e.g.
\cite{MunBisWagFre05} and references therein).  
 
A popular approach for computing a stable approximation of $u^0$ from the data
$Y$ given in the Gaussian model (\ref{intro:data}) consists in minimizing the
penalized least squares functional, i.e.
\begin{equation}\label{intro:rof}  
  \hat u(\lambda) \in \argmin_{u\in \L{2}}
  \frac{\lambda}{2}\sum_{i,j}\abs{(Ku)_{ij} - Y_{ij}}^2 + J(u)
\end{equation}
where $J:\L{2}\ra \Rb$ is a convex and lower-semicontinuous
regularization functional and $\lambda > 0$ a suitable multiplier. In the
seminal work
\cite{RudOshFat92}, for example, the authors proposed the \emph{total variation semi-norm} 
\begin{equation}\label{intro:tv}
  J(u) = \begin{cases} \abs{\D u}(\Omega) & \text{ if } u \in \BV \\
          + \infty & \text{ else}
         \end{cases}
\end{equation} 
as a penalization functional which has been a widely used model in imaging ever
since. Here, $\abs{\D u}(\Omega)$ denotes the total variation of the
(measure-valued) gradient of $u$ which coincides with $\int_\Omega \abs{\nabla
u}$ if $u$ is smooth. Numerous efficient solution methods for
\eqref{intro:optprob} \cite{Cha04,DobVog97,HinKun04} and
various modifications have been suggested so far (cf.
\cite{ChaLio97,FriSch07,OshBurGolXuWot05,ZhaBurBreOsh10}
to name but a few). In particular, in order to accelerate numerical algorithms
and to prevent oversmoothing the total variation semi-norm is often augmented to
\begin{equation}\label{intro:tvaug}
J(u) + \gamma \int_\Omega u^2
\end{equation}
with $\gamma \geq 0$.

The quadratic fidelity in \eqref{intro:rof} has an essential drawback:
The information in the residual is incorporated \emph{globally}, that is each
pixel value $(Ku)_{ij}-Y_{ij}$ contributes equally to the estimator $\hat
u(\lambda)$ \emph{independent of its spatial position}. In practical situations
this is clearly undesirable, since images usually contain features of different scales
and modality, i.e.\ constant and smooth portions as well as oscillating patterns
both of different spatial extent. A solution $\hat u(\lambda)$ of
\eqref{intro:rof} is hence likely to exhibit under- and oversmoothed regions at the same time.
  
Recently, \emph{spatially-adaptive} reconstruction approaches became popular
that are based on \eqref{intro:rof} with a locally varying regularization parameter, i.e.
\begin{equation}\label{intro:rofspat}
  \hat u(\lambda) \in \argmin_{u\in\L{2}} \frac{1}{2}\sum_{i,j}
  \lambda_{ij}\abs{(Ku)_{ij} - Y_{ij}}^2 + J(u).
\end{equation} 
However, the choice of the multiplier function $\lambda_{ij}$ is
subtle and different approaches have been suggested. See for instance
\cite{GilSocZee06,Gra09,DonHinRin11,HotMarStiDavKabMun12}.

In this paper we take a different route to achieve spatial adaption which allows
to ``localize'' any convex functional $J$ by minimizing it over a convex set
that is determined by the statistical extreme value behaviour of the residual
process. More precisely, we study estimators $\hat u$ of $u^0$ that are computed
as solutions of the convex optimization problem
\begin{equation}\label{intro:optprob}
  \inf_{u\in\L{2}} J(u)\quad\text{ s.t. }\quad \max_{S\in\S}
  \frac{c_S}{\sigma^2} \sum_{(i,j)\in S} \abs{(Ku)_{ij}- Y_{ij}}^2\leq 1.
\end{equation}
Here, $\S$ denotes a system of subsets of the grid
$G=\set{1,\ldots,m}\times\set{1,\ldots,n}$ and $\set{c_S~:~ S\in \S}$ is a set
of positive weights that govern the trade-off between data-fit and regularity
locally on each set $S\in\S$. Solutions of \eqref{intro:optprob} are special
instances of \emph{statistical multiresolution estimators (SMRE)} as studied in
\cite{FriMarMun12}. In this context the statistic $T:\R^{m\times n}\ra \R$
defined by \begin{equation}\label{intro:mre}
  T(v) = \max_{S\in\S} \frac{c_S}{\sigma^2} \sum_{(i,j)\in S}
  \abs{v_{ij}}^2,\quad v\in\R^{m\times n}
\end{equation}
is referred to as \emph{multiresolution (MR) statistic}. Summarizing, an SMRE
$\hat u$ of $u^0$ is an element with minimal $J$ among all candidate estimators
$u$ that satisfy the condition $T(Ku-Y)\leq 1$.

Special instances of \eqref{intro:optprob} have been studied recently: For the
case when $\S$ contains the entire domain $G$ only, it
has been shown in \cite{ChaLio97} that \eqref{intro:optprob} is equivalent to \eqref{intro:rof} if
$K$ satisfies certain conditions. As mentioned above, this approach is likely to
oversmooth small-scaled image features (such as texture) and/or underregularize
smooth parts of the image. An improved model was proposed in
\cite{BerCasRouSol03} where $\S$ is chosen to consist of a (data-dependent)
\emph{partition} of $G$ that is obtained in a preprocessing step (for the
numerical simulations in \cite{BerCasRouSol03}, Mumford-Shah segmentation is
considered). Under similar conditions on $K$ as in \cite{ChaLio97}, it was shown
in \cite{BerCasRouSol03} that \eqref{intro:optprob} is equivalent to
\eqref{intro:rofspat} where $\lambda_{ij}$ is constant on each $S\in\S$. This
approach was further developed in \cite{AlmBalCasHar08} where a subset $S\subset
G$ is fixed and afterwards $\S$ is defined to be the collection of all
translates of $S$ (in fact, the authors study the convolution of the
squared residuals with a discrete kernel). The authors propose a proximal point
method for the solution of \eqref{intro:optprob}. This approach of local constraints w.r.t. a
window (or kernel) of \emph{fixed size} was also studied in
\cite{FacAlmAujCas09} for irregular sampling and regularization functionals
other than the total variation were considered. In particular, it is observed
that the difference between results obtained by using the total
variation penalty \eqref{intro:tv} and the Dirichlet-energy (integrated squared
norm of the derivative) is not so big when using local constraints. This is in
accordance to findings in \cite{FriMarMun12a} for one-dimensional signals. In
\cite{DonHinRin11} the model of \cite{AlmBalCasHar08} was studied in the continuous function space setting. Moreover the authors in
\cite{DonHinRin11} provided a fast algorithm for the solution of the
constrained optimization problem based on the hierarchical decomposition scheme
\cite{TadNezVes04} combined with the unconstrained problem \eqref{intro:rofspat}.

In this paper, we propose a novel, automatic selection rule for the weights
$c_S$ based on a statistically sound method that is applicable for any
pre-specified, deterministic system of subsets $\S$. We are particularly
interested in the case when $\S$ constitutes a highly redundant collection of subsets of $G$ consisting of overlapping subsets of different
scales. This is a substantial extension to the approaches in
\cite{AlmBalCasHar08,FacAlmAujCas09,DonHinRin11} that only consider one fixed
(pre-defined) scale. Our approach will amount to select a single parameter
$\alpha \in [0,1]$ with the interpretation that the true signal $u^0$ satisfies
the constraint in \eqref{intro:optprob} with probability $\alpha$. From the
definition of \eqref{intro:optprob} it is then readily seen that
$\Prob\left(J(\hat u) \leq J(u^0)\right) \geq \alpha$
for any solution $\hat u$ of \eqref{intro:optprob}. In other words, our method
controls the probability that the reconstruction $\hat u$ is at least as smooth (in the
sense of $J$) as the true image $u^0$. To this aim, it will be necessary to gain
stochastic control on the null-distribution $T(\eps)$, where $\eps =
\set{\eps_{ij}}$ is a lattice of independent $\mathcal{N}(0,\sigma^2)$-
distributed random variabels. 

Moreover, for the efficient solution of \eqref{intro:optprob} we extend the
algorithmic ideas in \cite{FriMarMun12} and propose a combination of an
\emph{inexact} alternating direction method of multipliers (ADMM)
\cite{ElmGol94,ChaPoc11} with Dykstra's projection algorithm \cite{BoyDyk86}.
Finally, we indicate how our approach can be applied to image deblurring
problems in fluorescence microscopy where the observed data does not fit into
the white noise model \eqref{intro:data} but where one usually assumes that
independently
\begin{equation}\label{poisson:data}
Y_{ij} \sim \textnormal{Pois}\left((Ku)_{ij}\right),\quad 1\leq i\leq m,1\leq
j\leq n 
\end{equation}
Here, $\textnormal{Pois}(\beta)$ stands for the Poisson distribution with
parameter $\beta > 0$. We mention that similar models occur in positron emission
tomography (cf. \cite{VarSheKau85}) and large binocular telescopes (cf.
\cite{BerBoc00}) and we claim that our method can be useful there as well. We
apply Anscombe's transform to transform the Poisson data to normality.
Furthermore we present a modified version of the ADMM to solve the resulting
variant of \eqref{intro:optprob}. We finally illustrate the capability of our
approach by numerical examples: Image denoising, deblurring and inpainting and
deconvolution problems that arise in nanoscale fluorescence microscopy.

In the following, we denote by $\abs{S}$ the cardinality of $S\in \S$. We
often refer to $\abs{S}$ as the \emph{scale} of $S$. We assume that $m,n\in\N$
are fixed and denote by $\inner{\cdot}{\cdot}$ and $\norm{\cdot}$ the Euclidean
inner-product and norm on $\R^{m\times n}$ and by $\norm{u}_{\Ls{2}}$ the
$\Ls{2}$-norm of $u$. For a convex functional $J:\L{2}\ra \Rb$ the
subdifferential $\partial J(u)$ is the set of all $\xi\in\L{2}$ such that
$J(v)\geq J(u) + \int_\Omega \xi(v-u)$ for all $v\in\L{2}$. The
Bregman-distance between $v,u\in\L{2}$ w.r.t. $\xi \in \partial J(u)$ is
defined by
\begin{equation*}
D_J^\xi(v,u) = J(u) - J(v) - \int_\Omega \xi(u-v)\geq 0.
\end{equation*}
If additionally $\eta \in \partial J(v)$ we define the symmetric Bregman
distance by $D_J^{\text{sym}}(u,v) = D_J^\xi(v,u) + D_J^\eta(u,v)$. By $J^*$ we
denote the Legendre-Fenchel transform of $J$, i.e. $J^*(q) = \sup_{u\in\L{2}}
\int_\Omega u q - J(u)$.  We finally note that it would not be restricitve (yet
less intuitive) to replace $\L{2}$ by any other separable Hilbert space.

\section{Statistical Multiresolution Estimation}\label{smre}

We review sufficient conditions that guarantee existence of
SMREs, solutions of \eqref{intro:optprob} that is. To this end, we rewrite
\eqref{intro:optprob} into an equality constrained problem and study the
corresponding augmented Lagrangian function (Section \ref{smre:not}).
Moreover, we address the important question on how to choose the \emph{scale
weights} $c_S$ automatically in Section \ref{smre:choice}. Finally, we
discuss different choices for the system $\S$ that have proved feasible in
practice in Section \ref{smre:test}.

\subsection{Existence of SMRE}\label{smre:not}

For the time being, let $\set{c_S~:~ S\in \S}$ be a set of positive real numbers.
We rewrite \eqref{intro:optprob} to an equality constrained problem by
introducing the slack variable $v\in \R^{m\times n}$. To be more precise, we aim
for the solution of
\begin{equation}\label{smre:decomp}
  \inf_{u\in\L{2}, v\in \R^{m\times n}} J(u) + H(v) \quad\text{ s.t. }\quad Ku -
  v = 0
\end{equation}
where $H$ denotes the indicator function on the feasible set $\mathcal{C}$ of
\eqref{intro:optprob}, i.e.
\begin{equation}
\label{smre:feasset}
  \mathcal{C} = \set{v\in \R^{m\times n}~:~ T(v - Y) \leq 1 } \quad\text{
  and} \quad  H(v) = \begin{cases} 0 & \text{ if } v\in \mathcal{C} \\ +\infty &
  \text{ else } \end{cases}.
\end{equation}
Problems of type \eqref{smre:decomp} were studied extensively in
\cite[Chap. III]{ForGlo83}. There, Lagrangian multiplier methods are employed to
solve \eqref{smre:decomp}. Recall the definition of the \emph{augmented
Lagrangian} of \eqref{smre:decomp}:
\begin{equation}\label{smre:lagr}  
  L_\lambda (u,v;p) = \frac{1}{2\lambda} \norm{Ku - v}^2 + J(u) + H(v) +
  \inner{p}{Ku - v},\quad \lambda > 0.
\end{equation}
Here $p\in\R^{m\times n}$ denotes the Lagrange multiplier for the linear
constraint in \eqref{smre:decomp}. Note that $L_\lambda$ equals the ordinary
Lagrangian $L(u,v;p) = J(u) + H(v) + \inner{p}{Ku-v}$ augmented by the quadratic
term $\norm{Ku-v}^2\slash (2\lambda)$ that fosters the fulfillment of the linear
constraint in \eqref{smre:decomp}. 

It is well known that the saddle-points of $L$ and $L_\lambda$ coincide (cf.
\cite[Chap III Thm. 2.1]{ForGlo83}) and that existence of a saddle point of
$L_\lambda$ follows from existence of solutions of \eqref{smre:decomp} together
with constraint qualifications of the MR-statistic $T$. One typical example for
the latter is given in Proposition \ref{smre:exist}. The result is rather
standard and can be deduced e.g. from \cite[Chap III, Prop 3.1 and Thm.
4.2]{EkeTem76} (cf. also \cite[Chap III]{ForGlo83})

\begin{proposition}\label{smre:exist}
Assume that  \eqref{smre:decomp} has a solution $(\hat u, \hat v)\in
\L{2}\times \R^{m\times n}$ and that there exists $\bar u \in \L{2}$ such that $J(\bar u) < \infty$ and
  $T(K\bar u-Y) < 1$ \emph{(Slater's constraint qualification)}. Then, there
  exists $\hat p\in \R^{m\times n}$ such that $(\hat u, \hat v, \hat p)$ is a saddle point of $L_\lambda$, i.e.
 \begin{equation*}
      L_\lambda(\hat u,\hat v; p) \leq L_\lambda(\hat u, \hat v; \hat p) \leq
  L_\lambda( u,  v; \hat p),\quad \forall\left( u\in \L{2}, v,p \in \R^{m\times
  n}\right).
  \end{equation*}
\end{proposition}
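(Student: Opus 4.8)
The plan is to realize $\hat p$ as a Lagrange multiplier extracted from the subdifferential of the primal value function, and then to read off the two saddle-point inequalities. Since it has already been recalled (from \cite[Chap.~III, Thm.~2.1]{ForGlo83}) that the saddle points of $L$ and $L_\lambda$ coincide, it suffices to produce $\hat p\in\R^{m\times n}$ for which $(\hat u,\hat v,\hat p)$ is a saddle point of the \emph{ordinary} Lagrangian $L(u,v;p) = J(u)+H(v)+\inner{p}{Ku-v}$. The first of the two required inequalities is then essentially free: because $(\hat u,\hat v)$ solves \eqref{smre:decomp} it is feasible, $K\hat u - \hat v = 0$, so the coupling term $\inner{p}{K\hat u-\hat v}$ vanishes for every $p$ and $L(\hat u,\hat v;p)=J(\hat u)+H(\hat v)$ is independent of $p$. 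Thus the whole content of the statement is the second inequality, i.e.\ that $(\hat u,\hat v)$ minimizes $L(\cdot,\cdot;\hat p)$.

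First I would introduce the perturbation (value) function $\Phi:\R^{m\times n}\to\Rb$,
\[
  \Phi(r) = \inf_{u\in\L{2},\, v\in\R^{m\times n}} \set{J(u)+H(v) ~:~ Ku-v=r},\qquad r\in\R^{m\times n},
\]
whose value at $r=0$ is precisely the optimum $J(\hat u)+H(\hat v)$ of \eqref{smre:decomp}. Convexity of $J$ and $H$ together with linearity of $(u,v)\mapsto Ku-v$ make $\Phi$ convex, and a short computation identifies its conjugate with the dual objective built from $J^*$ and $H^*$; this is the role of \cite[Chap.~III, Prop.~3.1]{EkeTem76}. I would then pick $\hat p$ with $-\hat p\in\partial\Phi(0)$. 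Unwinding the definition, $-\hat p\in\partial\Phi(0)$ says exactly that $r=0$ minimizes $r\mapsto\Phi(r)+\inner{\hat p}{r}$, and since $\inf_r\big(\Phi(r)+\inner{\hat p}{r}\big)=\inf_{u,v}L(u,v;\hat p)$ this yields $\inf_{u,v}L(u,v;\hat p)=J(\hat u)+H(\hat v)=L(\hat u,\hat v;\hat p)$, which is the second inequality. Equivalently, using the adjoint relation $\inner{p}{Ku}=\int_\Omega (K^*p)\,u$, the minimization decouples into $-K^*\hat p\in\partial J(\hat u)$ and $\hat p\in\partial H(\hat v)$ (a normal-cone condition to $\mathcal C$ at $\hat v$), which is the familiar KKT form.

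The crux is therefore to guarantee $\partial\Phi(0)\neq\emptyset$, and this is exactly where Slater's qualification is needed. Taking $\bar u$ with $J(\bar u)<\infty$ and $T(K\bar u - Y)<1$, the point $\bar v:=K\bar u$ lies in the \emph{interior} of the feasible set $\mathcal C$, because $T$ is continuous and $T(\bar v-Y)<1$ \emph{strictly}. Consequently, for every $r$ of sufficiently small norm the pair $(\bar u,\bar v-r)$ satisfies the perturbed constraint $Ku-v=r$ with $\bar v-r\in\mathcal C$, so that $H(\bar v-r)=0$ and $\Phi(r)\le J(\bar u)<\infty$ on a whole neighbourhood of $0$; in particular $0\in\operatorname{int}(\operatorname{dom}\Phi)$. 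A convex function that is bounded above on a neighbourhood of a point at which it is finite is continuous there and hence subdifferentiable, so $\partial\Phi(0)\neq\emptyset$ --- this is the stability statement of \cite[Chap.~III, Thm.~4.2]{EkeTem76}. I expect this interior/continuity step to be the main obstacle, precisely because it is the only place where the \emph{strict} inequality in Slater's condition is used: with merely $T(K\bar u-Y)\le 1$ one could only place $\bar v$ on the boundary of $\mathcal C$, which need not put $0$ in the interior of the domain of $\Phi$ and would leave room for an empty subdifferential, i.e.\ a genuine duality gap. Having fixed $-\hat p\in\partial\Phi(0)$, the remaining verifications are routine, and transporting the conclusion back through the coincidence of saddle points of $L$ and $L_\lambda$ completes the argument.
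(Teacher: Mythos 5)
Your argument is correct and is precisely the one the paper points to: the paper offers no written proof but defers to the perturbation--duality and stability results of \cite[Chap.~III, Prop.~3.1 and Thm.~4.2]{EkeTem76}, which is exactly the value-function $\Phi$, the interiority $0\in\operatorname{int}(\operatorname{dom}\Phi)$ obtained from Slater's condition, and the extraction of $\hat p$ from $\partial\Phi(0)$ that you carry out. Your identification of the strict inequality in Slater's condition as the step that forces $\partial\Phi(0)\neq\emptyset$, and the transfer back to $L_\lambda$ via the coincidence of saddle points, match the intended route, so there is nothing to add.
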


\begin{remark}
  \begin{enumerate}
    \item If $\hat u\in \L{2}$ and $\hat v,\hat p\in\R^{m\times n}$ are as in
    Proposition \ref{smre:exist}, then $\hat u$ is an SMRE, i.e. it solves
    \eqref{intro:optprob}. Moreover, the following extremality relations hold:
    \begin{equation*}
    -K^*\hat p\in \partial J(\hat u),\quad \hat p \in \partial
    H(\hat v)\quad\text{ and }\quad K\hat u = \hat v.
    \end{equation*}
    \item Slater's constraint qualification is for instance satisfied if
    the set 
    \begin{equation*}
    \set{Ku~:~ u\in \L{2}\text{ and } J(u)< \infty}
    \end{equation*} is dense in $\R^{m\times n}$.
    \item If $J$ is chosen to be the total variation semi-norm \eqref{intro:tv},
    then a sufficient condition for the existence of solutions of \eqref{smre:decomp}
     will be that there exists $(i,j)\in S$ for some $S\in\S$ such that
    $(K\mathbf{1})_{ij}\not = 0$, where $\mathbf{1}\in\L{2}$ is the constant $1$-function. This is immediate
    from Poincar\'e's inequality for functions in $\BV$ (cf. \cite[Thm.5.11.1]{Zie89}). 
  \end{enumerate}
\end{remark}

\subsection{An a priori parameter selection method}\label{smre:choice}

The choice of the \emph{scale weights} $c_S$ in \eqref{intro:optprob} is of
utmost importance for they determine the trade-off between smoothing and
data-fit (and hence play the role of spatially local regularization parameters).
We propose a statistical method that is based on quantile values of extremes of transformed
$\chi^2$ distributions.

We pursue the strategy of controlling the probability that the true signal $u^0$
satisfies the constraint in \eqref{intro:optprob}. To this end, observe that
for $S\in \S$ the random variable
\begin{equation*}
  t_S(\eps) = \sigma^{-2} \sum_{(i,j) \in S}\eps_{ij}^2
\end{equation*}
is $\chi^2$-distributed with $\abs{S}$ degrees of freedom (d.o.f.). With this
notation, it follows from \eqref{intro:data} that $u^0$ satisfies the
constraints in \eqref{intro:optprob} if 
\begin{equation*} 
  T(Ku^0 -Y) = T(\eps) = \max_{S\in\S} c_S t_S(\eps)\leq 1. 
\end{equation*}
Additionally, we require that the maximum above is \emph{balanced} in the sense
that the probability for $c_S t_S(\eps) > 1$ is equal for each $S\in \S$.  

To achieve this, we first aim for transforming $t_S(\eps)$ to normality. It was
shown in \cite{HawWix86} that the \emph{fourth root transform}
$\sqrt[4]{t_S(\eps)}$ is approximately normal with mean and variance
\begin{equation*}
  \mu_S = \sqrt[4]{\abs{S}-0.5} \quad \text{ and }\quad \sigma^2_S =
  \left(8\sqrt{\abs{S}}\right)^{-1},
\end{equation*}
respectively. The fourth root transform outperforms other power transforms in
the sense that the Kullback-Leibler distance to the normal distribution is
minimized, see \cite{HawWix86}. In particular, it was stressed in
\cite{HawWix86} that the approximation works well for small d.o.f. Next,
we consider the extreme value statistic
\begin{equation}\label{smre:trans}
  \max_{S\in\S} \frac{\sqrt[4]{t_S(\eps)} - \mu_S}{\sigma_S}.
\end{equation}
We note that due to the transformation of the random variable $t_S(\eps)$ to
normality each scale contributes equally to the supremum in \eqref{smre:trans}. Hence a parameter
choice strategy based on quantile values of the statistic \eqref{smre:trans} is
likely to balance the different scales occurring in $\S$. We make this precise
in the following 

\begin{proposition}\label{smre:conf}
For $\alpha\in (0,1)$ and $S\in \S$ let $q_\alpha$ be the $\alpha$-quantile of
the statistic \eqref{smre:trans} and set $c_S = (q_\alpha \sigma_S +
\mu_S)^{-4}$. Then, any solution $\hat u$ of \eqref{intro:optprob} satisfies
\begin{equation}\label{smre:alphareg}
  \Prob(J(\hat u) \leq J(u^0)) \geq \alpha.
\end{equation}
\end{proposition}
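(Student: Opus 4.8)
The plan is to reduce the probabilistic assertion \eqref{smre:alphareg} to a single event on the noise lattice $\eps$ and then to evaluate that event's probability directly from the definition of the scale weights $c_S$.

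First I would observe that it suffices to prove $\Prob(T(\eps)\le 1)\ge\alpha$. Indeed, since $Y=Ku^0+\eps$ by \eqref{intro:data}, the residual of the truth is $Ku^0-Y=-\eps$, and because the MR-statistic \eqref{intro:mre} depends only on the squared entries of its argument, $T(Ku^0-Y)=T(\eps)$. Hence on the event $\{T(\eps)\le 1\}$ the function $u^0$ is feasible for \eqref{intro:optprob}. As $\hat u$ is by definition a minimizer of $J$ over the feasible set, feasibility of $u^0$ forces $J(\hat u)\le J(u^0)$. Therefore $\{T(\eps)\le 1\}\subseteq\{J(\hat u)\le J(u^0)\}$, and monotonicity of probability gives $\Prob(J(\hat u)\le J(u^0))\ge\Prob(T(\eps)\le 1)$.

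Next I would translate the event $\{T(\eps)\le 1\}$ into a statement about the transformed extreme-value statistic \eqref{smre:trans}. Writing $T(\eps)=\max_{S\in\S}c_S t_S(\eps)$, the condition $T(\eps)\le 1$ is equivalent to $c_S t_S(\eps)\le 1$ holding for every $S\in\S$ simultaneously. Substituting $c_S=(q_\alpha\sigma_S+\mu_S)^{-4}$ and using that $t_S(\eps)\ge 0$, $\sigma_S>0$ and $q_\alpha\sigma_S+\mu_S>0$, each such inequality rearranges --- taking reciprocals and fourth roots, both monotone on the nonnegative reals --- to $\sqrt[4]{t_S(\eps)}\le q_\alpha\sigma_S+\mu_S$, that is, to $(\sqrt[4]{t_S(\eps)}-\mu_S)/\sigma_S\le q_\alpha$. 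Taking the maximum over $S\in\S$ shows that $\{T(\eps)\le 1\}$ coincides with the event that the statistic in \eqref{smre:trans} does not exceed $q_\alpha$. Finally I would invoke the definition of $q_\alpha$ as the $\alpha$-quantile of \eqref{smre:trans}: writing $q_\alpha=\inf\{x:F(x)\ge\alpha\}$ for the right-continuous distribution function $F$ of \eqref{smre:trans}, right-continuity yields $F(q_\alpha)\ge\alpha$, i.e. $\Prob(T(\eps)\le 1)=F(q_\alpha)\ge\alpha$. Combined with the first step this gives \eqref{smre:alphareg}.

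The argument is essentially a chain of equivalences, so there is no deep obstacle; the delicate points are purely bookkeeping. I must ensure that the monotone transformations linking $\{c_S t_S(\eps)\le 1\}$ to \eqref{smre:trans} are applied in the correct direction, so that a genuine equivalence of events (not merely an inclusion) is obtained, and I must get the sense of the quantile inequality right --- it is ``$\ge\alpha$'', with equality precisely when $F$ is continuous at $q_\alpha$, which is why the conclusion is phrased with ``$\ge$''. I would also stress that the approximate normality of $\sqrt[4]{t_S(\eps)}$ from \cite{HawWix86} plays no role in the proof itself: $q_\alpha$ is taken to be the exact quantile of the genuine (transformed $\chi^2$) statistic \eqref{smre:trans}, so the guarantee \eqref{smre:alphareg} holds exactly, the normality serving only to motivate the balancing of scales.
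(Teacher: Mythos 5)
Your proposal is correct and follows essentially the same route as the paper: reduce \eqref{smre:alphareg} to the event $\{T(\eps)\le 1\}$ via feasibility of $u^0$ and minimality of $\hat u$, then unwind the monotone transformations to identify that event with $\{\eqref{smre:trans}\le q_\alpha\}$ and invoke the quantile definition. Your only deviation is a minor refinement --- the paper asserts $\Prob(T(\eps)\le 1)=\alpha$ outright, whereas you correctly note that the quantile definition in general only guarantees $\ge\alpha$ (with equality under continuity of the distribution function), which is all the statement requires.
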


\begin{proof}
From \eqref{intro:data} and from the monotonicity of
  the mapping $x\mapsto \sqrt[4]{x}$ it follows that
  \begin{eqnarray*}
      \Prob\left(T(Ku^0 - Y) \leq 1 \right) & = & \Prob\left(c_S t_S(\eps) \leq
      1\;\forall S\in \S\right) \\
      & = &  \Prob\left(\sqrt[4]{t_S(\eps)} \leq q_\alpha \sigma_S +
      \mu_S\;\forall S\in \S\right) \\
      & = &  \Prob\left(\max_{S\in\S} \frac{\sqrt[4]{t_S(\eps)} -
      \mu_S}{\sigma_S} \leq q_\alpha \right)  =  \alpha.
  \end{eqnarray*}
  In other words, the constants $c_S$ are chosen such that the true signal
  $u^0$ satisfies the constraints with probability $\alpha$. By the fact that
  $\hat u$ is a solution of \eqref{intro:optprob} it follows that $\Prob(T(Ku^0
  - Y) \leq 1) \leq \Prob(J(\hat u)\leq J(u^0))$.
\end{proof}   

\begin{remark}\label{smre:alpharegrem}       
By the rule $c_S = (q_\alpha \sigma_S + \mu_S)^{-4}$ in Proposition
  \ref{smre:conf} the problem of selecting the \emph{set} of scale weights $c_S$
  is reduced to the question on how to choose the \emph{single} value
  $\alpha\in(0,1)$. The probability $\alpha$ plays the role of an universal
  regularization parameter and allows for a precise statistical interpretation:
  It constitutes a lower bound on the probability that the SMRE $\hat u$ is more
  regular (in the sense of $J$) than the true object $u^0$. Moreover, one has
  that
  \begin{equation*}
  c_S t_S(\eps)>1\quad\Leftrightarrow\quad \frac{\sqrt[4]{t_S(\eps)} -
  \mu_S}{\sigma_S} > q_\alpha,\quad\text{ for all }S\in\S.
  \end{equation*}
\end{remark}

Note, that the constraint in \eqref{intro:optprob} can be rewritten into
\begin{equation*}
\frac{1}{\abs{S}} \sum_{(i,j)\in S}\abs{(Ku)_{ij} - Y_{ij}}^2 \leq
\frac{\sigma^2}{c_s \abs{S}},\quad \text{ for all }S\in \S.
\end{equation*}
Since $\E{\abs{S}^{-1}\sum_{(i,j)\in S} \eps_{ij}^2} = \sigma^2$ and
$\Var{\abs{S}^{-1}\sum_{(i,j)\in S} \eps_{ij}^2} = 2\sigma^4 \abs{S}^{-1}$
the factor $1 \slash (c_S \abs{S})$ can be considered as a
\emph{relaxation} parameter, that takes into account the uncertainty of
estimating the variance of the residual on finite scales $\abs{S}$. Put
differently, it is expected to be large on small scales and approaches $1$ as $\abs{S}$ increases.
This is illustrated in Figure \ref{smre:quantiles}: Here the quantity
$1\slash (c_S \abs{S})$ is depicted for the system $\S_0$ of all squares with
sidelengths up to $20$ (left panel) and the system $\S_2$ of all
dyadic squares (middle panel) in an $341\times 512$ image for $\alpha = 0.2$
('+') and $\alpha = 0.9$ ('o'). It becomes clear, that only on the smallest
scales there are non-negligible differences between the scale weights for 
$\S_0$ and $\S_2$. Also our numerical experiments confirm,  that reconstruction
results do not differ very much for different choices of $\alpha$.

 \begin{figure}[h!]
  \begin{center}
   \includegraphics[width = 0.325\textwidth]{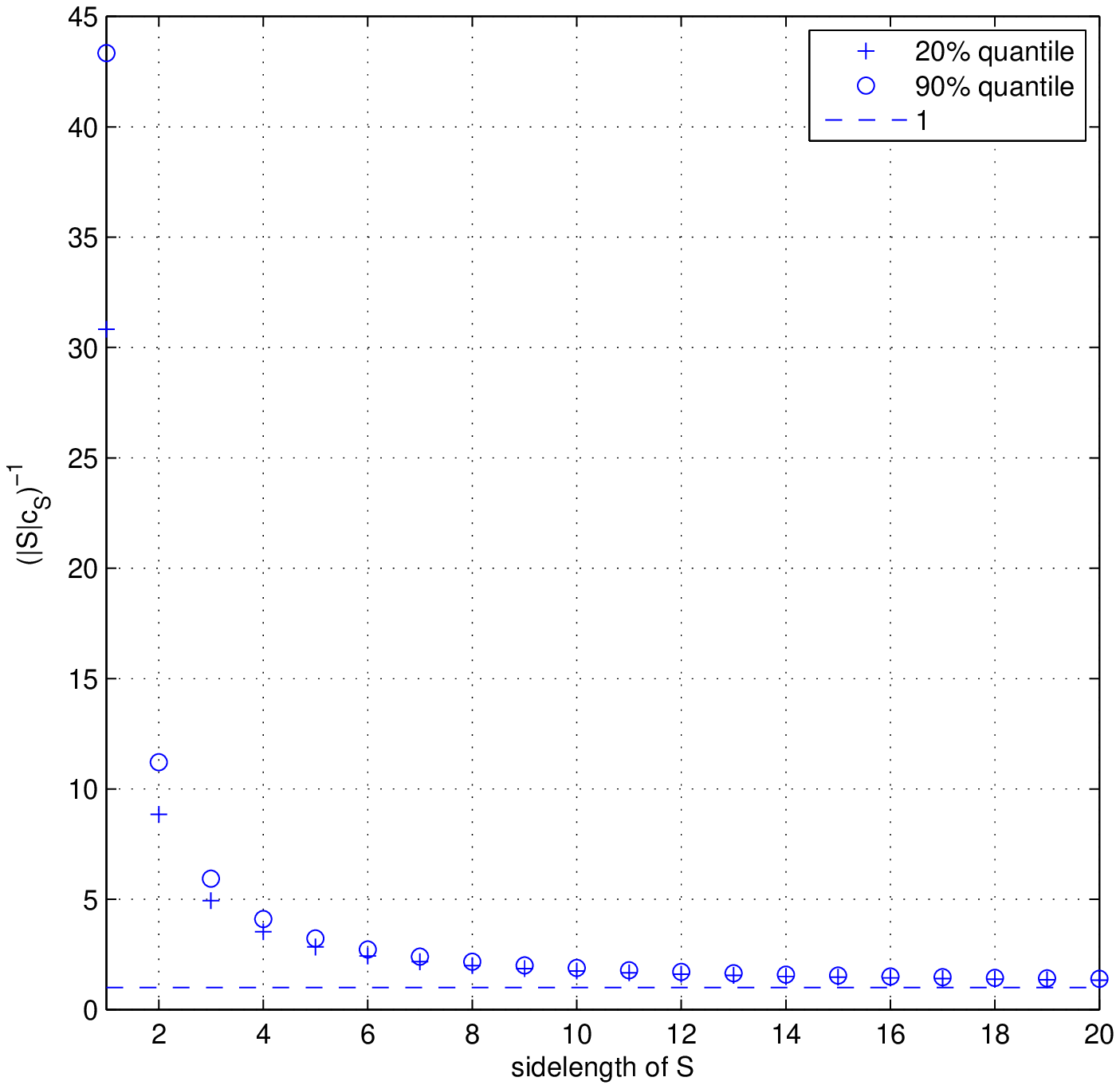} 
   \includegraphics[width = 0.325\textwidth]{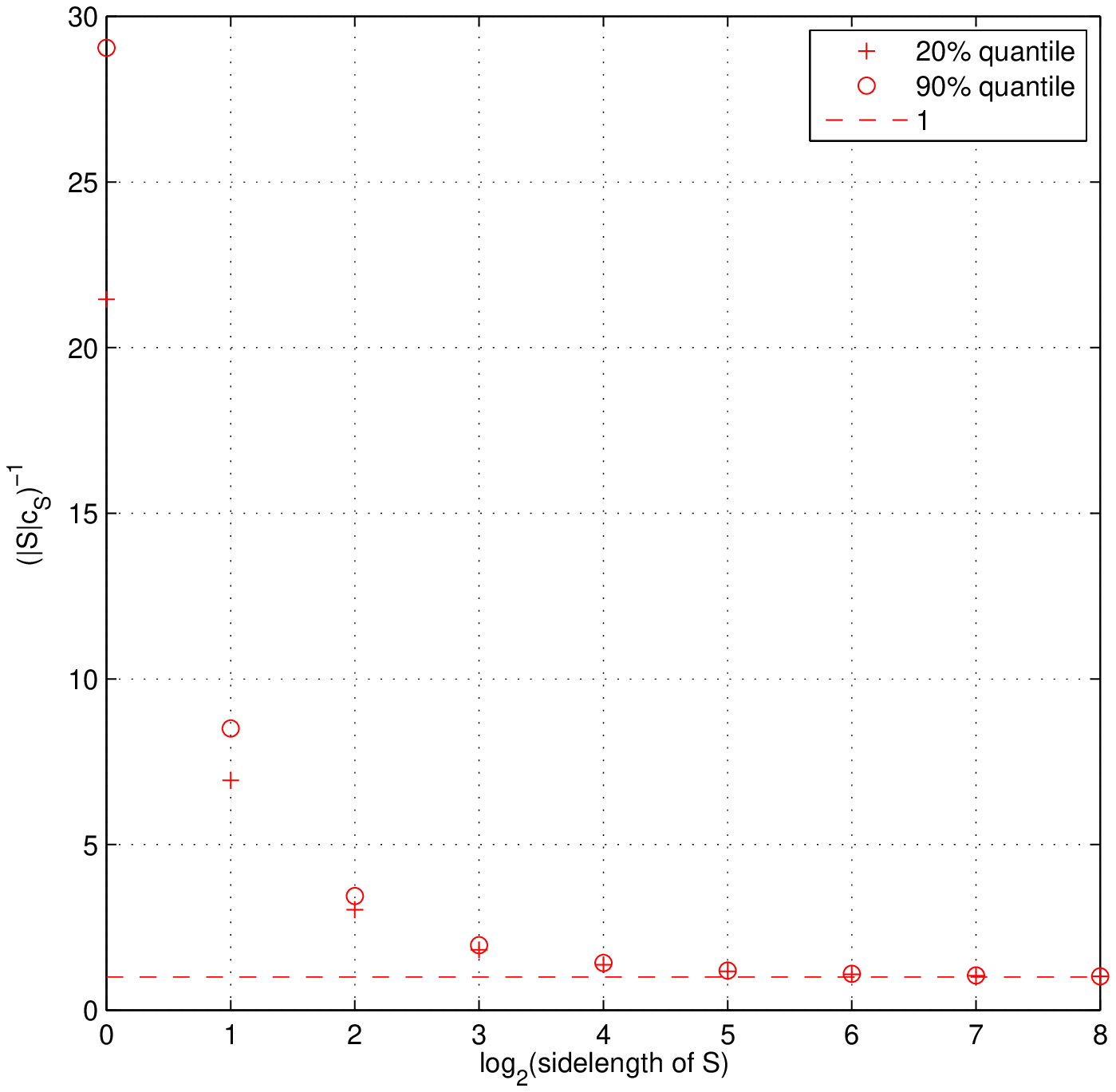} 
   \includegraphics[width = 0.325\textwidth]{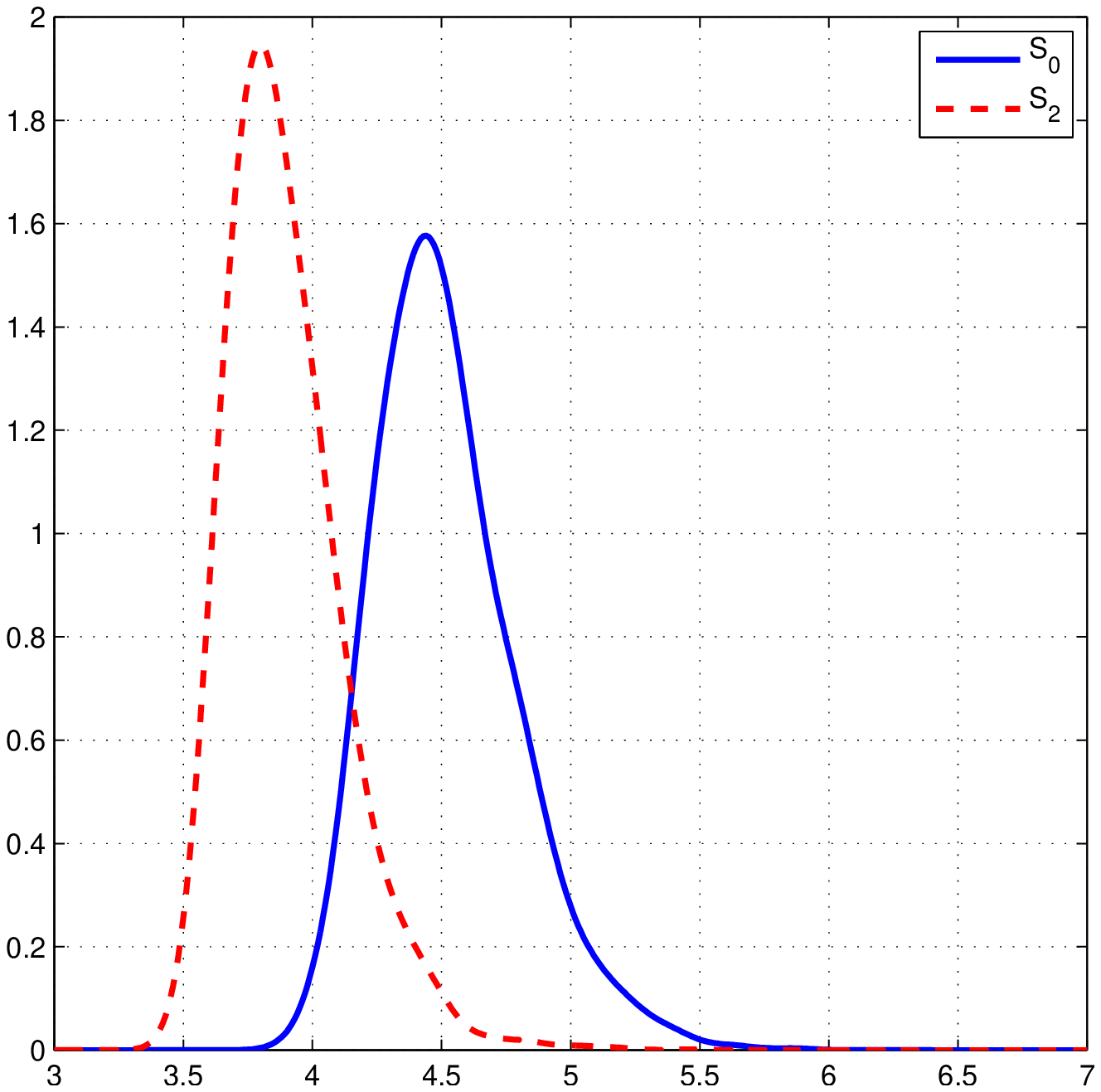}   
   \end{center}
   \caption{Left and middle: Dependence of relaxation factor 
   $(c_S\abs{S})^{-1}$ on $\alpha$ and scale $\abs{S}$. Right: Empirical density
   of the statistic \eqref{smre:trans}. All values are computed for a $341\times 512$
  image and w.r.t. to $\S_0$ (solid) and $\S_2$ (dashed).}\label{smre:quantiles}
\end{figure} 

We note that in \cite{AlmBalCasHar08} and \cite{DonHinRin11} the authors propose
relaxation parameters for the case when $\S$ consists of the translates of a
window of fixed size. In \cite{AlmBalCasHar08} the authors fix such a parameter,
$1.01$ say, and determine the corresponding window size by heurisitc reasoning.
In \cite{DonHinRin11} the authors give for a fixed window size $\abs{S}$ a
formula for a relaxation parameter that uses moments of the extreme value statistic of
independent $\chi^2$ random variables with $\abs{S}$ degrees of freedom. We note
that these methods can not be generalized to systems $\S$ that contains sets of different scales 
case in a straightforward manner. Our selection rule for the weights $c_S$ is
designed such that different scales are balanced appropriately. Hence our
approach is a \emph{multi-scale} extension of the (single-scale) methods in
\cite{AlmBalCasHar08, DonHinRin11}.

\begin{remark}\label{smre:remdependen}
It is important to note that the random variable $t_S(\eps)$ and $t_{S'}(\eps)$
are independent if and only if $S\cap S'=\emptyset$. As we do not
assume that $\S$ consists of pairwise disjoint sets, \eqref{smre:trans}
constitutes an extreme value statistic of \emph{dependent} random variables.
Except for special cases, little is known about the distribution of such
statistics (see e.g. \cite{Kab10,KabMun08} for asymptotic results).
It is an open and interesting problem to investigate the asymptotic properties
of the distribution of the statistic in \eqref{smre:trans}.

 In practice, the quantile values $q_\alpha$ in Proposition \ref{smre:conf} are
 derived from the empirical distribution of \eqref{smre:trans}. The right panel
 in Figure \ref{smre:quantiles} shows the empiricial density of the statistic
\eqref{smre:trans} for $m = 341$ and $n=512$ and the systems $\S_0$ (solid) and 
$\S_2$ (dashed) (for our simulations in Figure \ref{smre:quantiles} we used
$5000$ trials).
\end{remark}

\subsection{On the choice of $\S$}\label{smre:test}

In the previous section we addressed the question on how to select the scale
weights $\set{c_S}_{S\in\S}$ for a given system of subsets $\S$ of the grid $G$.
Altough it is not the primary aim of this paper to advocate a particular
systems $\S$, we will now comment on possible determinants for a rational
choice of $\S$.

\begin{figure}[t!]
\begin{center}
\includegraphics[width = 0.49\textwidth]{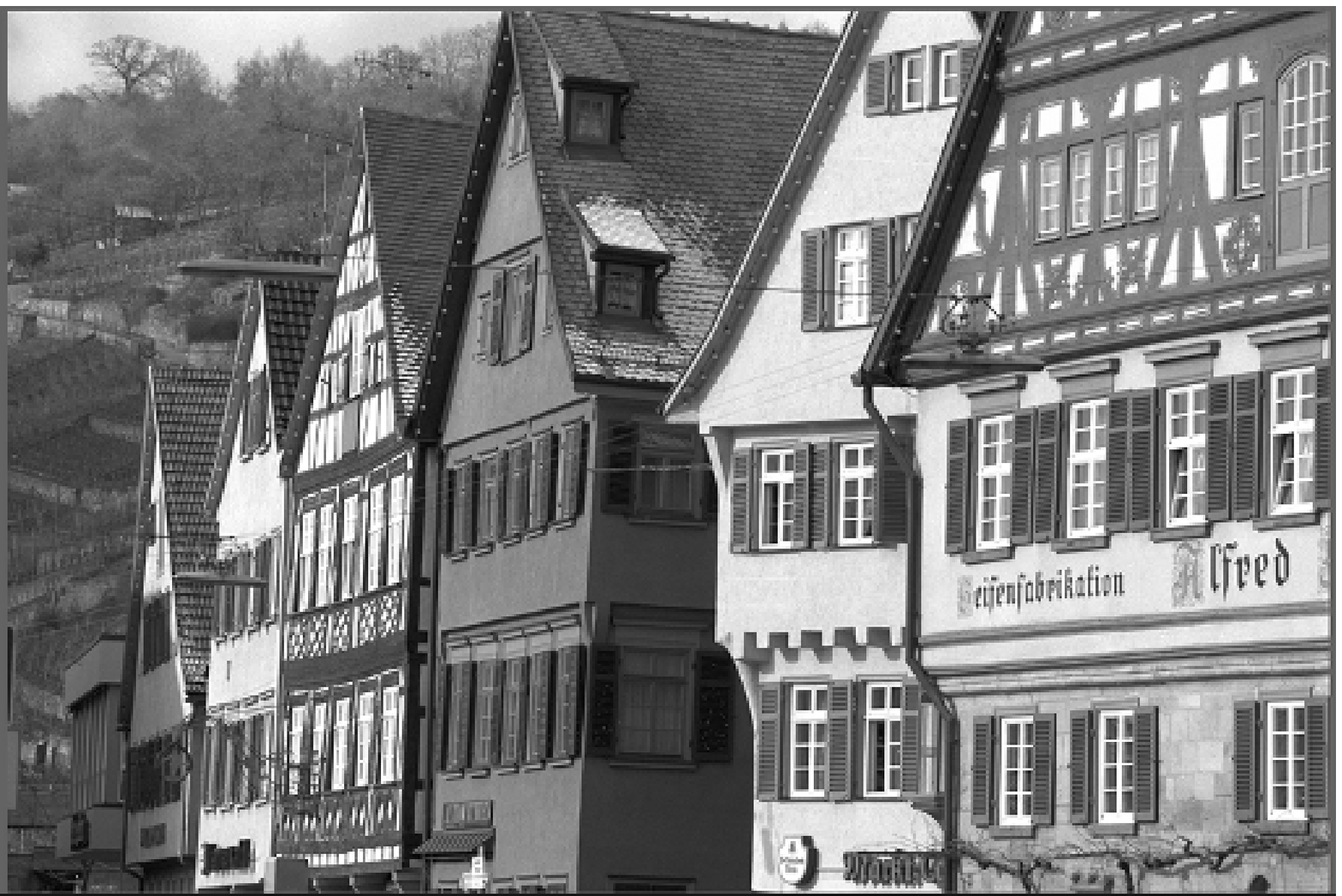}
\hspace{0.004\textwidth}
\includegraphics[width = 0.49\textwidth]{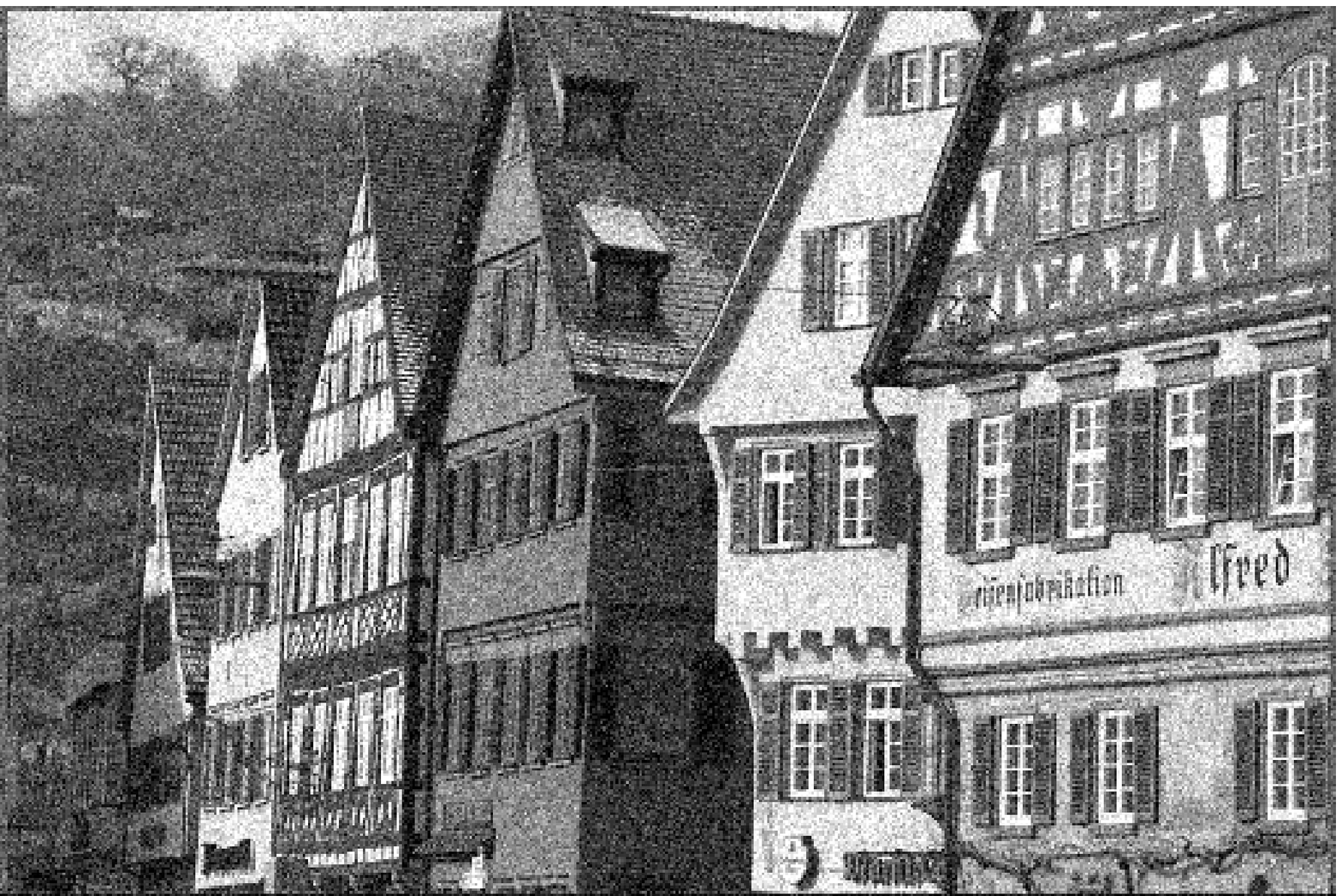}
\end{center}
\caption{Left: true signal $u^0$. Right: noiys data $Y$ with $\sigma =
0.1$}\label{smre:fig_data}
\end{figure}

\begin{figure}[h!]
\begin{center}
\includegraphics[width = 0.49\textwidth]{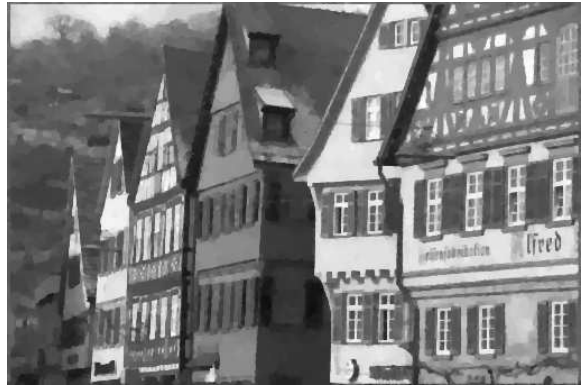}   
\caption{Solution of \eqref{intro:rof} $\hat u(\lambda)$ with $\lambda =
0.075$.}\label{smre:rofest}
\end{center}
\end{figure}

On the one hand, $\S$ should be chosen rich enough to resolve local features of
the image sufficiently well at various scales. On the other hand, it is
desirable to keep the cardinality of $\S$ small such that the optimization problem in
\eqref{intro:optprob} remains solvable within reasonable time. As a consequence
of this, \emph{a priori information} on the signal $u^0$ should be employed in
practice in order to delimit a suitable system $\S$ (e.g. the range of scales to
be used). Furthermore we note that for guaranteeing that the extreme
value statistic \eqref{smre:trans} does not degenerate (as $m,n$ and
the cardinality of $\S$ increase), $\S$ typically has to satisfy certain entropy
conditions (see e.g. \cite{FriMarMun12a}). We stress that it is a challenging
and interesting task to extend these results to \emph{random} (data-driven) systems
$\S$. It is well known that such methods can yield good results in practice (see
e.g. \cite{BerCasRouSol03}).

Here, we discuss two different choices of $\S$, namely: 
\begin{enumerate}
  \item the set of \emph{all discrete squares} in $G$: for computational
  reasons usually subsystems are considered. We found the subset consisting of
  all squares with sidelengths up to $20$ to be efficient. We denote
  this subset henceforth by $\S_0$.
  \item the set $\S_2$ of \emph{dyadic partitions} of $G$. For a quadratic grid
  $G$ with $m=n=2^r$ the system $\S_2$ is obtained by recursively splitting the grid into four equal
squares until the lowest level of single pixels is reached. To be more precise, 
\begin{equation*}
  \S_2 = \bigcup_{l=1}^{r}
  \set{\set{k2^{l},\ldots,(k+1)2^{l}}^2~:~ k =
  0,\ldots,2^{r-1}}.
\end{equation*}
For general grids $G$ the left and lower most squares are clipped accordingly.
\end{enumerate}

\begin{figure}[h!]
\begin{center}
\includegraphics[width = 0.49\textwidth]{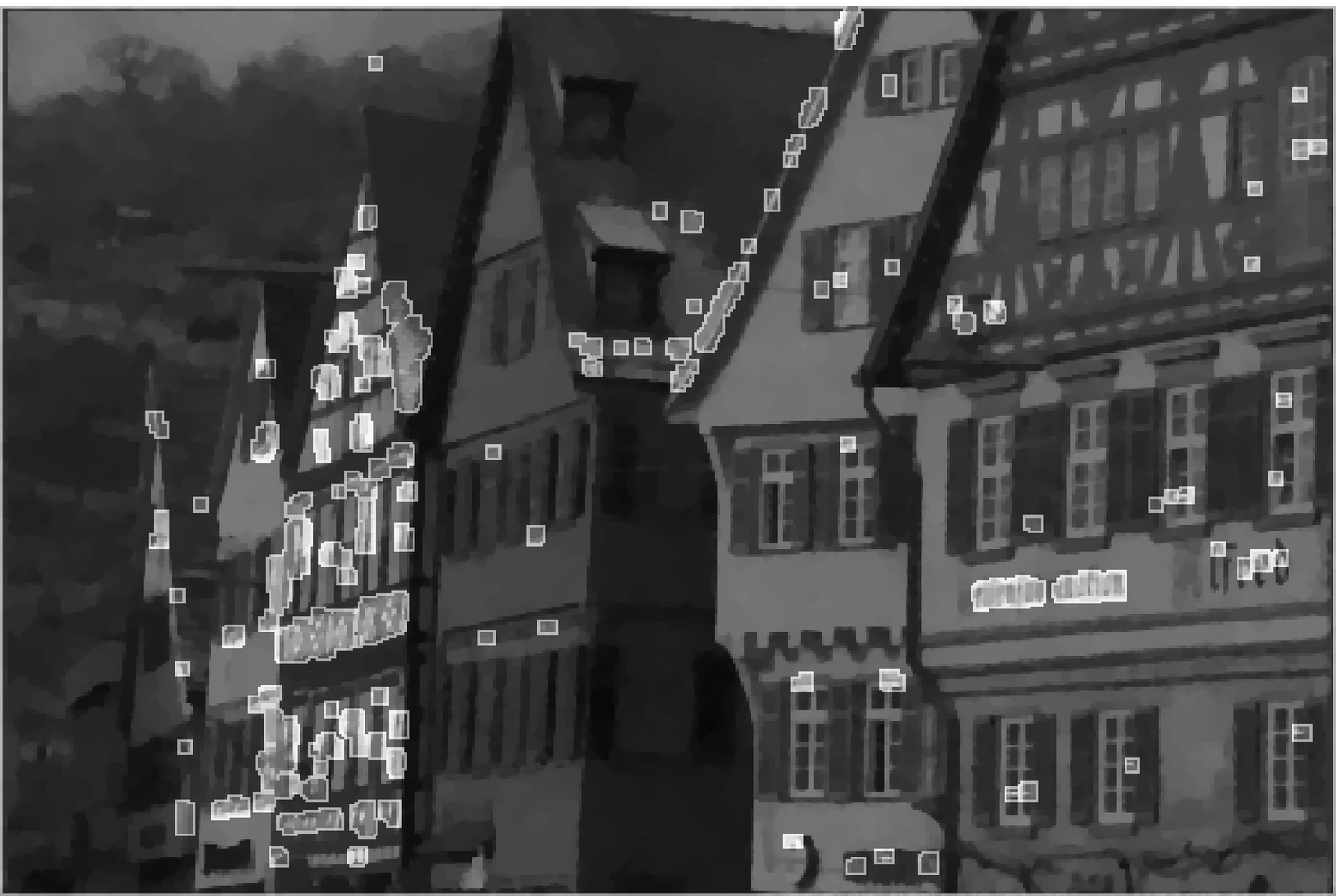}
\hspace{0.004\textwidth}
\includegraphics[width = 0.49\textwidth]{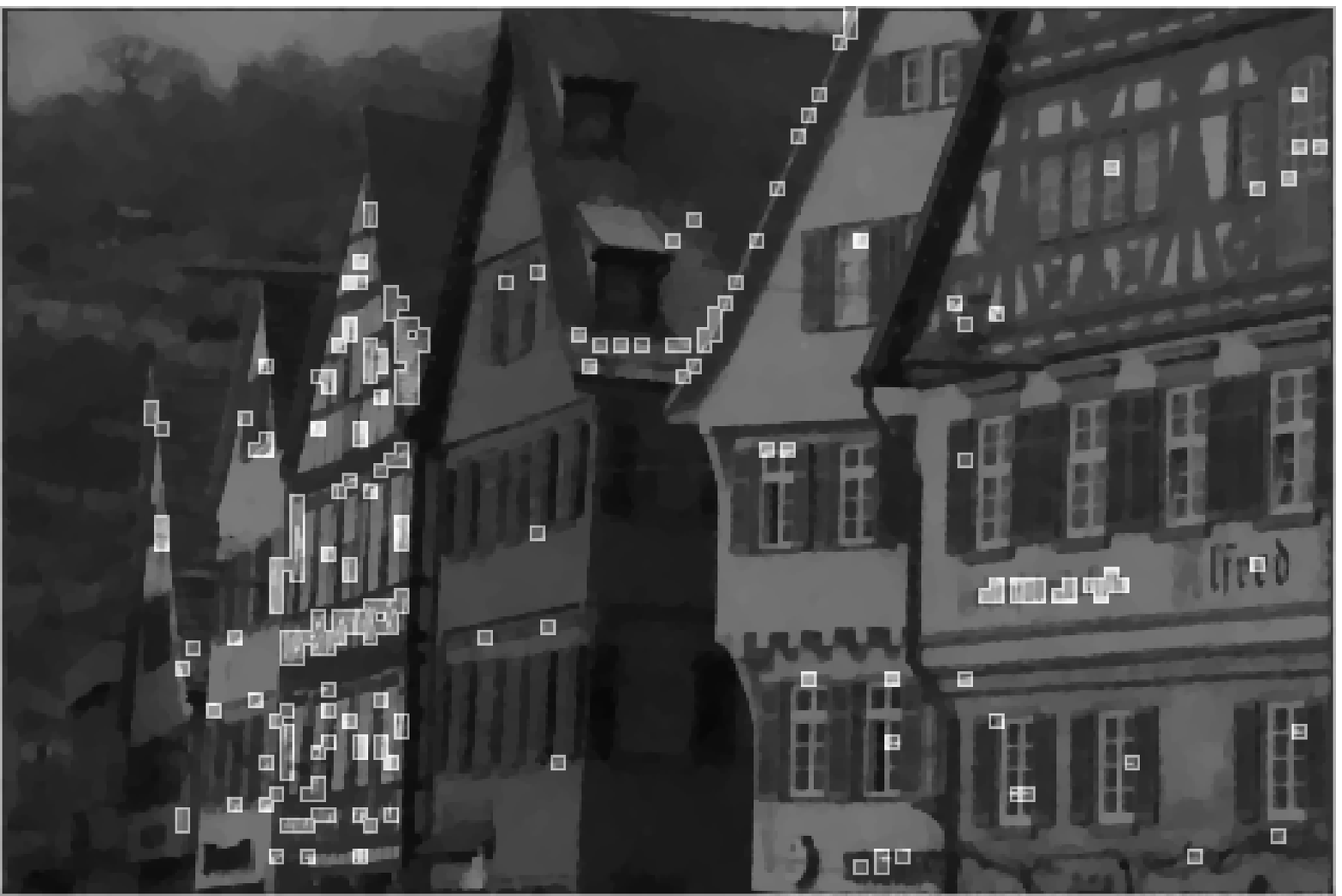}

\includegraphics[width = 0.49\textwidth]{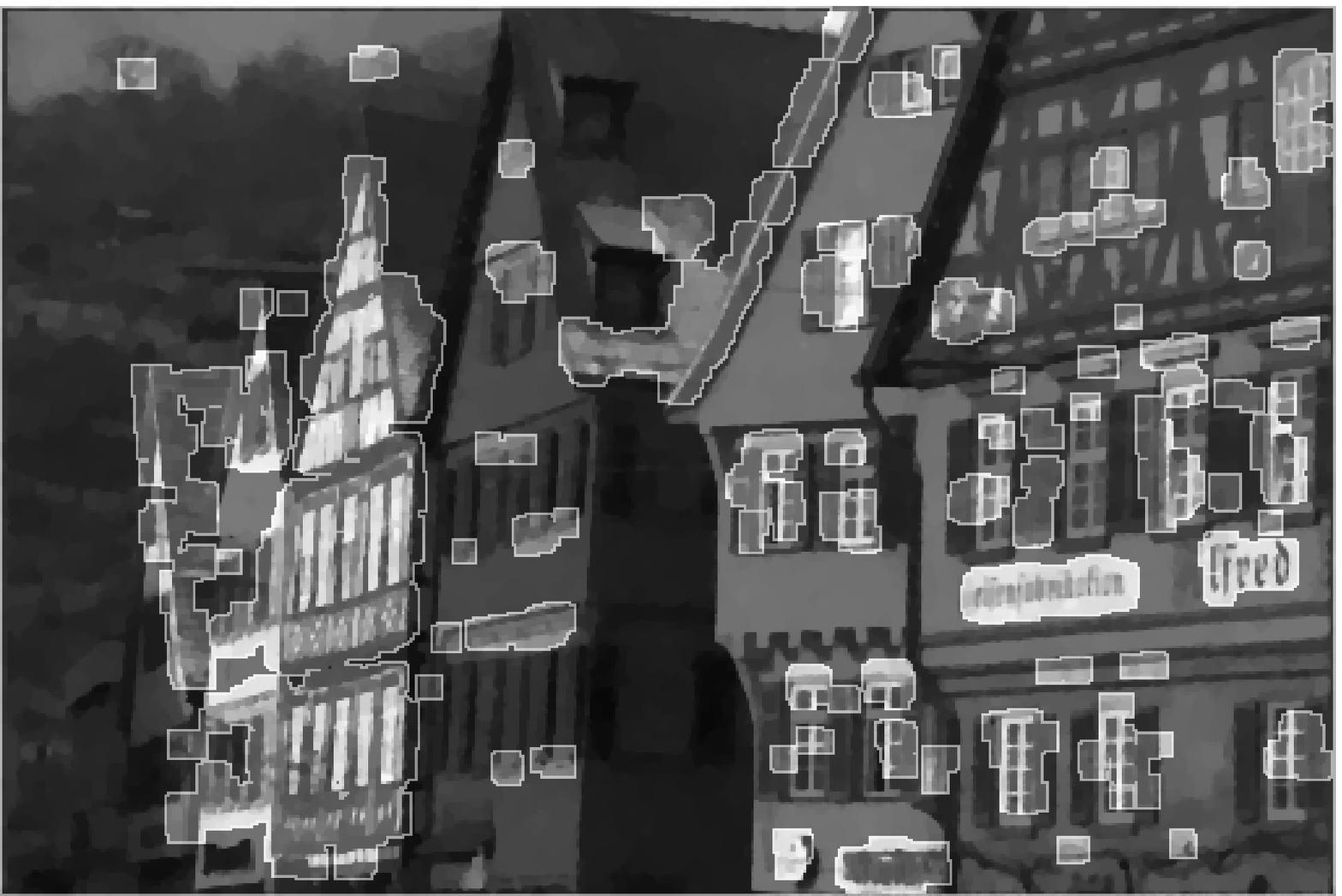}
\hspace{0.004\textwidth}
\includegraphics[width = 0.49\textwidth]{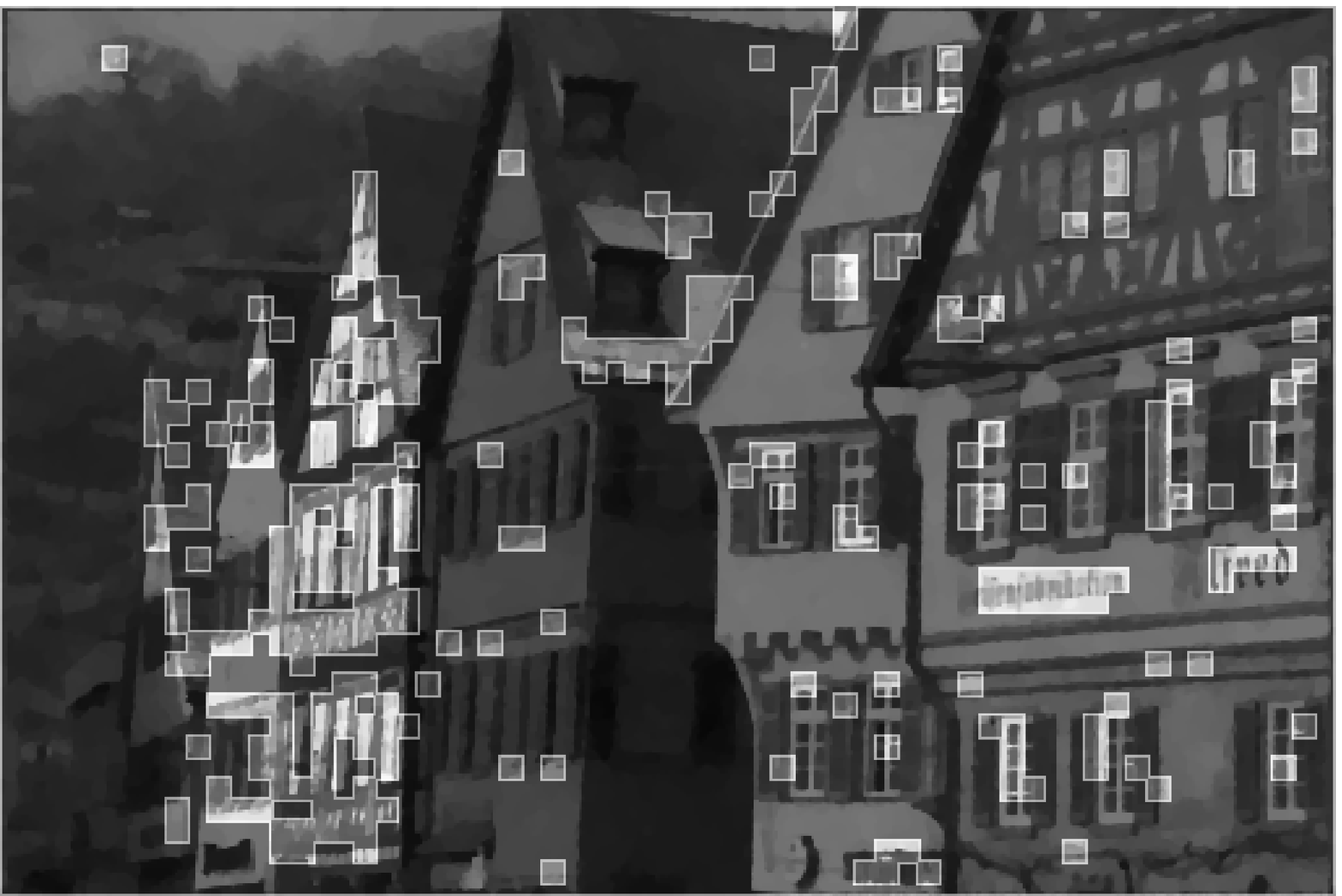}

\includegraphics[width = 0.49\textwidth]{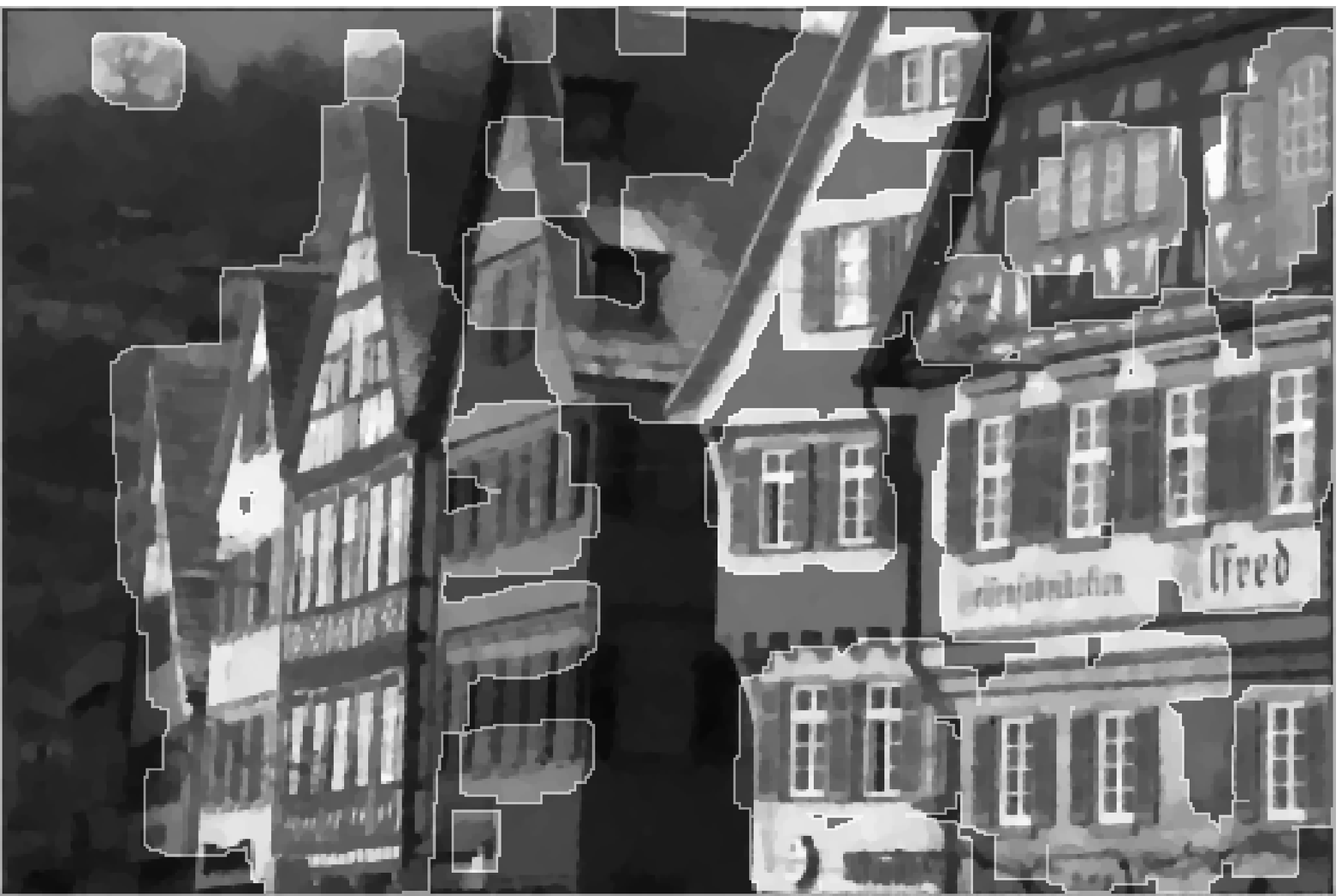}  
\hspace{0.004\textwidth}
\includegraphics[width = 0.49\textwidth]{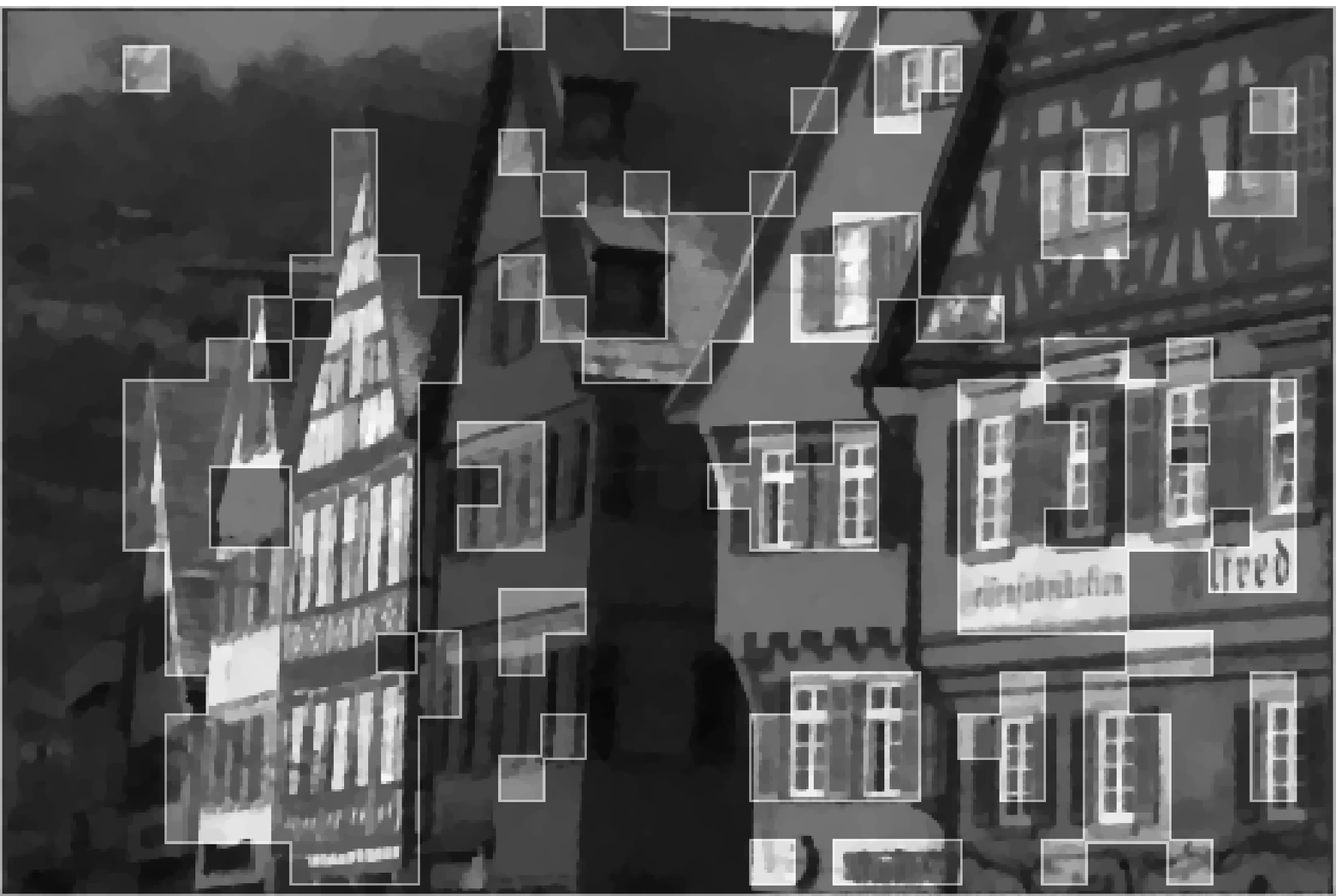}  
\end{center} 
\caption{Oversmoothed regions identified on the scales $\abs{S} = 4,8$ and $16$
(from left to right) for the system $\S_0$ (left column) and $\S_2$ (right
column).}\label{smre:viol}
\end{figure}

Obviously, $\S_0$ contains much more elements than $\S_2$ and is hence
likely to achieve a higher resolution. We indicate this by a numerical
simulation. Figure \ref{smre:fig_data} depicts the true signal $u^0$ (left, at a
resolution of $m\times n = 341\times 512$ pixels and gray values scaled in $[0,
1]$) and data $Y$ according to \eqref{intro:data} with $K=\id$ and $\sigma = 0.1$.

For illustrative purpose a global estimator $\hat u(\lambda)$ in
\eqref{intro:rof} for $u^0$ is computed that exhibits both over- and
undersmoothed regions (here, we set $\lambda = 0.075$). This estimator is
depicted in Figure \ref{smre:rofest}. The oversmoothed parts in $\hat
u(\lambda)$ can be identified via the MR-statistic $T$ in \eqref{intro:mre} by
marking those sets $S$ in $\S$ for which
\begin{equation*}
  \frac{c_S}{\sigma^2} \sum_{(i,j)\in S} \abs{Y_{ij} - (K\hat
  u(\lambda))_{ij}}^2 > 1.
\end{equation*}
The union of these sets for the systems $S_0$ (left column) and $S_2$ (right
column) are highlighted in Figure \ref{smre:viol} where we examine the
scales $\abs{S} = 4,8,16$. The parameters $c_S$ are chosen
as in Section \ref{smre:choice} with $\alpha=0.9$.

\section{Algorithmic Methodology}\label{impl}

In what follows, we present an algorithmic approach to the numerical computation
of SMRE in practice that extends the methodology in \cite{FriMarMun12}
where we proposed an alternating direction method of multipliers (ADMM).
Here, we use an \emph{inexact} version of the ADMM which decomposes the original
problem into a series of subproblems which are substantially easier to solve. In particular, an
inversion of the operator $K$ is no longer necessary. For this reason the
inexact ADMM has attracted much attention recently (see. e.g.
\cite{ChaPoc11,EssZhaCha10,ZhaBurBreOsh10}).

\subsection{Inexact ADMM}

In order to compute the desired saddle point of the augmented Lagrangian
function $L_\lambda$ in \eqref{smre:lagr}, we use the inexact ADMM that can be
considered as a modified version of the Uzawa algorithm (see e.g.\ \cite[Chap.
III]{ForGlo83}). Starting with some initial $p_0\in\R^{m\times n}$, the
original Uzawa algorithm consists in iteratively computing
\begin{enumerate}
  \item   $(u_k,v_k) \in \argmin_{u\in\L{2}, v\in\R^{m\times n}}
  L_\lambda(u,v;p_{k-1})$
  \item $p_k =  p_{k-1} +  \lambda(K u_k - v_k)$.
\end{enumerate}
Item $(1)$ amounts to an implicit minimization step w.r.t. to the variabels
$u$ and $v$ whereas $(2)$ constitutes an explicit maximization step for the
Lagrange multiplier $p$. The algorithm is usually stopped once the constraint in
\eqref{smre:decomp}  is fulfilled up to a certain tolerance. 

Applying this algorithm in a straightforward manner is known to be rather
impractical (mostly due to the difficult minimization problem in the first step)
and hence various modifications have been proposed in the optimization literature.
Firstly, we perform successive minimization w.r.t.\ $u$ and $v$ instead of
minimizing simultaneously, i.e. given $(u_{k-1}, v_{k-1}, p_{k-1})$ we compute
\begin{enumerate}
  \item   $u_k \in \argmin_{u\in \L{2}}
  L_\lambda(u,v_{k-1};p_{k-1})$
  \item $v_k \in \argmin_{v\in\R^{m\times n}}
  L_\lambda(u_k,v;p_{k-1})$
  \item $p_k =  p_{k-1} + \lambda(K u_k - v_k)$.
\end{enumerate}
This is the well-known \emph{alternating direction method of multipliers
(ADMM)} as proposed in \cite[Chap. III]{ForGlo83}). There, convergence of the
algorithm has been studied for the case when $J$ satisfies some regularity
assumptions. In \cite{FriMarMun12} we extended this result for general
functionals $J$ (as for example the total variation semi-norm \eqref{intro:tv}).
The resulting two minimization problems usually can be tackled much more
efficiently than the original problem.

Still, the first subproblem above requires the inversion of the (possibly
ill-posed) operator $K$. Thus, a second modification adds in the $k$-th loop of
the algorithm the following additional term to $L_\lambda(u,v_{k-1}; p_{k-1})$:
\begin{equation}\label{impl:modlagr}
  \frac{1}{2\lambda}\left(\zeta \norm{u - u_{k-1}}^2_{\Ls{2}} -
  \norm{K(u-u_{k-1})}^2\right).
\end{equation}
Here, $\zeta$ is chosen such that $\zeta>\norm{K}^2$. After some
rearrangements of the terms in $L_\lambda$ and \eqref{impl:modlagr} it can
easily be seen that $Ku$ cancels out and thus the undesirable inversion of $K$
is is replaced by a single evaluation of $K$ at the previous iterate $u_{k-1}$. 
However, by adding \eqref{impl:modlagr} the distance to the previous iterate $u_{k-1}$ is additionally penalized and
$L_\lambda(u,v_k;p_{k-1})$ is minimized only \emph{inexactly}. 

After the aforementioned rearrangements and by keeping in mind that  $H$ is the
indicator function of the convex set $\mathcal{C}$ in \eqref{smre:feasset}, the
inexact ADMM can be summarized as follows:

\begin{algorithm}[h!]\caption{Inexact ADMM}\label{impl:uzawa}
\begin{algorithmic}
\REQUIRE $Y\in \R^{m\times n}$ (data), $\lambda > 0$ (step size). 
\STATE $u_0\leftarrow \vec 0_{\Ls{2}}$ and $v_0 = p_0\leftarrow 0$.
\FOR{$k=1,2,\ldots$}
\STATE $k\leftarrow k+1$.
\STATE Minimize $L_\lambda(\cdot,v_{k-1};p_{k-1})+\frac{1}{2\lambda}\left(\zeta
\norm{\cdot - u_{k-1}}^2_{\Ls{2}} - \norm{K(\cdot-u_{k-1})}^2\right)$:
\begin{equation}\label{impl:primal}
  u_k \leftarrow \argmin_{u\in\L{2}} \frac{1}{2}\norm{ u - \left(u_{k-1}
  -\zeta^{-1}K^*(Ku_{k-1} - v_{k-1} + \lambda p_{k-1} \right) }_{\Ls{2}}^2 +
  \frac{\lambda}{\zeta} J(u).
\end{equation}
\STATE Minimize $L_\lambda(u_k,\cdot;p_{k-1})$:
\begin{equation}\label{impl:slack}
  v_k \leftarrow \proj_{\mathcal{C}}\left(Ku_k+\lambda p_{k-1}\right).
\end{equation}
\STATE Update dual variable:
\begin{equation}\label{impl:dual}
  p_k  \leftarrow  p_{k-1} +  \lambda^{-1}(K u_k - v_k).
\end{equation}
\ENDFOR
\end{algorithmic}
\end{algorithm}

In practice, Algorithm \ref{impl:uzawa} is very stable and straightforward to
implement, provided that efficient methods to solve \eqref{impl:primal} and
\eqref{impl:slack} are at hand (cf. Section \ref{impl:subprobs} below).
Moreover, it is equivalent to a general first-order primal-dual  algorithm as
studied in \cite{ChaPoc11} where the following convergence result was established. 

\begin{theorem}{\cite[Thm. 1]{ChaPoc11}}
Assume that $(\hat u, \hat v, \hat p)$ is a saddle point of $L_\lambda$.
Moreover, let $\set{u_k, v_k, p_k)}_{k\in\N}$ be generated by Algorithm
\ref{impl:uzawa} with $\zeta > \norm{K}^2$ and define the averaged sequences 
\begin{equation*}
\bar u_k = \frac{1}{k}\sum_{l=1}^k u_l\quad\text{ and }\quad \bar p_k =
\frac{1}{k}\sum_{l=1}^k p_l. 
\end{equation*}
Then, each weak cluster point of $\set{\bar u_k}_{k\in \N}$ is a solution of
\eqref{intro:optprob} and there exists a constant $C>0$ such that
\begin{equation}\label{impl:conveqn}
D_J^{-K^*\hat p}(\bar u_k, \hat u) + D_{H^*}^{\hat v}(\bar p_k, \hat p) \leq
C\slash k.
\end{equation}
\end{theorem}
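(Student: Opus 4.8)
The plan is to recognize Algorithm~\ref{impl:uzawa} as an instance of the primal--dual scheme of \cite{ChaPoc11} applied to the saddle-point reformulation of \eqref{intro:optprob}, so that the statement follows from \cite[Thm.~1]{ChaPoc11}; I would, however, reproduce the underlying energy estimate rather than treat the citation as a black box. First I would write \eqref{intro:optprob} in the min--max form $\min_u\max_p\,J(u)+\langle Ku,p\rangle-H^*(p)$, whose saddle points are exactly the triples $(\hat u,\hat v,\hat p)$ of Proposition~\ref{smre:exist} (with $\hat v=K\hat u$). The point is then to check the algebraic dictionary: \eqref{impl:primal} is a proximal step for $J$ in the $\zeta$-weighted metric, \eqref{impl:slack} is the proximal step $\proj_{\mathcal C}=\mathrm{prox}_{H}$, and \eqref{impl:dual} is the dual ascent; the hypothesis $\zeta>\|K\|^2$ is exactly what makes the linearized majorant in \eqref{impl:modlagr} nonnegative (since $\zeta\|w\|^2\ge\|Kw\|^2$), which is what renders the scheme an admissible instance of the extrapolated primal--dual iteration.

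For the quantitative estimate the engine is the elementary prox inequality: if $\hat x=\argmin_x \tfrac1{2\tau}\|x-\tilde x\|^2+\Phi(x)$, then for every $x$ one has $\Phi(x)+\tfrac1{2\tau}\|x-\tilde x\|^2\ge\Phi(\hat x)+\tfrac1{2\tau}\|\hat x-\tilde x\|^2+\tfrac1{2\tau}\|x-\hat x\|^2$, encoding the $1/\tau$-strong convexity of the subproblem. I would apply this to the $u$-update (with $\Phi$ assembled from $J$ and the linearized coupling, where \eqref{impl:modlagr} cancels $Ku$ and keeps the $\zeta$-majorant valid) and to the $v$-/$p$-update, evaluating each at the saddle point $(\hat u,\hat p)$. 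Summing the two and telescoping the bilinear terms $\langle K(u_k-\hat u),p_k-\hat p\rangle$ across consecutive iterates yields a bound of the form $\sum_{l=1}^k[\,\text{gap at }l\,]+\tfrac1{2\tau}\|u_k-\hat u\|^2+\cdots\le\mathrm{const}(u_0,p_0,\hat u,\hat p)$. The telescoping of this bilinear coupling, which is what the extrapolation structure is there to arrange, is the crux of the whole argument.

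Dividing by $k$ and using convexity of $J$, of $H^*$, and of the linear coupling, Jensen's inequality replaces the average of the per-iteration gaps by the single gap at the Cesàro means $\bar u_k,\bar p_k$. The hard part will then be the bookkeeping that identifies this aggregate gap with $D_J^{-K^*\hat p}(\bar u_k,\hat u)+D_{H^*}^{\hat v}(\bar p_k,\hat p)$: one uses the saddle relations $-K^*\hat p\in\partial J(\hat u)$ and $\hat v\in\partial H^*(\hat p)$ (the extremality conditions in the remark after Proposition~\ref{smre:exist}), together with $K\hat u-\hat v=0$ which makes the residual bilinear cross-term vanish at the saddle point, so that exactly the two advertised Bregman distances remain and are bounded by $C/k$. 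Finally, for the cluster-point claim I would note that $C/k\to0$ forces both Bregman distances to zero; any weakly convergent subsequence $\bar u_{k_j}\rightharpoonup u^\ast$ remains feasible, since closedness and convexity of $\mathcal C$ pass to weak limits through $K$, and by weak lower semicontinuity of $J$ together with the vanishing gap one gets $J(u^\ast)\le J(\hat u)$, whence $u^\ast$ solves \eqref{intro:optprob}.
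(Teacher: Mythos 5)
Your overall strategy coincides with the paper's: the paper gives no proof of its own but notes that Algorithm \ref{impl:uzawa} is an instance of the first-order primal--dual scheme of \cite{ChaPoc11} and imports Theorem~1 from there, and your dictionary (the $u$-step as a prox of $J$ with parameter $\lambda/\zeta$ after the cancellation produced by \eqref{impl:modlagr}, the $v$/$p$-pair acting as $\mathrm{prox}_{\lambda^{-1}H^*}$ via Moreau's identity, and $\zeta>\norm{K}^2$ as the step-size condition $\tau\sigma\norm{K}^2<1$), together with the prox inequality, the telescoping of the bilinear coupling and Jensen's inequality, is exactly the engine behind that theorem. The identification of the averaged gap at the saddle point with the two Bregman distances, using $-K^*\hat p\in\partial J(\hat u)$, $\hat v\in\partial H^*(\hat p)$ and $K\hat u=\hat v$, is also correct.

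The cluster-point step, however, has a genuine gap as written. First, the iterates $\bar u_k$ are \emph{not} feasible: only the slack variables $v_k$ lie in $\mathcal{C}$, so ``closedness and convexity of $\mathcal{C}$ pass to weak limits through $K$'' does not apply to $K\bar u_{k_j}$ directly. You need the extra observation that, by \eqref{impl:dual}, $K\bar u_k-\bar v_k=\lambda(p_k-p_0)/k$, which tends to $0$ once the energy estimate has delivered boundedness of $\set{p_k}$; since $\bar v_k\in\mathcal{C}$ by convexity and $K$ maps weak convergence to strong convergence (its range is finite dimensional), this yields $Ku^*\in\mathcal{C}$. Second, $J(u^*)\le J(\hat u)$ does \emph{not} follow from weak lower semicontinuity plus the vanishing of the Bregman distances in \eqref{impl:conveqn}: passing to the limit in $D_J^{-K^*\hat p}(\bar u_{k_j},\hat u)\to 0$ only gives $J(u^*)\le J(\hat u)+\inner{\hat p}{\hat v-Ku^*}$, and the correction term is \emph{nonnegative}, because $\hat p\in\partial H(\hat v)$ means $\hat v$ maximizes $\inner{\hat p}{\cdot}$ over $\mathcal{C}\ni Ku^*$ --- so the inequality points the wrong way. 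The standard repair is to run the same energy estimate for the Lagrangian of the equality-constrained form \eqref{smre:decomp}, which gives $J(\bar u_k)-J(\hat u)+\inner{\hat p}{K\bar u_k-\bar v_k}\le C/k$; combined with the residual decay $K\bar u_k-\bar v_k\to 0$ this yields $\limsup J(\bar u_{k_j})\le J(\hat u)$, and weak lower semicontinuity then closes the argument. (This is also why \cite[Thm.~1]{ChaPoc11} controls the gap over arbitrary bounded sets rather than only at one fixed saddle point.)
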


The above result is rather general and in particular situations the assertions
may be quite weak. In particular if $J$ and $H^*$ have \emph{linear growth}, as
it is for instance the case for $J$ as in \eqref{intro:tv} and $H$ as in
\eqref{smre:feasset}, the Bregman distances appearing in
\eqref{impl:conveqn} may vanish although $(\bar u_k, \bar p_k) \not= (\hat u,
\hat p)$. If at least one of the functionals $J$ or $H^*$ is
uniformly convex, it is possible to come up with accelerated versions of
Algorithm \ref{impl:uzawa} that allow for stronger convergence results
(see \cite{ChaPoc11}). For the sake of simplicity we restrict our
consideration to the basic algorithm.

\subsection{Subproblems}\label{impl:subprobs}  

Closer inspection of Algorithm \ref{impl:uzawa} reveals that the original problem
- computing a saddle point of $L_\lambda$ - has been replaced by an iterative
series of subproblems \eqref{impl:primal} and \eqref{impl:slack}. We will now examine
these two subproblems and propose methods that are suited to solve them. Here we
proceed as in \cite{FriMarMun12}. 

We focus on \eqref{impl:slack} first. Note that the problem given there amounts
to computing the orthogonal projection of $v_k := K u_k + \lambda p_{k-1}$
onto the feasible region $\mathcal{C}$ as defined in \eqref{smre:feasset}. Due to the
supremum taken in the definition \eqref{intro:mre} of the statistic $T$, we can
decompose $\mathcal{C}$ into $\mathcal{C} = \bigcap_{S \in \S} \mathcal{C}_S$
where            

\begin{equation}\label{impl:indsets}
\mathcal{C}_S = \set{v\in \R^{m\times n}~:~ \frac{c_S}{\sigma^2}
\sum_{(i,j)\in S} \abs{v_{ij}-Y_{ij}}^2 \leq 1 },
\end{equation}
i.e.\ each $\mathcal{C}_S$ refers to the feasible region that would result if $\S$
contained $S$ only. Note that all $\mathcal{C}_S$ are closed and
convex sets (in fact, they are circular cylinders in $\R^{m\times n}$; see
the left panel in Figure \ref{fig:feas}). If we fix a $\mathcal{C}_S$ and
consider some $v\notin \mathcal{C}_S$, the projection of $v$ onto
$\mathcal{C}_S$ can be stated explicitly as

\begin{equation}
(P_{\mathcal{C}_S}(v))_{i,j} = \begin{cases}
                   v_{i,j} \quad &\text{if} \quad (i,j) \notin S\\
                   \left. Y_{ij} + (v_{ij} - Y_{ij})\sigma \middle/ \sqrt{c_S
                   \sum_{(k,l)\in S}(v_{kl} - Y_{kl})^2}\right.  \quad
                   &\text{if} \quad (i,j) \in S.\\
                   \end{cases}
\end{equation}

This insight leads us to the conclusion that any method which computes the
projection onto the intersection of closed and convex sets by merely using the
projections onto the individual sets only would be feasible to solve
\eqref{impl:slack}. Dykstra's Algorithm \cite{BoyDyk86}
works exactly in this way and is hence our method of choice to solve
\eqref{impl:slack}. For a detailed statement of the algorithm and how the total
number of sets that enter it may be decreased to speed up runtimes, see
\cite[Sec. 2.3]{FriMarMun12}. We note that despite these considerations,
the predominant part of the computation time of Algorithm \ref{impl:uzawa} is
spent for the projection step \eqref{impl:slack}. So far we did not take into
account parallelization of the projection algorithm. To some extent this is
possible in a straightforward manner, since the projections onto disjoint sets
in $\S$ can be carried out simultaneously (on GPUs for instance). But also
inherently parallel projection algorithms (including parallel versions of
Dykstra's method) received much attention recently and potentially would yield
a speed up of Algorithm \ref{impl:uzawa}. See for instance
\cite{ButCenRei01} for an overview.

We finally turn our attention to \eqref{impl:primal}. In contrast to the
standard version of the ADMM as proposed in
\cite{FriMarMun12}, the second subproblem in Algorithm \ref{impl:uzawa}
does not involve the inversion of the operator $K$. For this reason,
\eqref{impl:primal} here simply amounts to solving an unconstrained denoising
problem with a least-squares data-fit. Numerous methods for a wide range of
different choices of $J$ are available in order to cope with this problem. If
$J$ is chosen as the total variation seminorm, for example, the methods
introduced in \cite{Cha04,DobVog97,HinKun04} will be suited
(we will use the one in \cite{DobVog97}).

\section{Application in Fluorescence Microscopy}\label{poisson}

For image acquisition techniques that are based on single photon counts of a
light emitting sample, such as fluorescence microscopy, the Gaussian error
assumption \eqref{intro:data} is not realistic. Here, the non-additive model
\eqref{poisson:data} is to be preferred. Still, the estimation paradigm above
can be adapted to this scenario by means of \emph{variance stabilizing
transformations}. To this end we first recall \cite[Lem. 1]{BroCaiZhaZhaZhou10}
\begin{lemma}[Anscombe's Transform]\label{poisson:anscombe}
Let $Y\sim\textnormal{Pois}(\beta)$ with $\beta > 0$. Then, for all $c\geq 0$
\begin{align*}
\E{2\sqrt{Y+c}} & = 2\sqrt{\beta} + \frac{4c-1}{4\sqrt{\beta}} +
\bigo(\beta^{-3/2}) \\
\Var{2\sqrt{Y+c}} & = 1 + \frac{3 - 8c}{8\beta} + \bigo(\beta^{-2}).
\end{align*}
\end{lemma}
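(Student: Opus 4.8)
The plan is to apply the delta method: expand the smooth transform $h(z) = 2\sqrt{z}$ in a Taylor series about the mean $m := \beta + c = \E{Y+c}$ of the shifted variable $Z := Y + c$, take expectations term by term using the central moments of $Z$ (which equal those of $Y$, since shifting by the constant $c$ leaves central moments unchanged), and finally re-expand the resulting powers of $m$ in powers of $\beta^{-1}$. The only distributional input is the list of low-order Poisson central moments: writing $\mu_k := \E{(Y-\beta)^k}$, one has $\mu_2 = \beta$, $\mu_3 = \beta$ and $\mu_4 = 3\beta^2+\beta$, together with the crude growth bound $\mu_k = \bigo(\beta^{\lfloor k/2\rfloor})$ needed to discard higher-order contributions. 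I record the derivatives $h'(z)=z^{-1/2}$, $h''(z)=-\tfrac12 z^{-3/2}$, $h'''(z)=\tfrac34 z^{-5/2}$, abbreviate $a_k := h^{(k)}(m)/k!$, and write $X := Z-m$, so that $h(Z) = \sum_{k\geq 0} a_k X^k$ formally.

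For the mean, taking expectations annihilates the linear term, and the power count $a_k \mu_k = \bigo(\beta^{\lfloor k/2\rfloor - (2k-1)/2})$ shows that only $k=0$ and $k=2$ survive above order $\beta^{-3/2}$, leaving $\E{h(Z)} = 2\sqrt{m} - \frac{\beta}{4 m^{3/2}} + \bigo(\beta^{-3/2})$. Substituting $m=\beta+c$ and expanding $2\sqrt{m} = 2\sqrt{\beta} + \frac{c}{\sqrt{\beta}} + \bigo(\beta^{-3/2})$ together with $m^{-3/2} = \beta^{-3/2}(1+\bigo(\beta^{-1}))$ collapses this to $2\sqrt{\beta} + \frac{c}{\sqrt{\beta}} - \frac{1}{4\sqrt{\beta}}$, which is exactly $2\sqrt{\beta} + \frac{4c-1}{4\sqrt{\beta}}$ up to $\bigo(\beta^{-3/2})$.

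For the variance I center first, so that $h(Z) - \E{h(Z)} = \sum_{i\geq 1} a_i\,(X^i - \mu_i)$, then square and take expectations. A clean power count isolates the surviving terms: the contribution indexed by $(i,j)$ is $a_i a_j\, \E{(X^i-\mu_i)(X^j-\mu_j)}$, whose size is $m^{-(i+j-1)}\cdot \bigo(\beta^{\lfloor (i+j)/2\rfloor})$, i.e.\ of order $\beta^{\lfloor s/2\rfloor - (s-1)}$ with $s := i+j$; this reaches order $\beta^{-1}$ or larger only for $s\in\{2,3,4\}$. Thus exactly four terms matter, namely $a_1^2\mu_2$, $2a_1 a_2\mu_3$, $a_2^2(\mu_4-\mu_2^2)$ and $2a_1 a_3\mu_4$. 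Inserting the Poisson moments and re-expanding each in $\beta^{-1}$ gives, respectively, $1 - \tfrac{c}{\beta}$, $-\tfrac{1}{2\beta}$, $\tfrac{1}{8\beta}$ and $\tfrac{3}{4\beta}$, all up to $\bigo(\beta^{-2})$; the leading term is $1$ and the $\beta^{-1}$-coefficients sum to $-c - \tfrac12 + \tfrac18 + \tfrac34 = \tfrac38 - c$, which is precisely $\frac{3-8c}{8\beta}$.

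The delicate point, and the step I expect to be the main obstacle, is that none of the above is yet a proof: the derivatives of $h(z)=2\sqrt{z}$ blow up as $z\downarrow 0$, while $Z=Y+c$ can sit arbitrarily close to $0$ (indeed $\Prob(Y=0)=e^{-\beta}>0$), so the Taylor remainder cannot be controlled uniformly and the formal series need not even converge in expectation. To make the expansion rigorous I would split every expectation over the bulk event $A := \set{\abs{Y-\beta}\leq \beta/2}$ and its complement. On $A$ the argument $z=Y+c$ stays bounded below by a fixed multiple of $\beta$, so Taylor's theorem with Lagrange remainder applies with derivative bounds $\abs{h^{(k)}(z)}\leq C_k\,\beta^{-(2k-1)/2}$, and the truncation errors are governed by the higher central moments exactly as in the heuristic count above. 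On $A^c$ a Chernoff/Bernstein tail estimate for the Poisson law gives $\Prob(A^c)\leq e^{-c_0\beta}$, decaying faster than any power of $\beta$; since $2\sqrt{Y+c}$ and its square $4(Y+c)$ grow at most polynomially in $Y$, a Cauchy–Schwarz bound $\E{4(Y+c)\mathbf{1}_{A^c}}\leq 4\sqrt{\E{(Y+c)^2}\,\Prob(A^c)}$ shows the excluded event contributes only an exponentially small error, comfortably inside the stated remainders $\bigo(\beta^{-3/2})$ and $\bigo(\beta^{-2})$. Assembling the bulk expansion with the negligible tail then yields both asymptotics.
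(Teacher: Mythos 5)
Your proposal is correct, and it is genuinely more than the paper offers: the paper does not prove this lemma at all, but simply recalls it as Lemma 1 of the cited work of Brown, Cai, Zhang, Zhang and Zhou, so there is no in-paper argument to compare against. Your derivation is the standard one underlying that reference: a Taylor (delta-method) expansion of $h(z)=2\sqrt{z}$ about $m=\beta+c$, evaluated with the Poisson central moments $\mu_2=\beta$, $\mu_3=\beta$, $\mu_4=3\beta^2+\beta$, followed by re-expansion in $\beta^{-1}$. I checked the bookkeeping: the surviving terms and their coefficients ($1-c/\beta$, $-1/(2\beta)$, $1/(8\beta)$, $3/(4\beta)$ for the variance, summing to $(3-8c)/(8\beta)$) are all right, and your power count $\beta^{\lfloor s/2\rfloor-(s-1)}$ correctly isolates $s\in\{2,3,4\}$. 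You also correctly identify the only genuinely delicate point — that the derivatives of $\sqrt{\cdot}$ are unbounded near $0$ while $\Prob(Y=0)=e^{-\beta}>0$, so the formal expansion must be justified — and your bulk/tail split with a Poisson Chernoff bound and Cauchy--Schwarz on the complement is the right fix; to be fully airtight you would still want to state the truncation order on the bulk event and the uniform bound $\E{\abs{Y-\beta}^{k}}=\bigo(\beta^{k/2})$ explicitly, but that is routine. In short: the approach buys a self-contained proof where the paper relies on a citation, at the cost of the tail-control technicalities you have already anticipated.
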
   
Thus, the choice $c=3/8$ is likely to stabilize the variance of $2\sqrt{Y+c}$ 
at the constant value $1$ (in second order)  and its mean at $2\sqrt{\beta}$
(in first order) or in other words it approximately holds that
\begin{equation*}
2\sqrt{Y+3\slash 8} - 2\sqrt{\beta} \sim \mathcal{N}(0,1).
\end{equation*} 
It is obvious from Lemma \ref{poisson:anscombe} that the choice $c=1\slash 4$
will result in a better reduction of the bias, since the mean is then stabilized
at $2\sqrt{\beta}$ in second order (at the cost of a less stable variance).
Numerically, we found the difference to be negligible for our purposes.

 With the above considerations, it is straightforward to adapt the estimation
 scheme \eqref{intro:optprob} to the present case: Let $X_{ij} = 2\sqrt{Y_{ij} +
 3\slash 8}$. Then, we define a statistical multiresolution estimator $\hat u$ 
 for the model \eqref{poisson:data} to be any solution of
\begin{align}\label{poisson:optprob}
\inf_{u\in\L{2}} J(u) \quad\text{ s.t. }\quad & \sup_{S\in\S} c_S\sum_{(i,j)\in
S} \abs{ X_{ij} - 2\sqrt{(Ku)_{ij}}}^2\leq 1, \\
& (Ku)_{ij}\geq 0,\quad\text{ for all } (i,j)\in G.\nonumber
\end{align}
Note that for any $c>0$ the function $t\mapsto (\sqrt{t} - c)^2$ is convex on
$[0,\infty)$ and thus Problem \eqref{poisson:optprob} is again
a convex optimization problem. Similar as in Section \ref{smre:not} we can
rewrite \eqref{poisson:optprob} into
\begin{equation}\label{poisson:decomp}
\inf_{u\in\L{2}, v\in \R^{m\times n}} J(u) + \tilde H(v) \quad\text{
s.t. }\quad Ku - v = 0,
\end{equation}
where here $\tilde H$ denotes the indicator function on the feasibility
set $\tilde{\mathcal{C}}$ given by
\begin{equation}\label{poisson:feasset}
\tilde{\mathcal{C}} = \set{v\in\R^{m\times n}~:~ v_{ij}\geq 0 \text{ for all }
(i,j)\in G\text{ and } T(2\sqrt{v}-X)\leq 1}.
\end{equation}

\begin{figure}[h!] 
{\scriptsize
\psfrag{y1}{$Y_1$}
\psfrag{y2}{$Y_2$}
\psfrag{5}{$5$}   
\psfrag{10}{$10$}
\psfrag{v1}{$v_1$}  
\psfrag{v2}{$v_2$}
\begin{center}
\includegraphics[width =
0.45\textwidth]{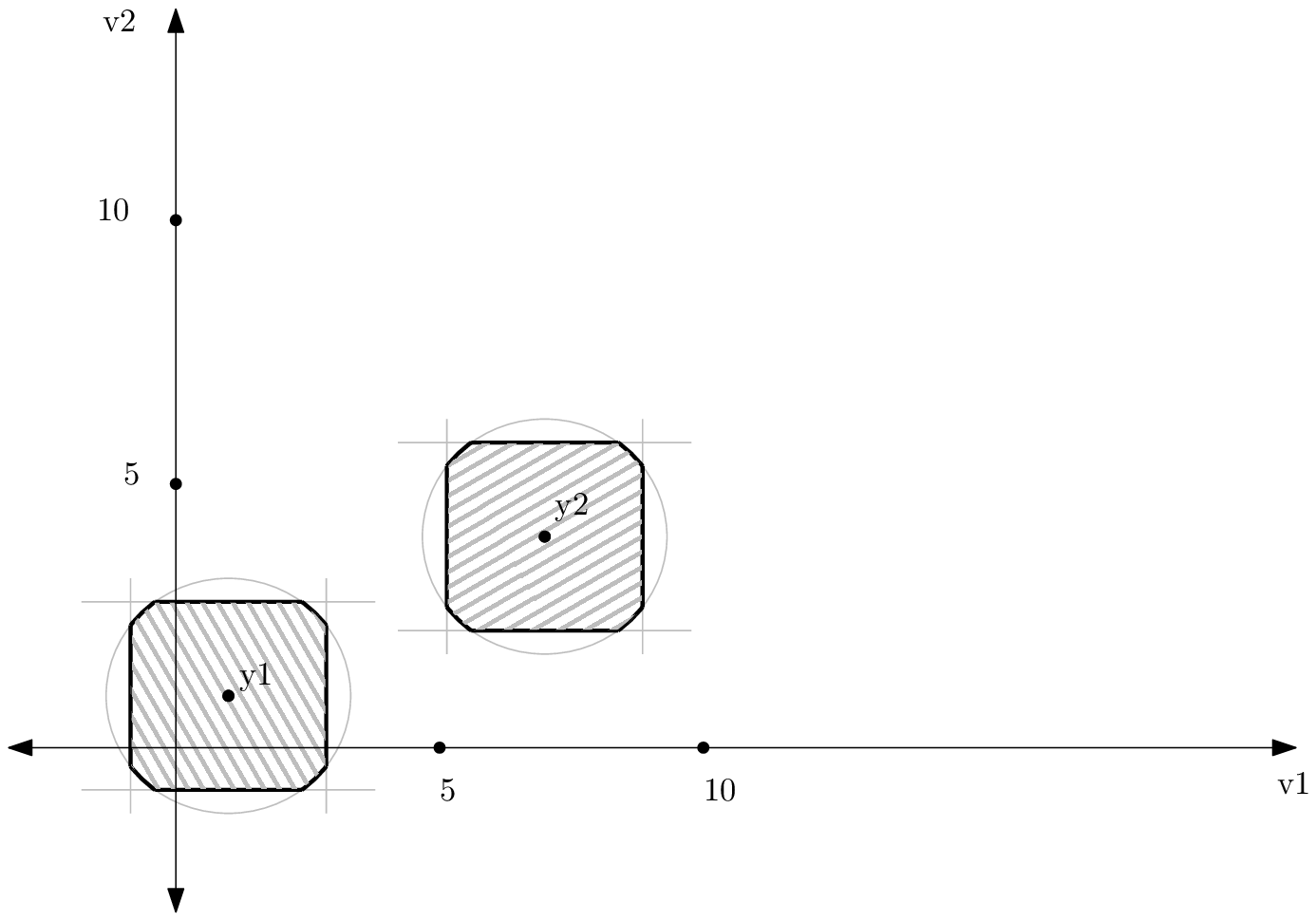}\hspace{0.05\textwidth}
\includegraphics[width = 0.45\textwidth]{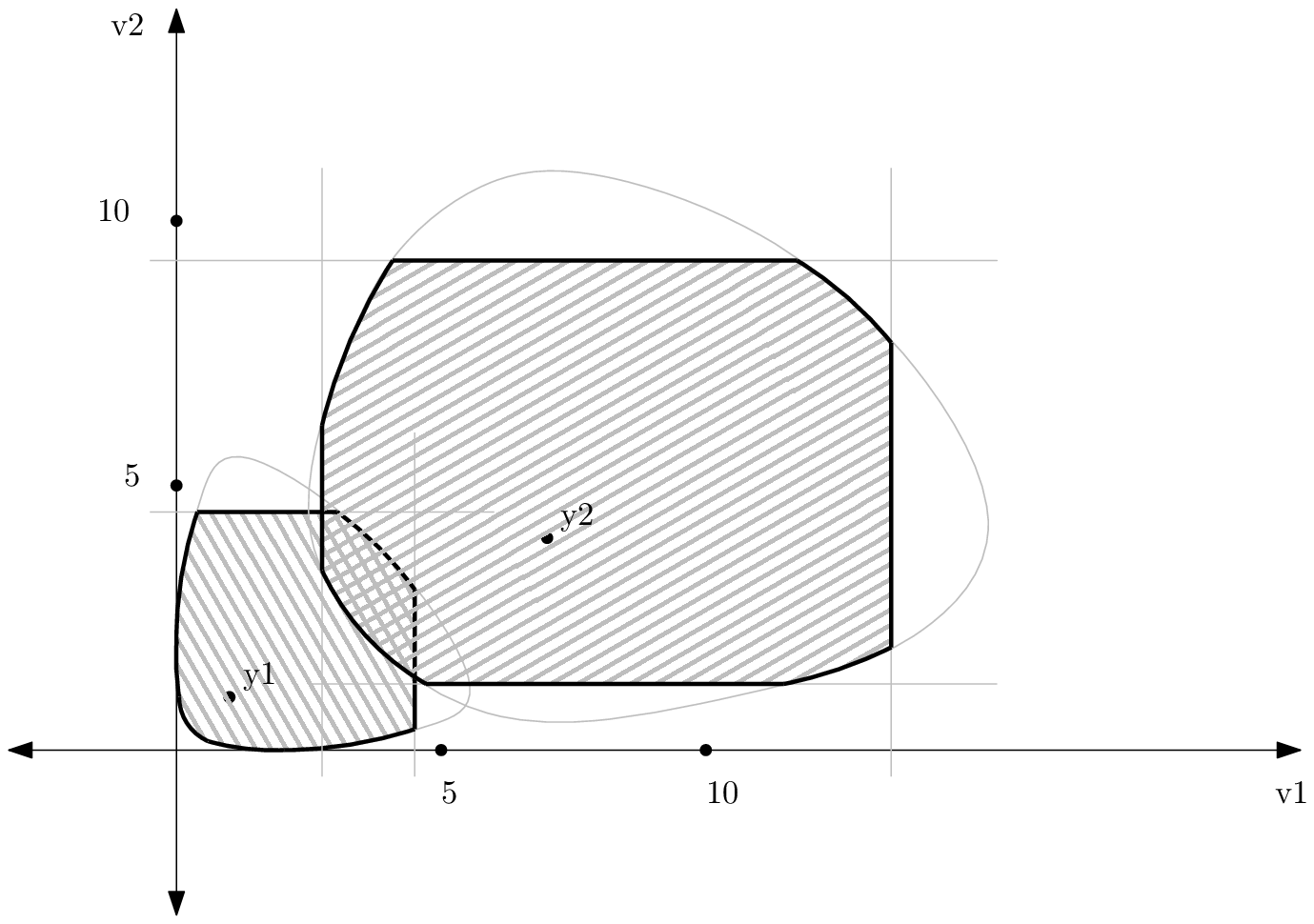}
\end{center}}
\caption{Admissible domains for $1\times 2$ ``images'' and data $Y_1 = (1,1)$
and $Y_2 = (7,4)$: Left: the sets $\mathcal{C}$ for the Gaussian case
with $\sigma^2 = 1$ (cf. \eqref{smre:feasset}). Right: the sets
$\tilde{\mathcal{C}}$ for the Poisson case (cf.
\eqref{poisson:feasset}). The scale weights $c_S$ are chosen
as proposed in Proposition \ref{smre:conf} with $\alpha = 0.9$.
The gray lines delimit the sets $\mathcal{C}_S$ and
$\tilde{\mathcal{C}}_S$ for $S\in \S =
\set{(1,0),(0,1),(1,1)}$.}\label{fig:feas}
\end{figure}

The right panel in Figure \ref{fig:feas} depicts the sets $\tilde{\mathcal{C}}$
for the simple case $m=2$ and $n=1$. It is important to note, that in contrast
to the Gaussian case (left panel), the feasibility sets $\tilde C$ are not
translation invariant, i.e. their shape and size depend on the data $Y$. In
particular, the size increases with $\norm{Y}$ which is due to the fact that the
variance of a Poisson random variable with law $\text{Pois}(\beta)$ increases
linearly with the parameter $\beta$.


In principle, Algorithm \ref{impl:uzawa} can be directly applied to solve
\eqref{poisson:optprob}: The set $\mathcal{C}$ in the projection step
\eqref{impl:dual} has to be replaced by the modified feasibility set
$\tilde{\mathcal{C}}$. Again, we observe that $\tilde{\mathcal{C}} =
\bigcap_{S\in\mathcal{S}} \tilde{\mathcal{C}}_S$, where
\begin{equation*}
\tilde{\mathcal{C}}_S = \set{v\in\R_{\geq 0}^{m\times n}~:~ c_S \sum_{(i,j)
\in S}\abs{2\sqrt{v_{ij}} - X_{ij}}^2 \leq 1}.
\end{equation*}
Using Dykstra's Algorithm amounts to compute the orthogonal projections onto the
sets $\tilde{\mathcal{C}}_S$ which, in contrast to the Gaussian case, is no
longer possible in closed form (except for the case when $\abs{S} = 1$) and
approximate solutions have to be used. Since during the runtime of Algorithm
\ref{impl:uzawa} these projections are to be computed a considerable amount of
time, this is clearly undesirable since inevitable numerical errors are likely
to accumulate.

As a way out, assume for the time being, that $\hat y \in\R^{m\times n}$  is
some estimator for $\sqrt{Ku^0}$ with $\hat y_{ij}>0$. Then, by Taylor
expansion, we find for $u\in\L{2}$ that
\begin{equation*}
2\sqrt{(Ku)_{ij}} = \hat y_{ij} + \frac{(Ku)_{ij}}{\hat
y_{ij}} + \bigo(\abs{\hat y_{ij}^2-(Ku)_{ij}}^2).
\end{equation*}
In place of \eqref{poisson:optprob}, we propose to solve the linearized problem
\begin{align}\label{poisson:optproblin}
\inf_{u\in\L{2}} J(u) \quad\text{ s.t. }\quad & \sup_{S\in\S} c_S\sum_{(i,j)\in
S} \abs{ X_{ij} - \hat y_{ij} - (Ku)_{ij}\slash \hat y_{ij}}^2\leq 1, \\
& (Ku)_{ij}\geq 0,\quad\text{ for all }(i,j)\in G.\nonumber
\end{align} 
Similar as above, we rewrite \eqref{poisson:optproblin}
into 
\begin{equation*}
\inf_{u\in\L{2}, w\in \R^{m\times n}} J(u) +  H_{\hat y}(w) \quad \text{ s.t.
}\quad \frac{Ku}{\hat y} - w = 0.
\end{equation*}
where $H_{\hat y}$ is the indicator function on the feasible set
$\mathcal{C}_{\hat y}$ given by
\begin{equation*}
\mathcal{C}_{\hat y} = \set{w\in \R^{m\times n}~:~ w_{ij}\geq 0 \text{
for all } (i,j)\in G\text{ and } T(w + \hat y - X)\leq 1}
\end{equation*}
Clearly, the orthogonal projection onto $\mathcal{C}_{\hat y}$ can again be
computed efficiently by Dykstra's algorithm as outlined in Section \ref{impl}.
The corresponding augmented Lagrangian functional reads as
\begin{equation}\label{poisson:auglarg}
L_{\lambda, \hat y}(u,w;p)  =  \frac{1}{2\lambda}\norm{Ku\slash \hat y - w}^2 +
J(u) + H_{\hat y}(w) + \inner{p}{Ku\slash \hat y - w}.
\end{equation}
For any ``good'' estimator $\hat y$ for $\sqrt{Ku^0}$ Algorithm \ref{impl:uzawa}
could be applied immediately to find a saddle point $(\hat u, \hat w, \hat p)$
of $L_{\lambda,\hat y}$, but usually such an estimator is not at hand.  We hence
propose to replace in the $k$-th loop of Algorithm \ref{impl:uzawa} the
estimator $\hat y$ by $\sqrt{Ku_k}$. To avoid instabilities we rather use
$\sqrt{\max(Ku_k,\delta)}$ for some small positive parameter $\delta >0$.
 We formalize these ideas in Algorithm \ref{poisson:uzawa}.

\begin{algorithm}[h!]\caption{ADMM for Poisson Noise}\label{poisson:uzawa}
\begin{algorithmic}
\REQUIRE $Y\in \R^{m\times n}$ (data), $\lambda > 0$ (step size), $\delta > 0$.
\STATE $w_0 = p_0\leftarrow 0$ and init $y_0\in\R_{>0}^{m\times n}$.
\FOR{$k=1,2,\ldots$}
\STATE $k\leftarrow k+1$.
\STATE Minimize $L_{\lambda, w_{k-1}}(\cdot,w_{k-1};p_{k-1})$:
\begin{equation}\label{poisson:primal}
  u_k \leftarrow \argmin_{u\in\L{2}} \frac{1}{2}\norm{Ku\slash y_{k-1} -
  (w_{k-1} - \lambda p_{k-1})}^2 + \lambda J(u).
\end{equation}
\STATE $y_k \leftarrow \sqrt{\max(Ku_k, \delta)}$.
\STATE Minimize $L_{\lambda, y_k}(u_k,\cdot;p_{k-1})$:
\begin{equation}\label{poisson:slack}
  w_k \leftarrow \proj_{\mathcal{C}_{y_k}}\left( Ku_k\slash y_k +\lambda
  p_{k-1} \right).
\end{equation}
\STATE Update dual variable:
\begin{equation}\label{poisson:dual}
  p_k  \leftarrow  p_{k-1} +  \lambda^{-1}(K u_k\slash y_k - w_k).
\end{equation}
\ENDFOR
\end{algorithmic}
\end{algorithm}  

In practice the Algorithm has proved to be very stable, however it seems to be
rather involved to prove numerical convergence of Algorithm \ref{poisson:uzawa}
which is beyond the scope of this paper. If $(\bar u, \bar w, \bar p)$ is a
limit of the sequence $(u_k, w_k, p_k)$ in Algorithm \ref{poisson:uzawa}, it is
quite obvious that $(\bar u, \bar w^2)$ is a solution of \eqref{poisson:decomp}
and hence that $\bar u$ solves \eqref{poisson:optprob}. Moreover, it is not
straightforward to incorporate a preconditioner similar to \eqref{impl:modlagr}
that renders the step \eqref{poisson:primal} semi-implicit.
  
\section{Numerical Results}\label{results}

We conclude this paper by demonstrating the performance of SMRE as computed by
our methodology introduced in the previous Sections. We will treat the
denoising problem  in Paragraph \ref{results:den}  as well as 
deconvolution and inpainting problems in Paragraph
\ref{results:inp}. Finally, we will study SMRE for the Poisson model
\eqref{poisson:data} computed by means of Algorithm \ref{poisson:uzawa} in
Paragraph \ref{results:fluor}. Here we will use real data from nanoscale
fluorescence microscopy provided by the Department of NanoBiophotonics at the
Max Planck Institute for Biophysical Chemistry in
G{\"o}ttingen\footnote{\url{http://www.mpibpc.mpg.de/groups/hell/}}.

When it comes down to computation, we think of an image $u$ as an $m\times n$
array of pixels rather than an element in $\L{2}$. Accordingly, the operator $K$
is realized as an $mn\times mn$ matrix and $\nabla$ denotes the discrete
(forward) gradient.  In all our experiments we use a step size $\lambda = 0.001$
for the ADMM method (Algorithm \ref{impl:uzawa}) and we stop the iteration if
the following criteria are satisfied
\begin{equation*}
{\norm{Ku_k - Ku_{k-1}}\over \norm{Y}}\leq 10^{-3},\; {\norm{Ku_k - v_k}\over
\norm{Y}}\leq 10^{-3}\;\text{ and }\; \max_{S\in\S}
\frac{\sqrt[4]{t_S(Ku_k - Y)} - \mu_S}{\sigma_S} \leq 1.01 q_\alpha.
\end{equation*}
Here, $\S$ is the system of subsets in use and $t_s,\mu_S$ and $\sigma_S$ are
defined as in Section \ref{smre:choice}. 

For the Poisson modification in Algorithm \ref{poisson:uzawa} we use the same
criteria, instead that $Ku_k - v_k$ is replaced by $\sqrt{Ku_k}-w_k$ and
$Ku_k-Y$ by $2\sqrt{Ku_k}-X$, where $X$ is as in Section \ref{poisson}.

\subsection{Denoising}\label{results:den}

In this paragraph we consider data $Y$ given by \eqref{intro:data} when $K$ is
the identity matrix and $u^0$ is the test image in Figure \ref{smre:fig_data}
($m=341$ and $n=512$), i.e.
\begin{equation*}
Y_{ij} = u^0_{ij} + \eps_{ij},\quad (i,j)\in G.
\end{equation*}
The study the scenarios when $\sigma = 0.1$ ($10\%$ Gaussian noise) and $\sigma
= 0.2$ ($20\%$ Gaussian noise). We compute SMRE based on the subsystems of
$S_0$ and $S_2$ as introduced in Paragraph \ref{smre:test} where we fixed $\alpha = 0.9$. To this end we utilize Algorithm
\ref{impl:uzawa} with $\zeta = 1$, i.e. the standard ADMM.

\begin{figure}[t!]
\begin{center}
\includegraphics[width =
0.4\textwidth]{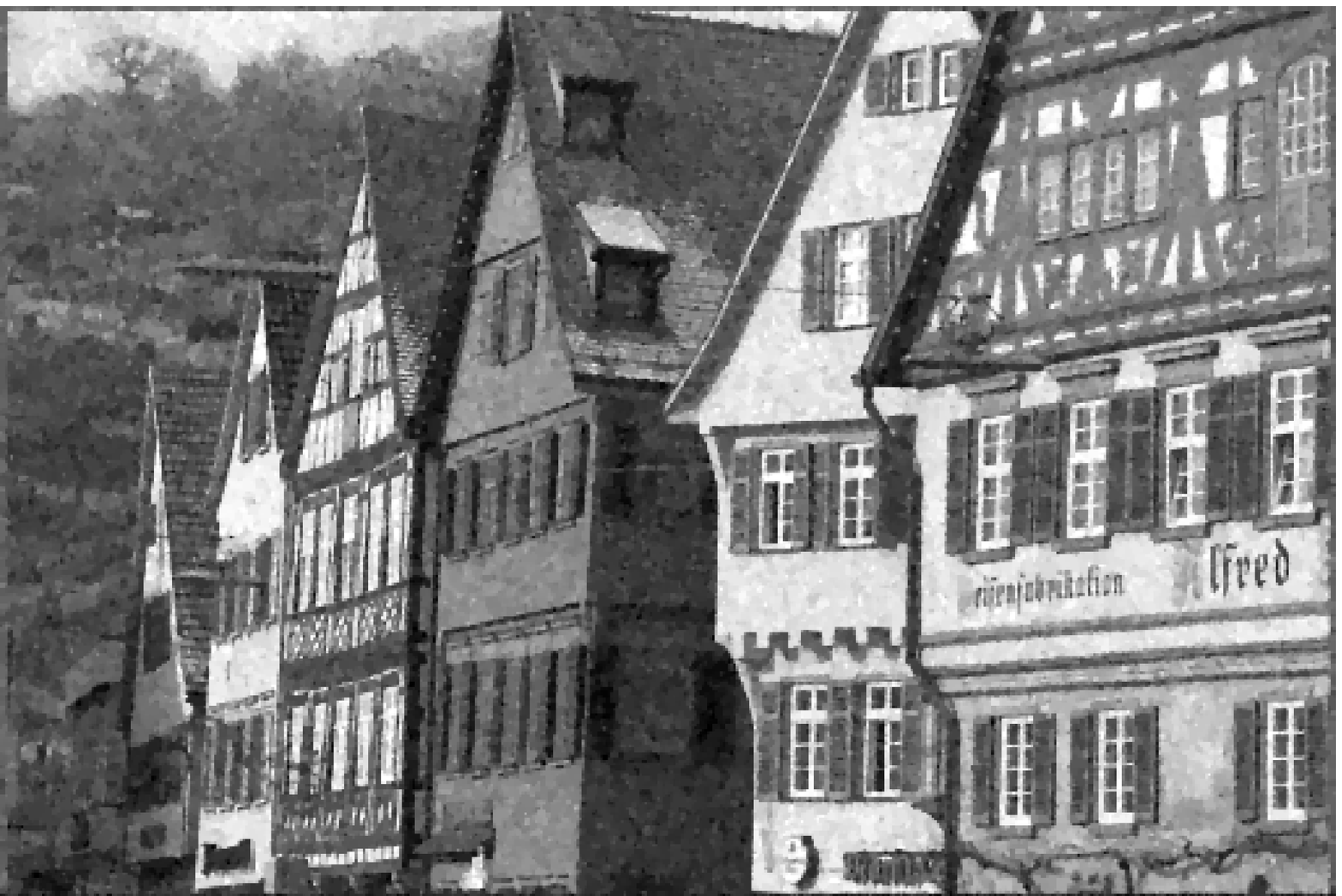}
\includegraphics[width =
0.4\textwidth]{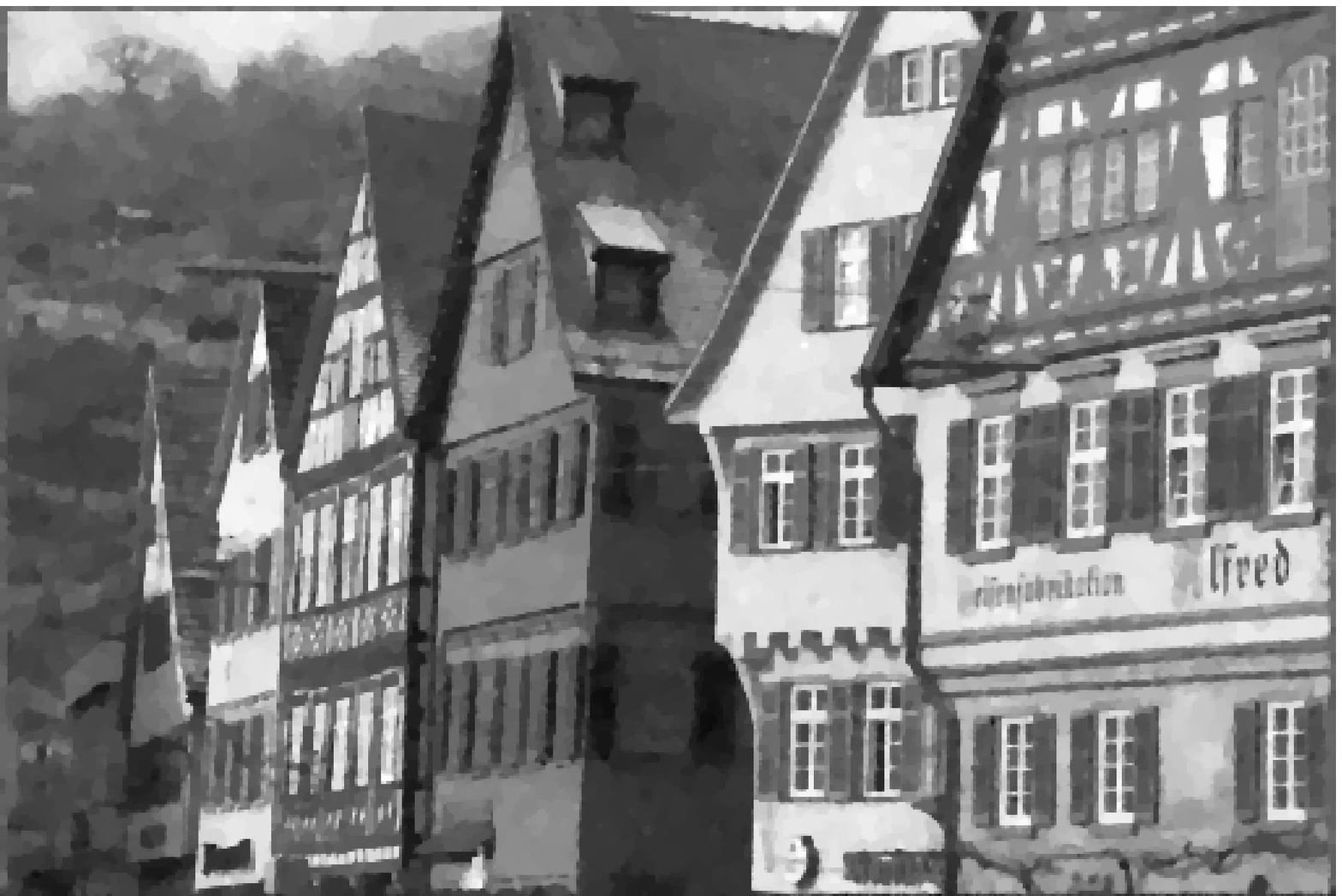}

\includegraphics[width =
0.4\textwidth]{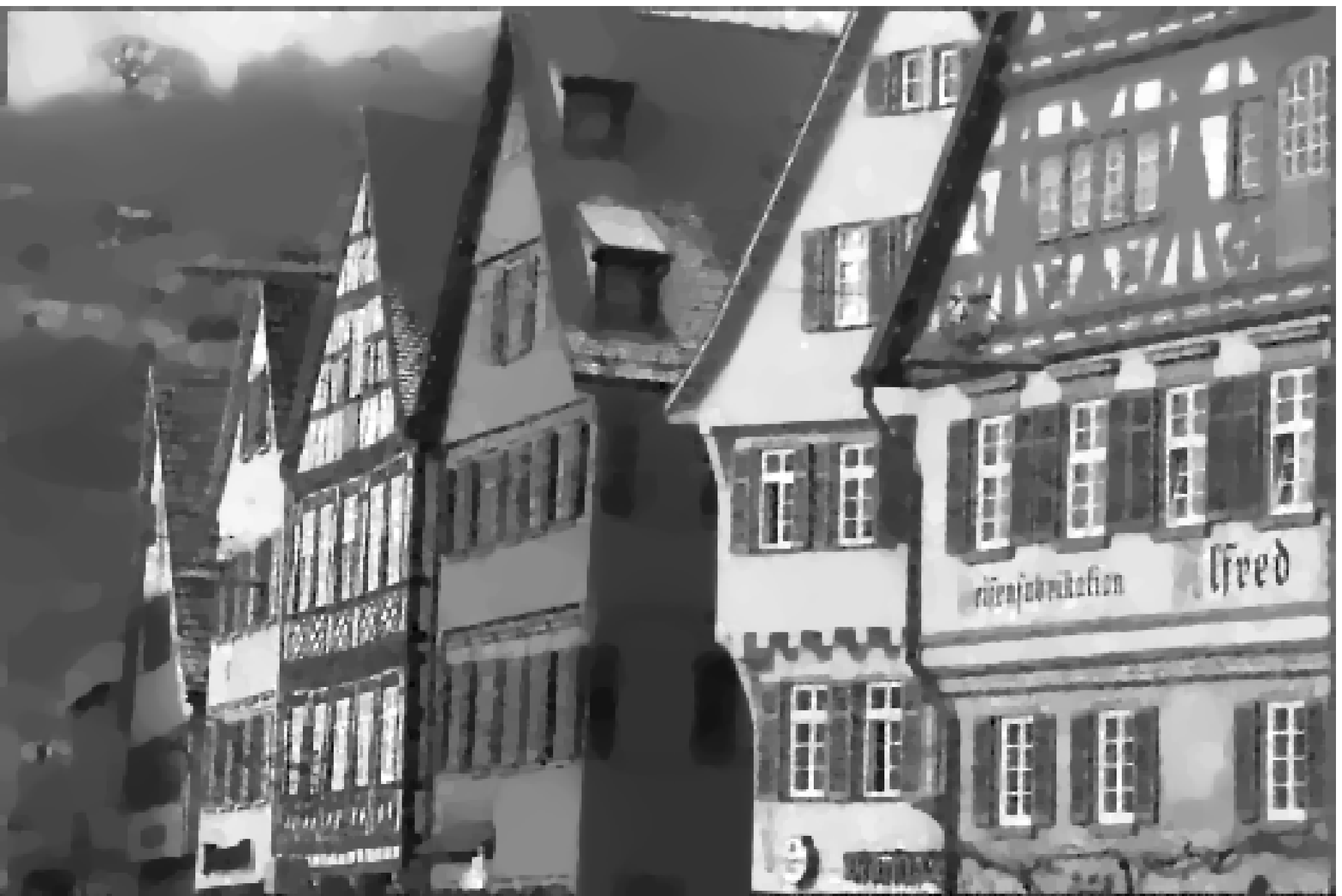}
\includegraphics[width =
0.4\textwidth]{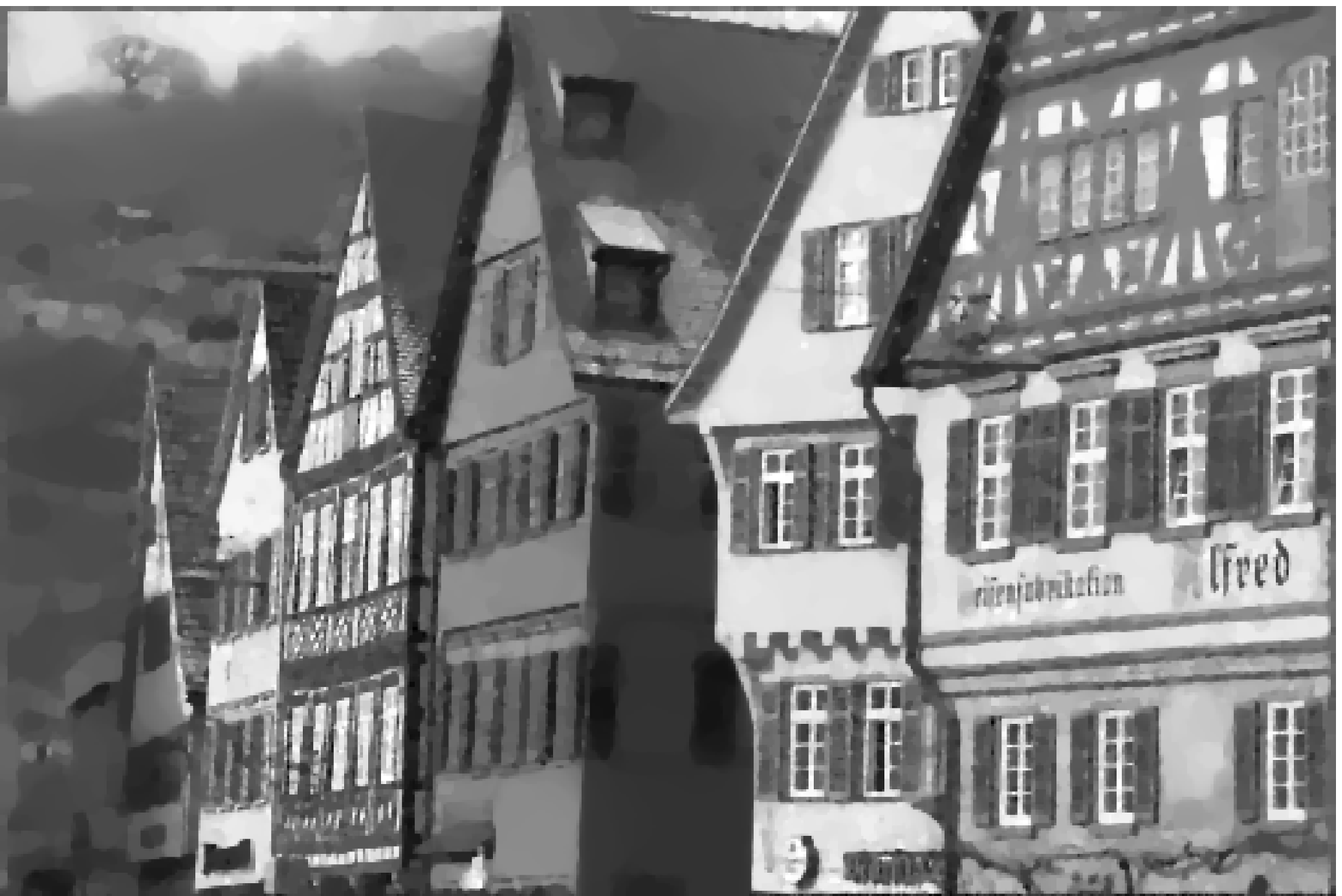}
 
\includegraphics[width = 
0.4\textwidth]{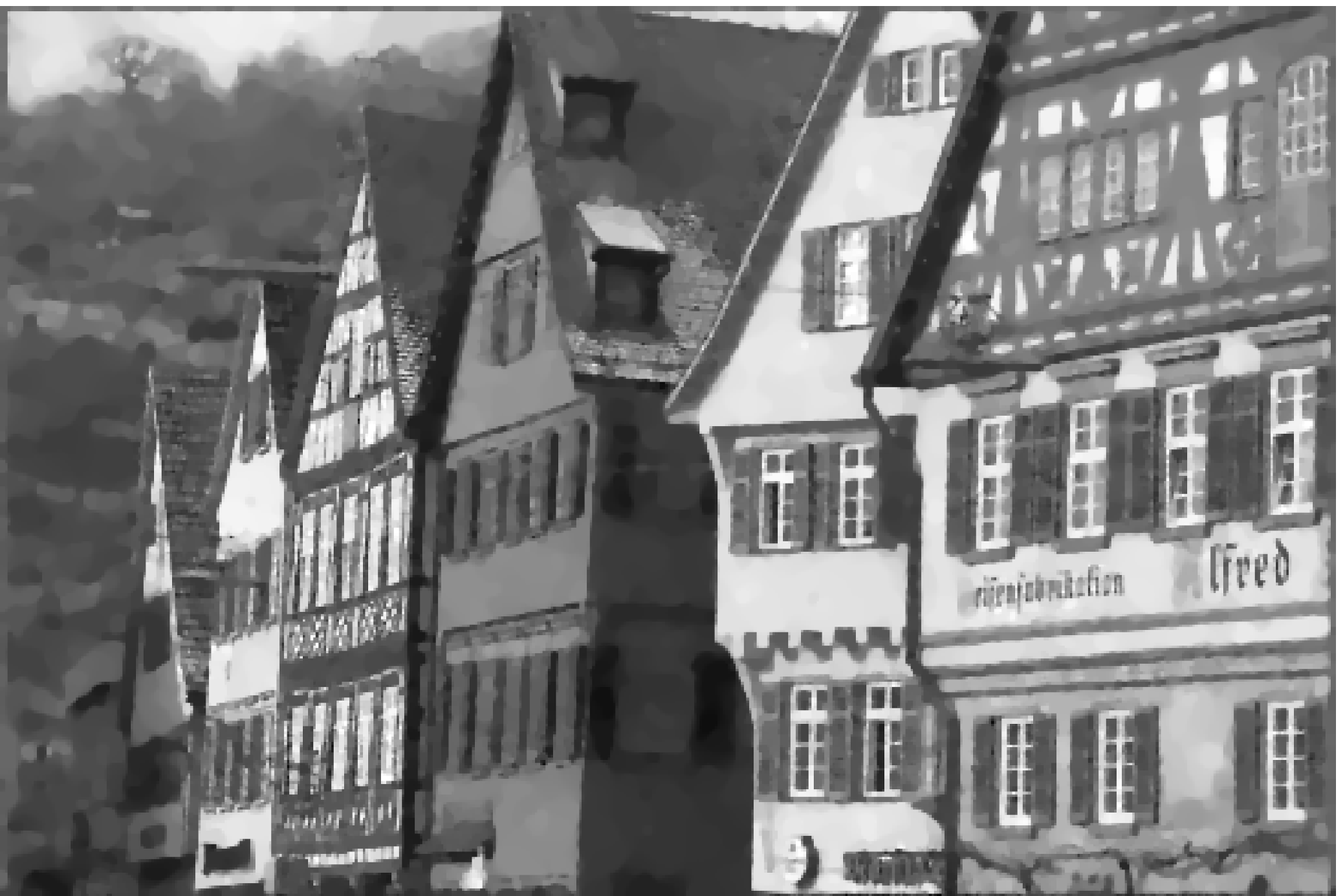}
\includegraphics[width =
0.4\textwidth]{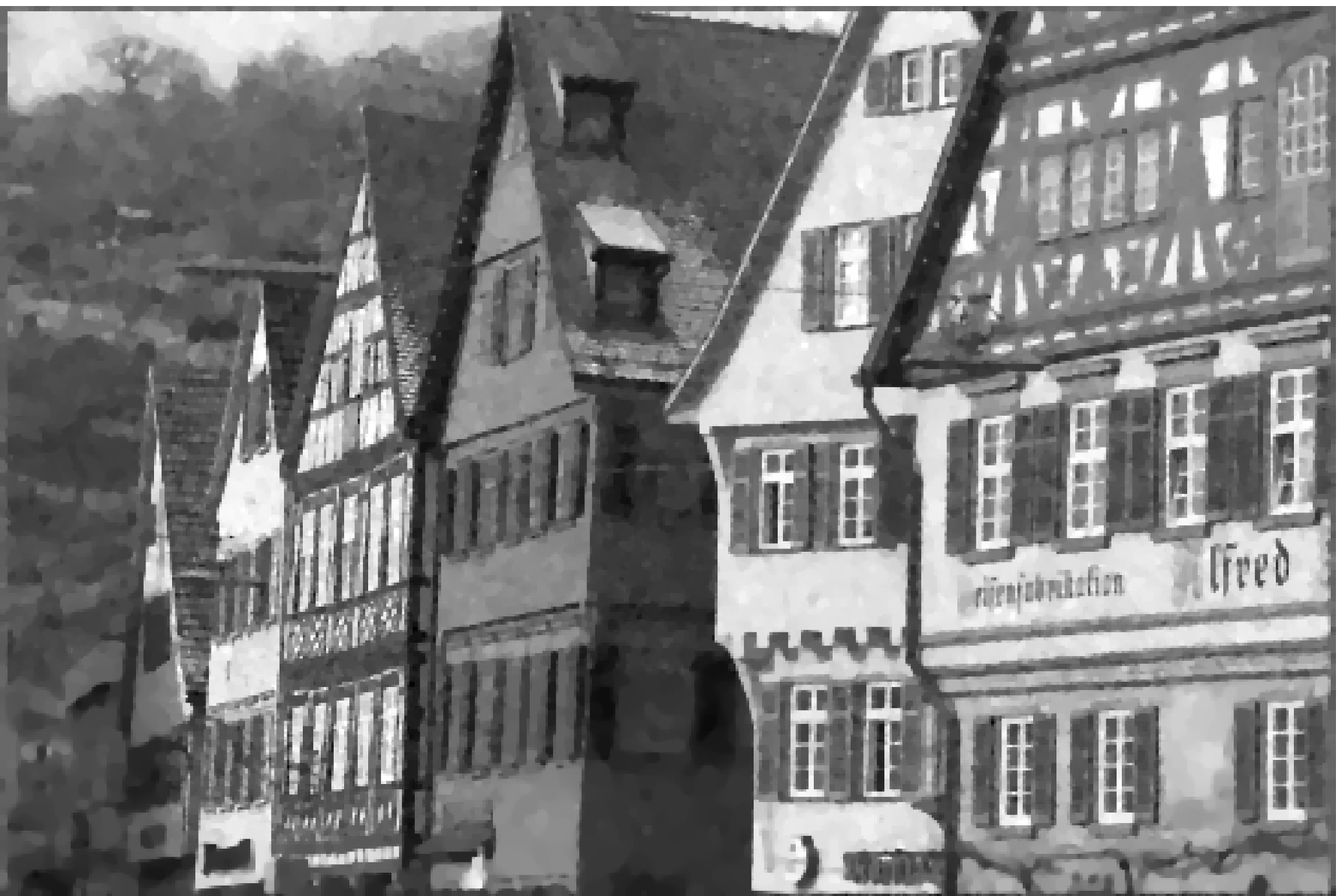}
\end{center} 
\caption{Denoising results for $10\%$ Gaussian noise. First row: $\Ls{2}$ - and
Bregman oracle. Middle row: SA-TV with window size $11$ and $19$. Last row:
SMREs w.r.t. $\S_0$ and $S_2$ (with $\alpha = 0.9$). } \label{results:figden10}
\end{figure}

\begin{figure}[t!]
\begin{center}
\includegraphics[width =
0.4\textwidth]{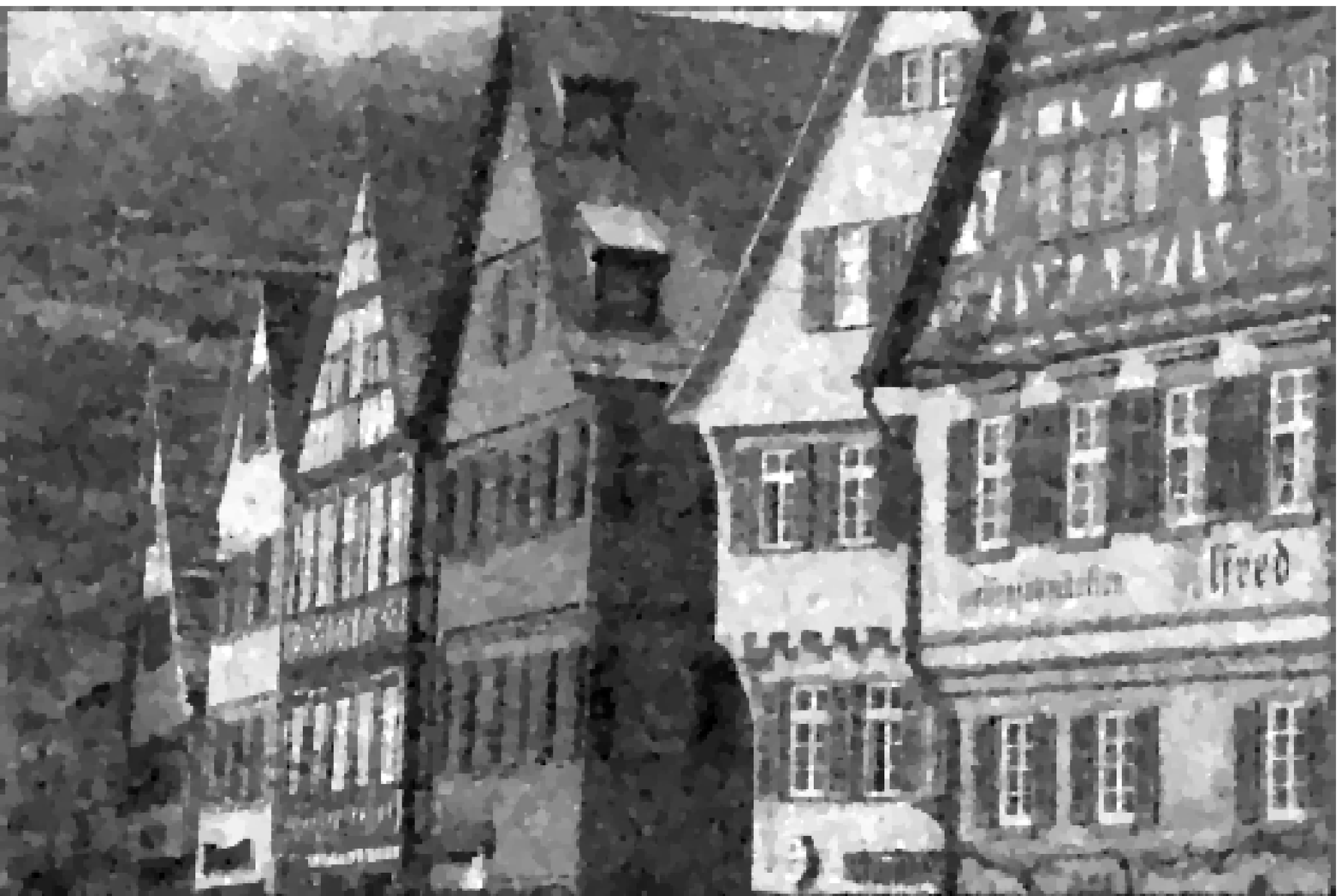}
\includegraphics[width =
0.4\textwidth]{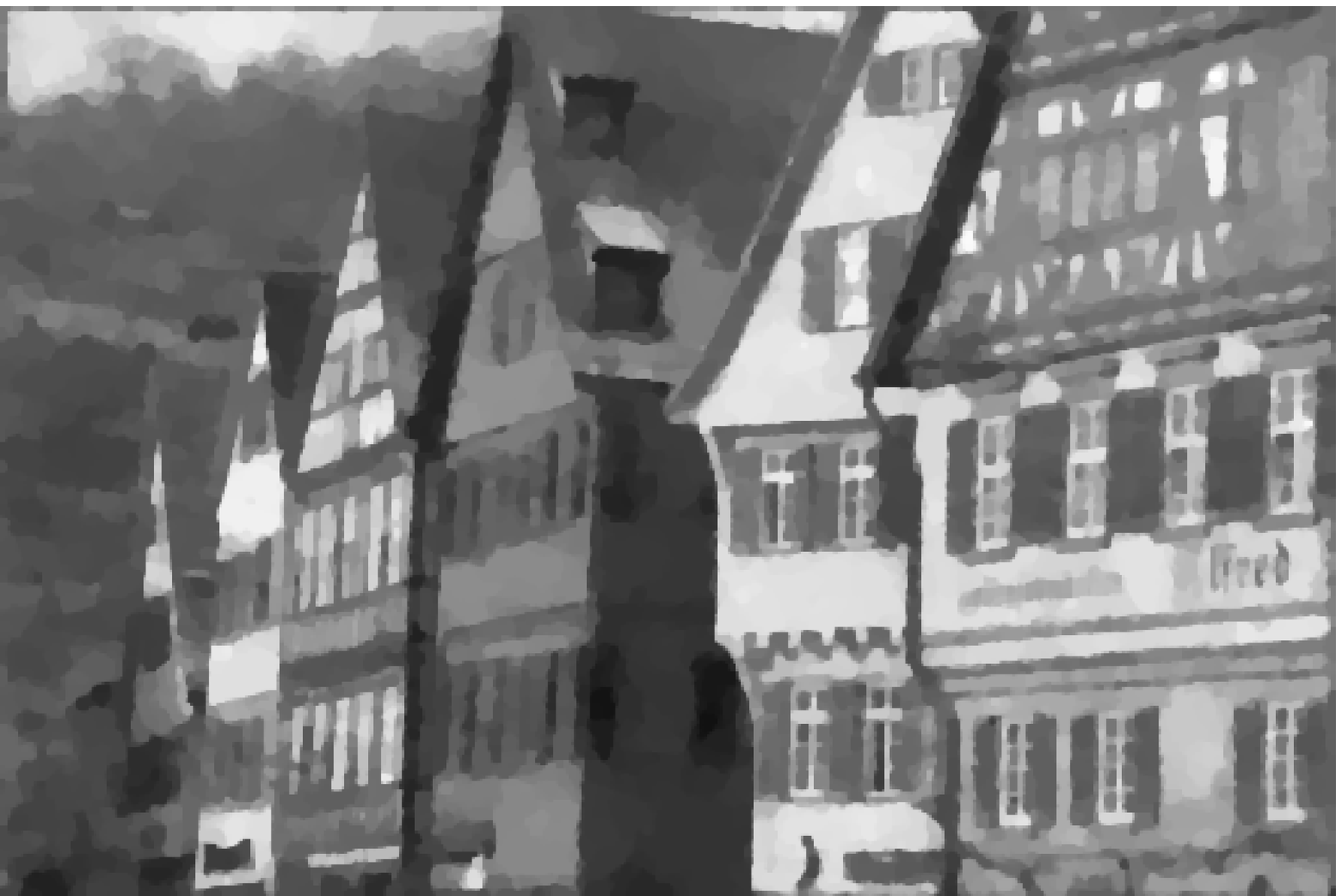}

\includegraphics[width =
0.4\textwidth]{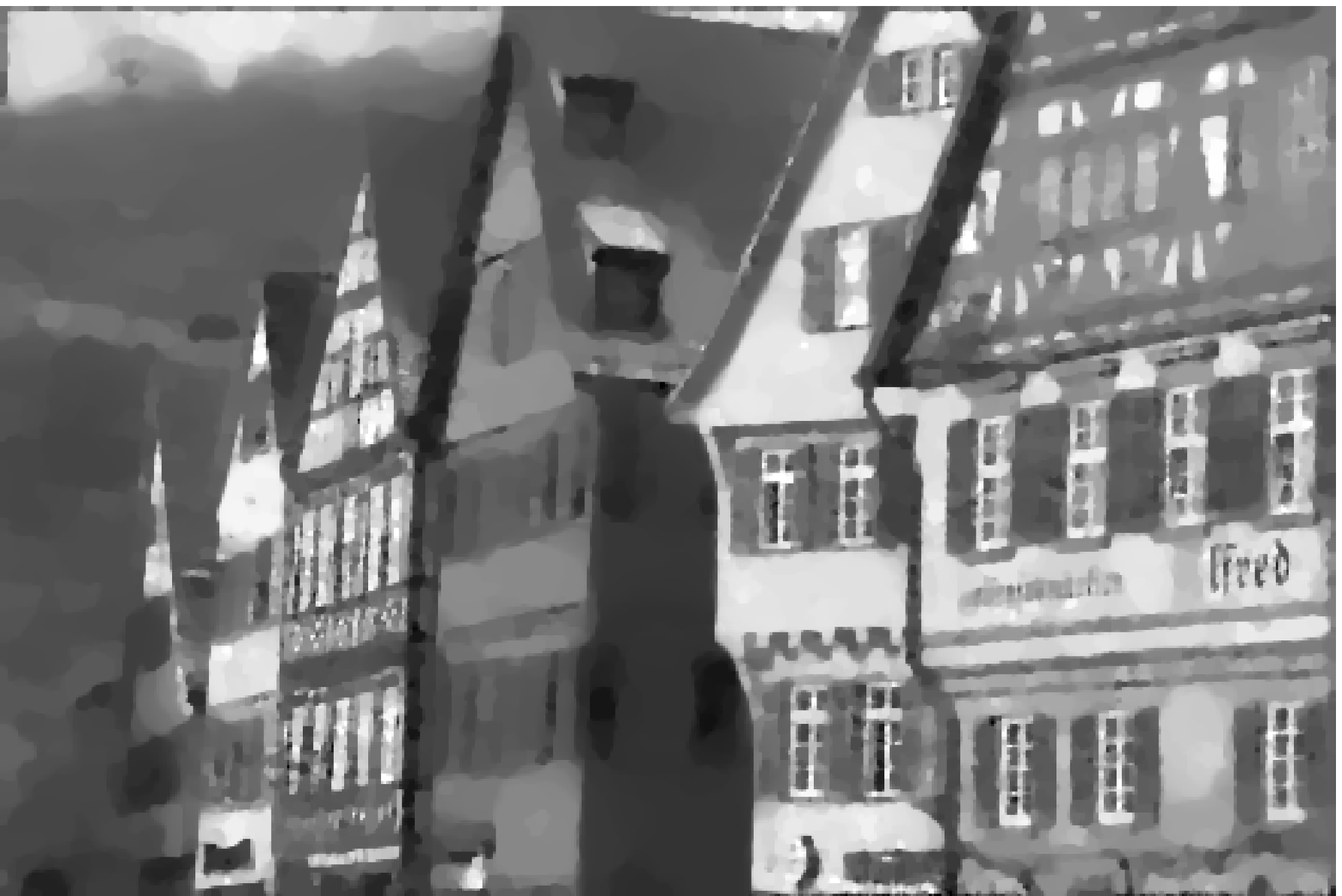}
\includegraphics[width =
0.4\textwidth]{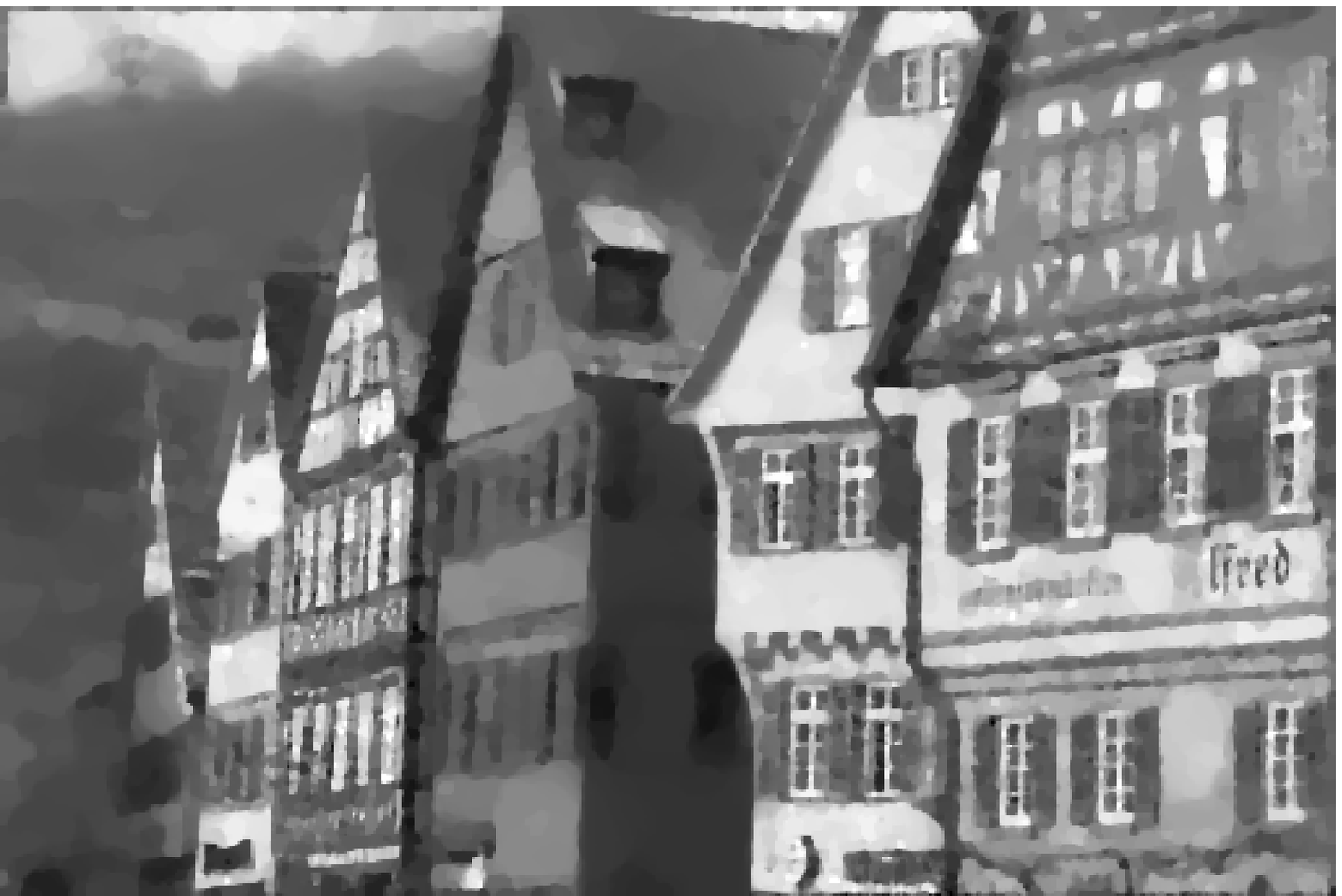}
 
\includegraphics[width = 
0.4\textwidth]{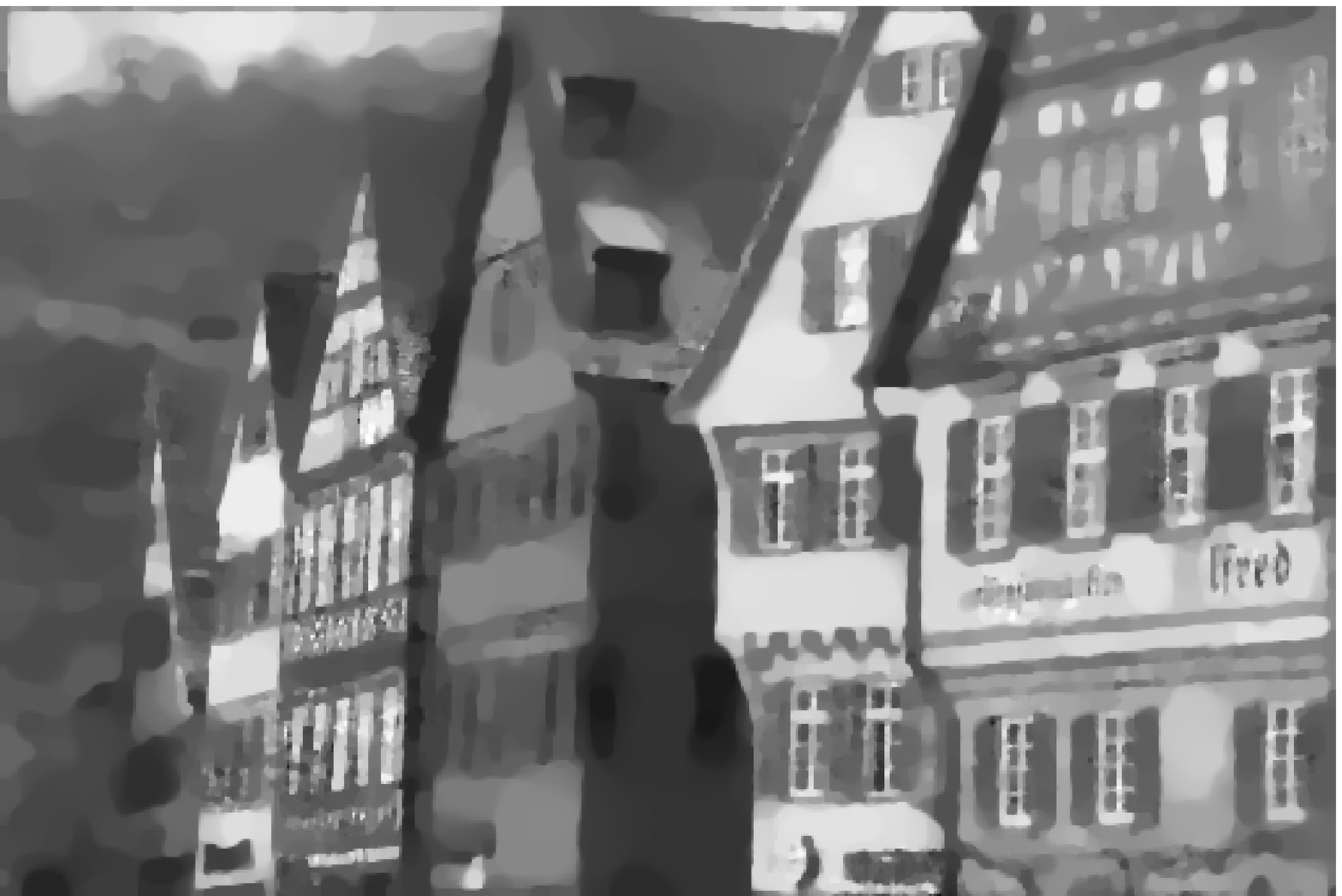}
\includegraphics[width =
0.4\textwidth]{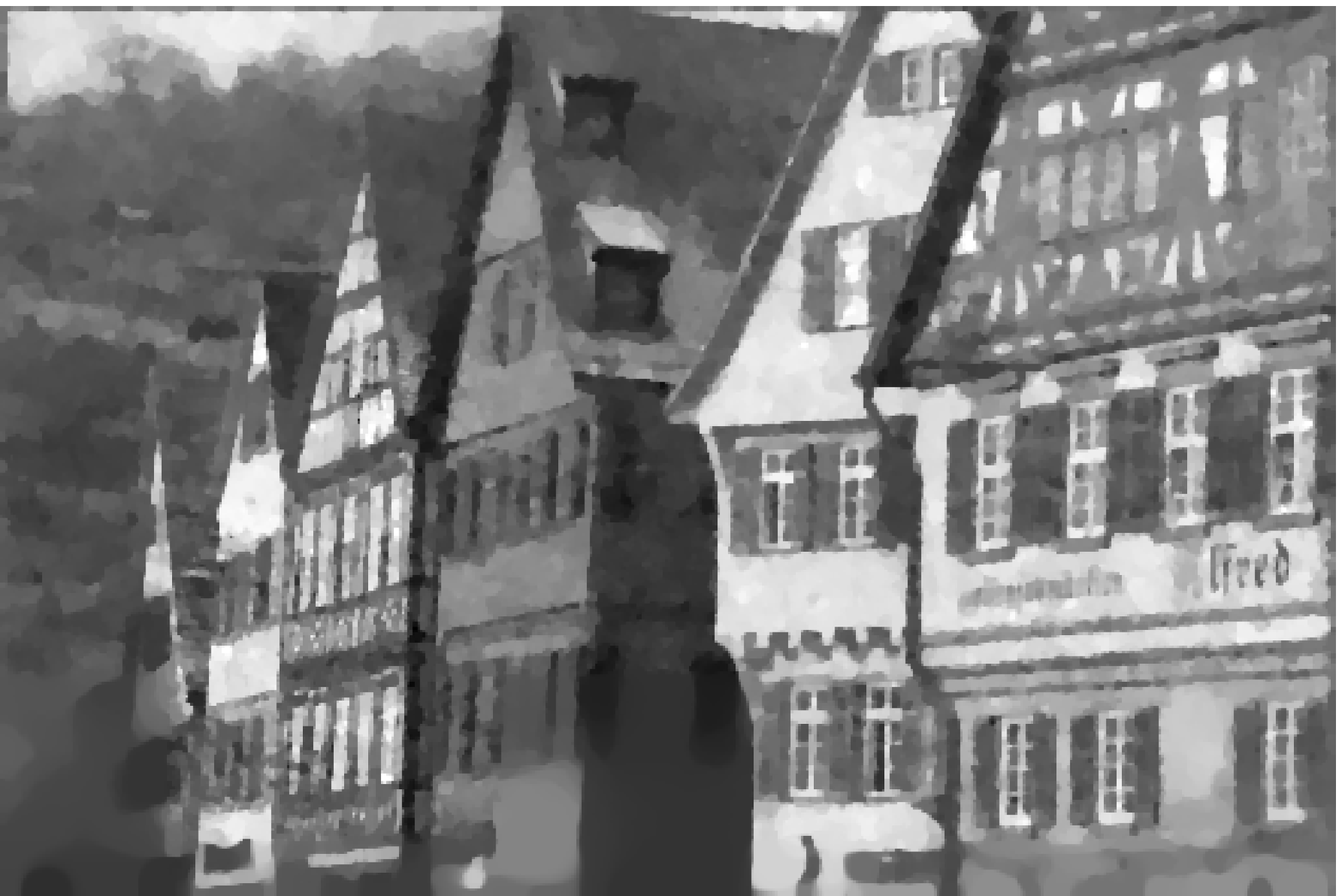}
\end{center} 
\caption{Denoising results for $20\%$ Gaussian noise. First row: $\Ls{2}$ - and
Bregman oracle. Middle row: SA-TV with window size $11$ and $19$. Last row:
SMREs w.r.t. $\S_0$ and $S_2$ (with $\alpha = 0.9$). } \label{results:figden20}
\end{figure} 

\begin{figure}
\fbox{\includegraphics[height=0.87\textheight]{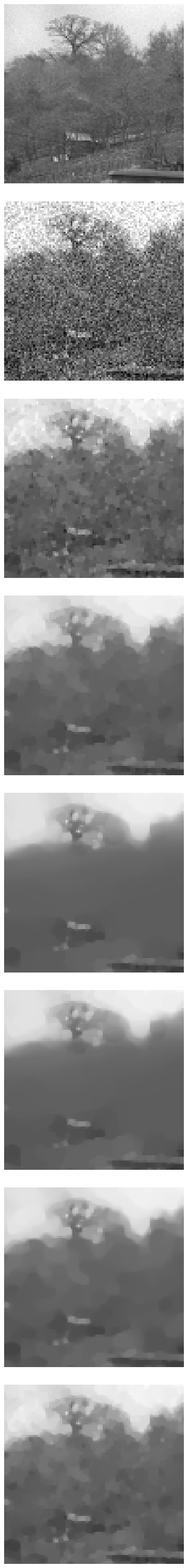}
\includegraphics[height=0.87\textheight]{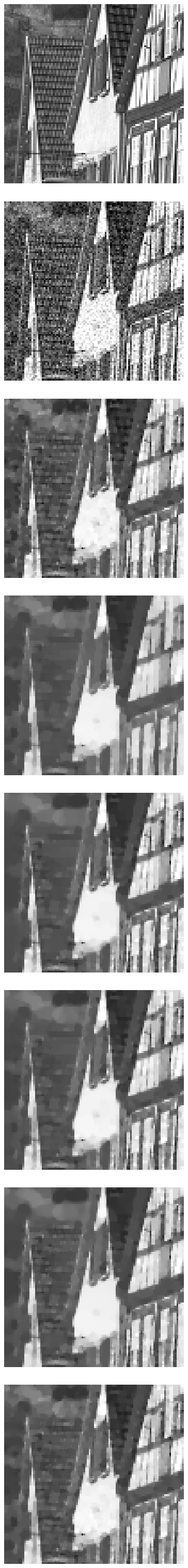}
\includegraphics[height=0.87\textheight]{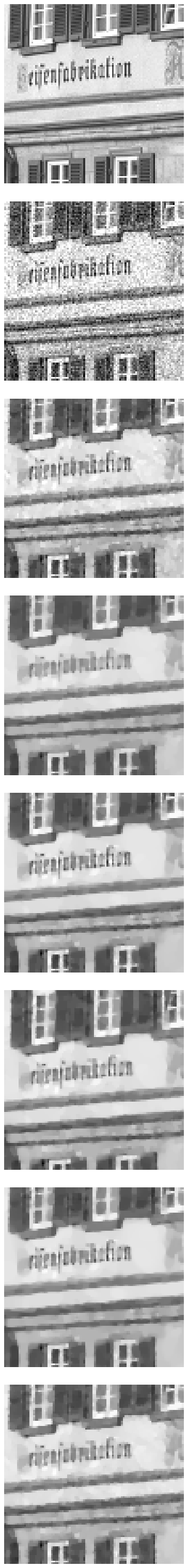}}
\fbox{\includegraphics[height=0.87\textheight]{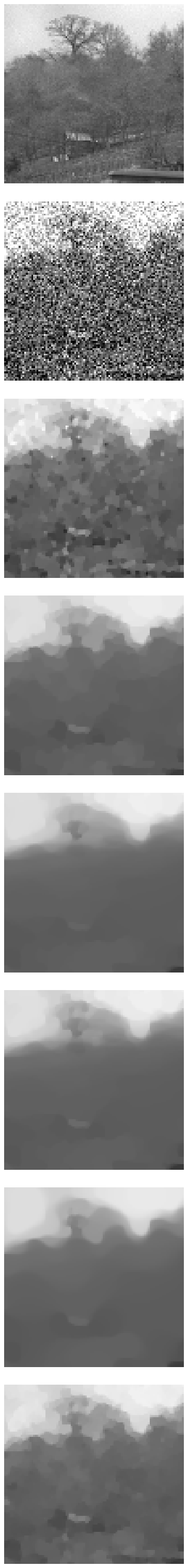}
\includegraphics[height=0.87\textheight]{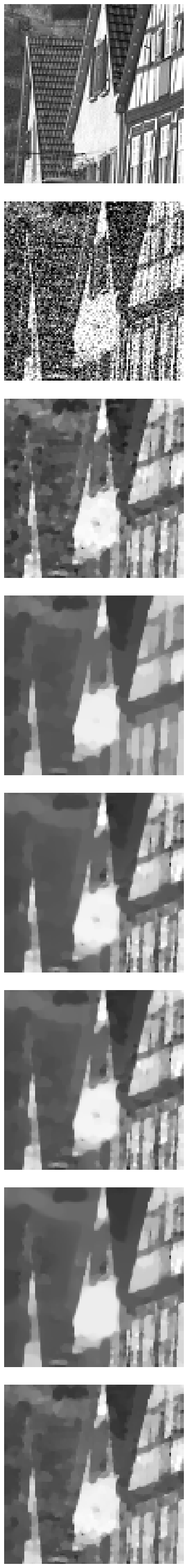}
\includegraphics[height=0.87\textheight]{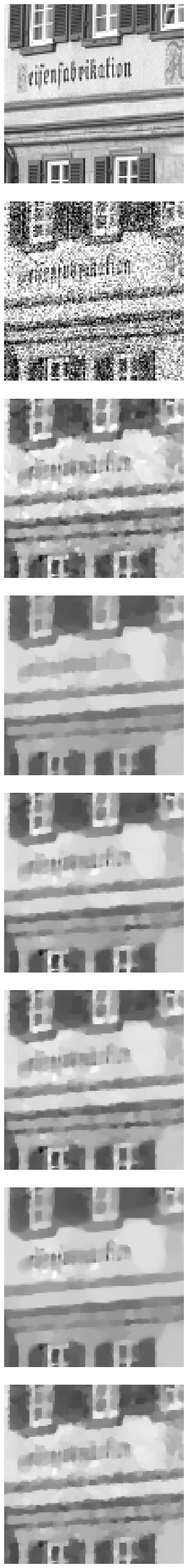}}
\caption{Reconstruction details for $10\%$ (left frame) and $20\%$ (right frame)
Gaussian noise. From top to bottom: True signal, data, $\Ls{2}$- and Bregman oracle,
SA-TV with window size $11$ and $19$, SMRE w.r.t. $\S_0$ and $\S_2$
and $\alpha=0.9$}\label{results:details}
\end{figure}

We compare our estimators to the global estimators $\hat u(\lambda)$ ($\lambda >
0$) as defined in \eqref{intro:rof}. We choose $\lambda = \lambda_2$ and
$\lambda = \lambda_{\text{B}}$ such that the mean squared distance and the mean
symmetric Bregman distance to the true signal $u^0$ is minimized, respectively.
To be more precise, we set
\begin{equation}\label{results:oracledef}
  \lambda_2 = \E{\argmin_{\lambda > 0} \norm{u^0 - \hat
  u(\lambda)}^2}\; \text{ and }\; \lambda_{\text{B}} =
  \E{\argmin_{\lambda > 0} D_J^{\text{sym}}(u^0, \hat u(\lambda))},
\end{equation}
where the symmetric Bregman distance for $J$ as in \eqref{intro:tv}
formally reads as
\begin{align*}
  D^{\text{sym}}_J(u,v) & = \sum_{(i,j)\in G} \left(\frac{\nabla
  u_{ij}}{\abs{\nabla u_{ij}}} - \frac{\nabla v_{ij}}{\abs{\nabla
  v_{ij}}}\right)\cdot \left(\nabla u_{ij} - \nabla v_{ij}\right) \\
  & =  \sum_{(i,j)\in G} (\abs{\nabla u_{ij}} + \abs{\nabla
  v_{ij}})\left( 1 - \frac{\nabla u_{ij} \cdot \nabla v_{ij}}{\abs{\nabla
  u_{ij}} \abs{\nabla v_{ij}}}\right).
\end{align*}
This means that $D^{\text{sym}}_J(u,v)$ is small if for sufficiently many pixels
$(i,j)\in G$ either both $u$ and $v$ are constant in a neighborhood of $(i,j)$
or the level lines of $u$ and $v$ at $(i,j)$ are locally parallel. In practice,
we rather use
\begin{equation*}
J(u) = \sum_{(i,j)\in G} \sqrt{|\nabla u_{ij}|_2^2 + \beta^2}
\end{equation*}
instead of $J$ in \eqref{intro:tv} for some small constant $\beta \approx
10^{-8}$. Then the above formulae are slightly more complicated. Since the
parameters $\lambda_2$ and $\lambda_{\text{B}}$ are not accessible
in practice as $u^0$ is unknown, we refer to $ \hat u(\lambda_2)$ and $\hat
u(\lambda_{\text{B}})$ as \emph{$\Ls{2}$}- and \emph{Bregman-oracle},
respectively. Simulations lead values $\lambda_2 = 0.026, 0.0789$ and
$\lambda_{\text{B}} = 0.0607, 0.1767$ for $\sigma = 0.1, 0.2$, respectively. 

In addition, we compare our approach to the \emph{spatially adaptive TV (SA-TV)}
method as introduced in \cite{DonHinRin11}. The SA-TV algorithm
approximates solutions of \eqref{intro:optprob} for the case where $\S$ constitutes the set
of all translates of a fixed window $S\subset G$ (cf. also
\cite{AlmBalCasHar08}) by computing a solution of \eqref{intro:rofspat} with a
suitable spatially dependent regularization parameter $\lambda$. Starting from a
(constant) initial parameter $\lambda \equiv \lambda_0$ the SA-TV algorithm
iteratively adjusts $\lambda$ by increasing it in regions that were poorly
reconstructed in the previous step. For our numerical comparisons, we used the
SA-TV-Algorithm considering square windows with side lengths $11$
(as suggested in \cite{DonHinRin11}) and $19$. All parameters
involved in the algorithm were chosen as suggested in \cite{DonHinRin11}. In
particular we set $\lambda_0 = 0.5$  and choose an upper bound for $\lambda$
of $L=1000$ in all our simulations. As a stopping condition, we used the
discrepancy principle which ended the reconstruction process after exactly four
iteration steps in all of our experiments.

The reconstructions are displayed in Figure \ref{results:figden10} ($\sigma =
0.1$) and  Figure \ref{results:figden20} ($\sigma = 0.2$). By visual inspection,
we find that the oracles are globally under- ($\Ls{2}$) and over-regularized
(Bregman), respectively. While the scalar parameter $\lambda$ was chosen
optimally w.r.t.\ the different distance measures, it still cannot cope with the
spatially varying smoothness of the true object $u^0$.

In contrast, SMRE and SA-TV reconstructions exhibit the desired locally adaptive
behaviour. Still, the SMRE as formulated in this paper has the advantage that
multiple scales are taken into account \emph{at once}, while SA-TV only adapts
the parameter on a single given scale. As a result, SA-TV reconstructions are of
varying quality for finer and coarser features of the object, while the SMRE is
capable of reconstructing such features equally well. This becomes particularly
obvious when zooming into the reconstructions (cf. Figure
\ref{results:details}).

\subsection{Deconvolution \& Inpainting}\label{results:inp}

We next investigate the performance of our approach if the operator $K$ in
\eqref{intro:data} is non-trivial. To be exact, we consider \emph{inpainting}
and \emph{deconvolution} problems. For the first we consider an inpainting
domain that occludes $15\%$ of the image with noise level $\sigma = 0.1$ (upper left
panel in Figure \ref{results:figinpdeconv}) and for the latter a Gaussian
convolution kernel with variance $2$ and noise level $\sigma = 0.02$ (lower left
panel in Figure \ref{results:figinpdeconv}).

 For all experiments we use the dyadic system $\S_2$ and $\alpha = 0.9$. Note
 that in both cases we have $K = K^*$ and $\norm{K} = 1$; we therefore set $\zeta =
 1.01$ in \eqref{impl:primal}. The results are depicted in the
 upper right and lower right images of Figure \ref{results:figinpdeconv}, respectively.

Again, the results indicate that a reasonable trade-off between data fit and
smoothing is found by the proposed a priori parameter choice rule and that the
amount of smoothing is adapted according to the image features.
 
\begin{figure}[h!]
\begin{center}
\includegraphics[width = 0.45\textwidth]{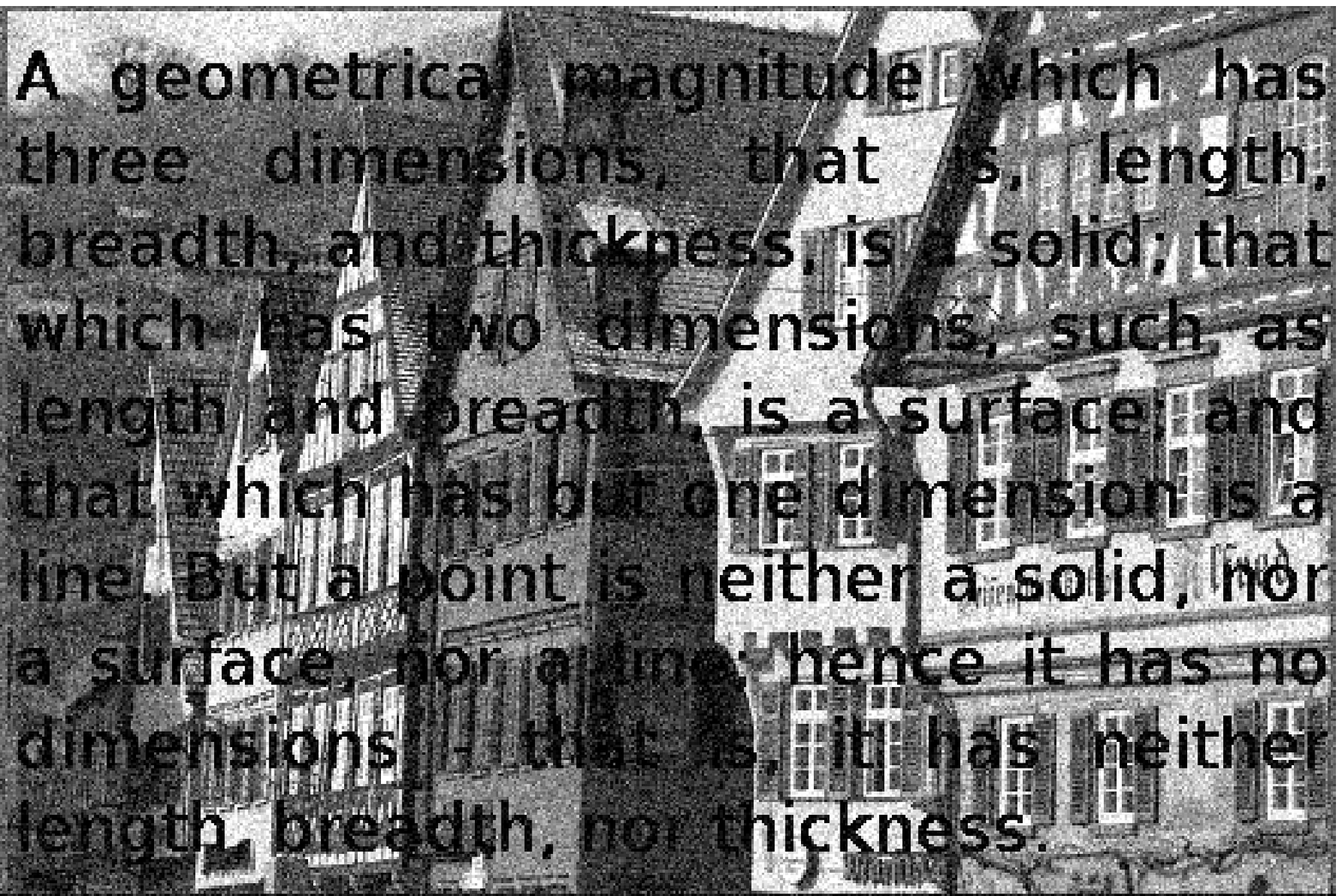}
\hspace{0.004\textwidth}
\includegraphics[width =0.45\textwidth]{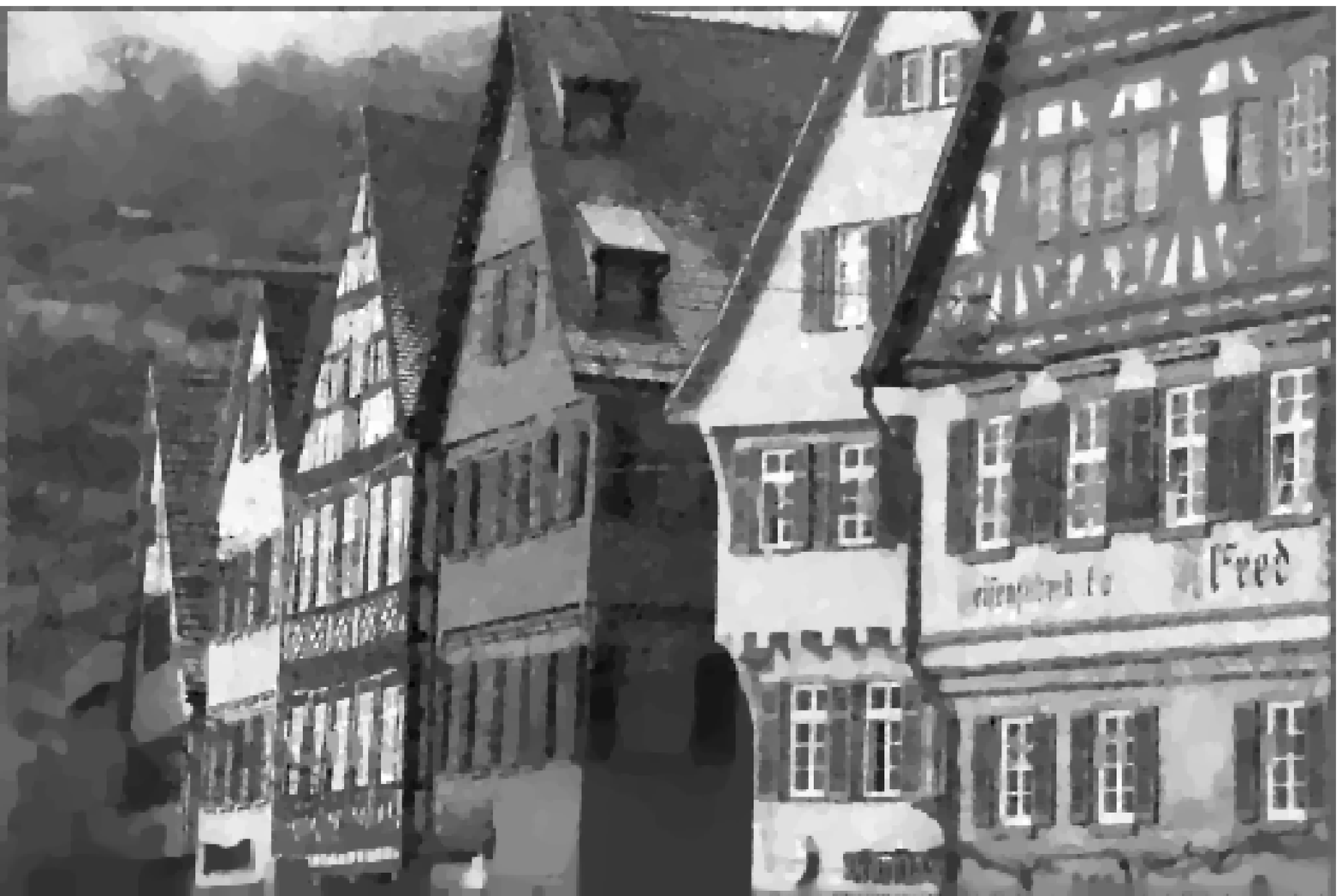}

\vspace{0.01\textwidth}
\includegraphics[width = 0.45\textwidth]{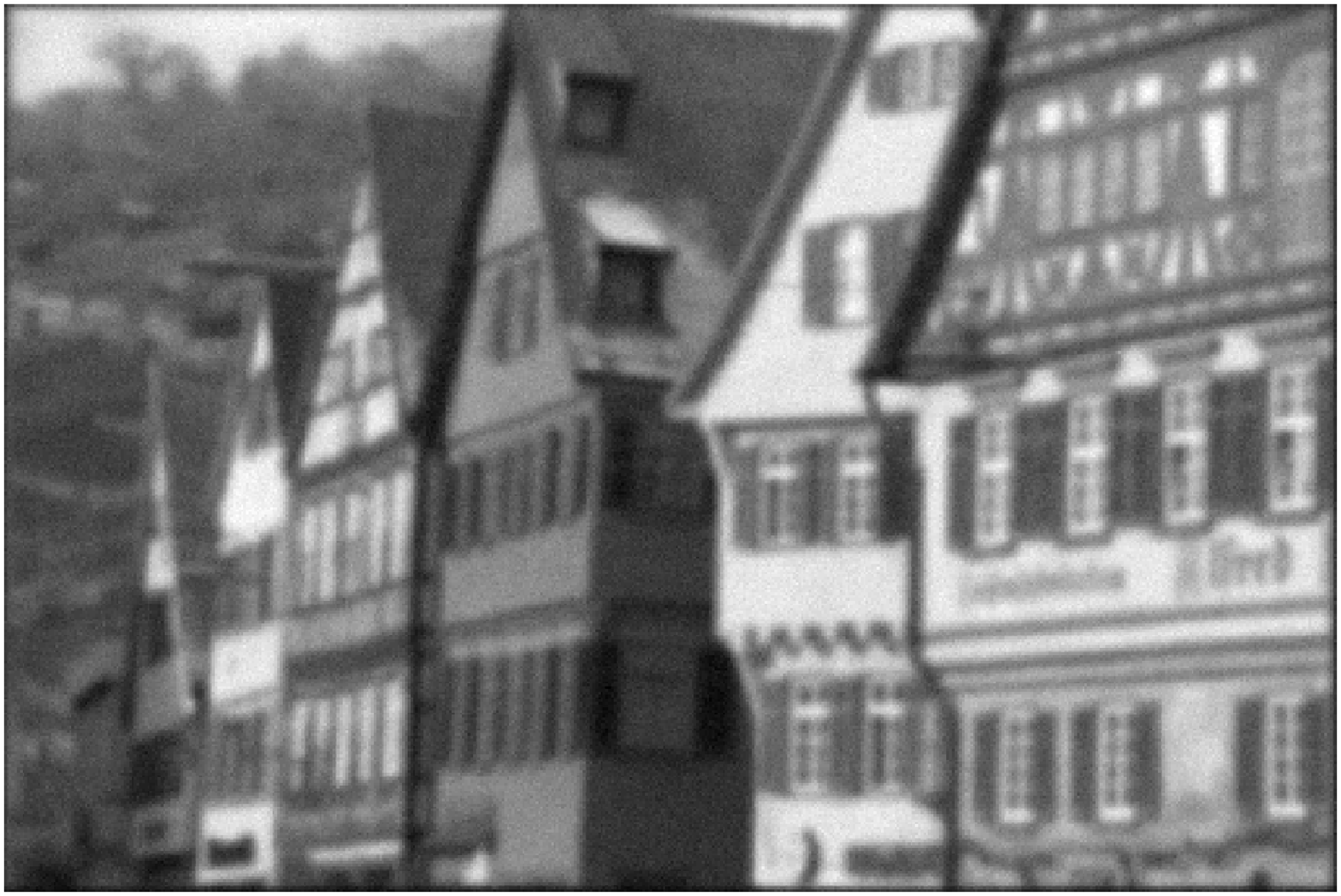}
\hspace{0.004\textwidth}  
\includegraphics[width 
=0.45\textwidth]{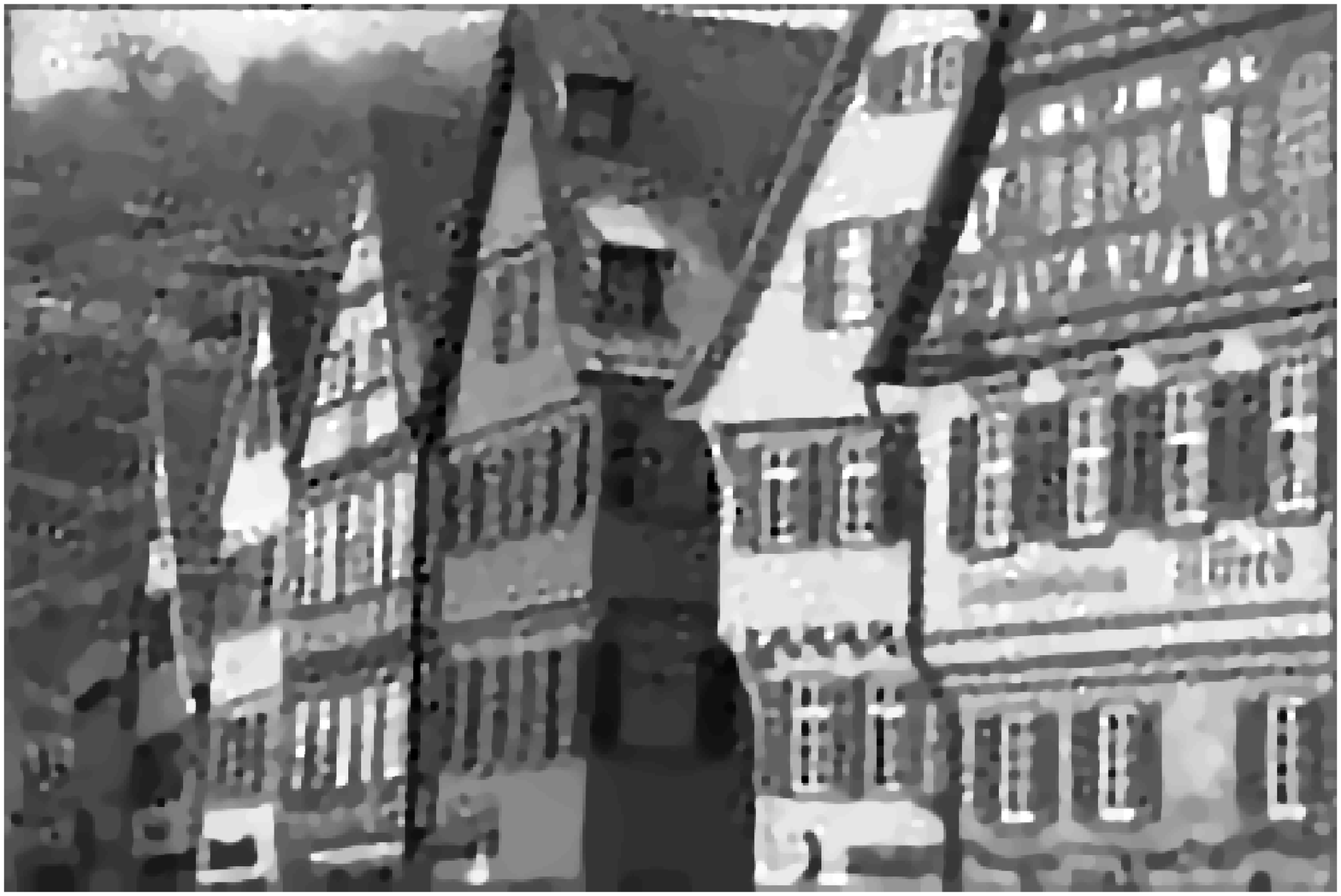} 
\end{center}  
\caption{Inpainting (top): data $Y$ with $\sigma = 0.1$ (left)
and SMRE (right). Deconvolution (bottom): data $Y$ with $\sigma = 0.02$
(left) and SMRE (right).}\label{results:figinpdeconv}
\end{figure} 

\subsection{Examples from Fluorescence Microscopy}\label{results:fluor}

We finally study the performance of our approach in a practical application,
namely fluorescence microscopy. To be more precise, we consider deconvolution
problems for standard confocal microscopy and STED (STimulated Emission
Depletion) microscopy. Both examples have in common, that the recorded data is a
realization of independent Poisson variables where the intensity at each pixel
is determined by a blurred version of true signal. In other words, Model
\eqref{poisson:data} applies. In both cases the blurring can be modelled (in
first order) as a convolution with a Gaussian kernel where the width of the
kernel for confocal microscopes is $3-4$ times larger than it is for STED. As
standard references we refer to \cite{Paw06} (confocal microscopy) and to
\cite{Hel94,Hel07} (STED).
 
In both cases, we will use sample images of  PtK2 cells taken from the kidney of
\emph{potorous tridactylus}, where beforehand the protein $\beta$-tubulin was
tagged with a fluorescent marker. What becomes visible is the microtubule part
of the cytosceleton of the cells. The left panels in Figures
\ref{results:confocal_full} and \ref{results:sted_full} show the confocal and
STED recordings, respectively. Both sample images show an area of $18\times
18$$\upmu$m$^2$ at a resolution of $798\times 798$ pixels. As a regularization
functional we use in both cases a combination of the total variation semi-norm
and the $\Ls{2}$-norm as in \eqref{intro:tvaug} with $\gamma = 1$.
  
\subsubsection{Confocal microscopy}\label{results:fluor:confocal}
 
Figure \ref{results:confocal_full} depicts a confocal recording of a PtK2 cell
(left) and the solution of \eqref{poisson:optprob} computed by Algorithm
\ref{poisson:uzawa} (right). We have used the subset of $\S_0$ with
maximal side length of $20$ pixel and the scale weights $c_S$ are chosen as in
Proposition \ref{smre:conf} with $\alpha = 0.9$. For the convolution kernel we
assume a full width at half maximum of $230$nm which corresponds to a standard
deviation of 4.3422 pixels. Due to this relatively large kernel, the impact of
the deconvolution is clearly visible. 

\begin{figure}[h!]
\begin{center}
\includegraphics[width = 0.48\textwidth]{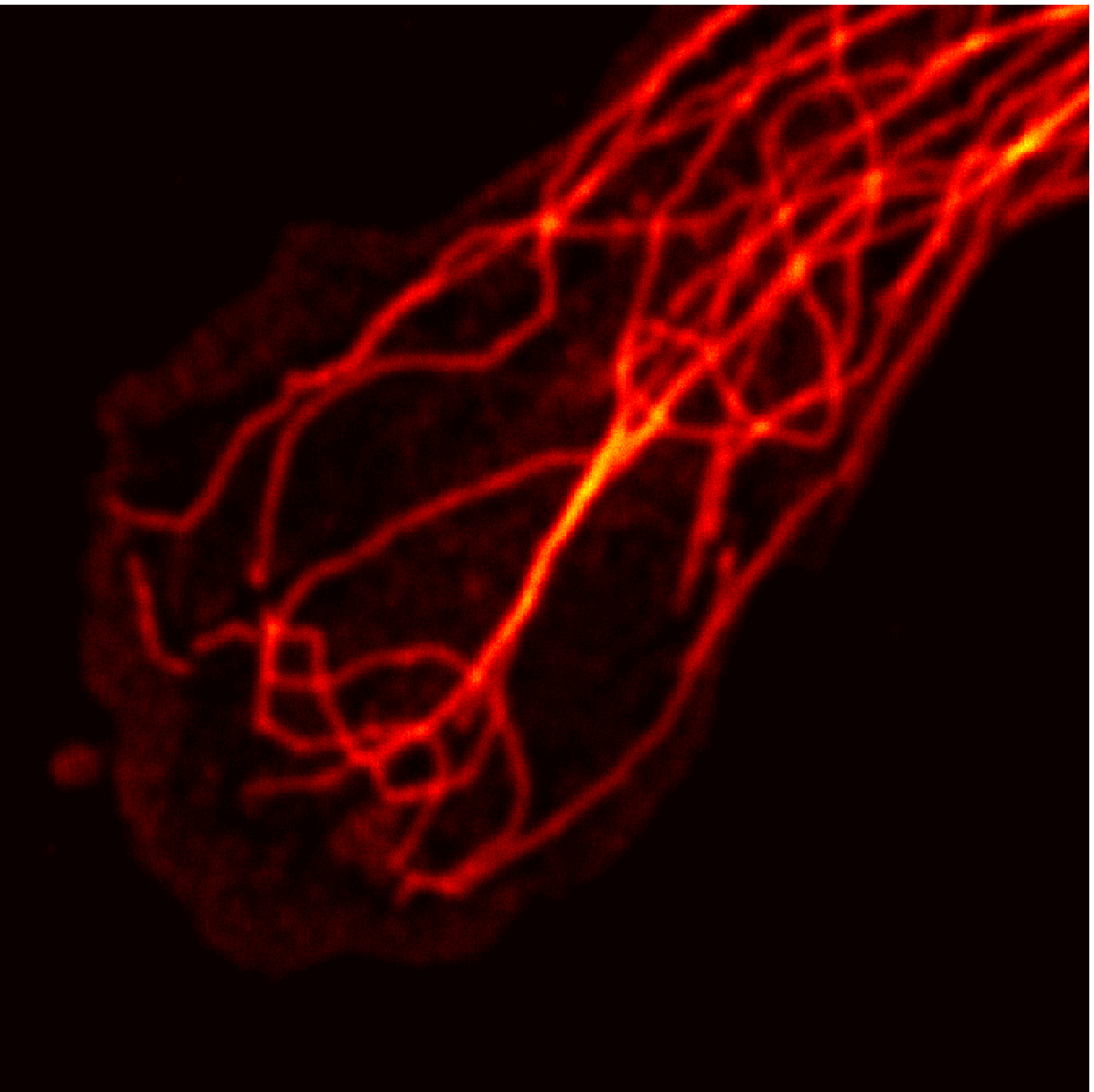}
\hspace{0.004\textwidth}
\includegraphics[width
=0.48\textwidth]{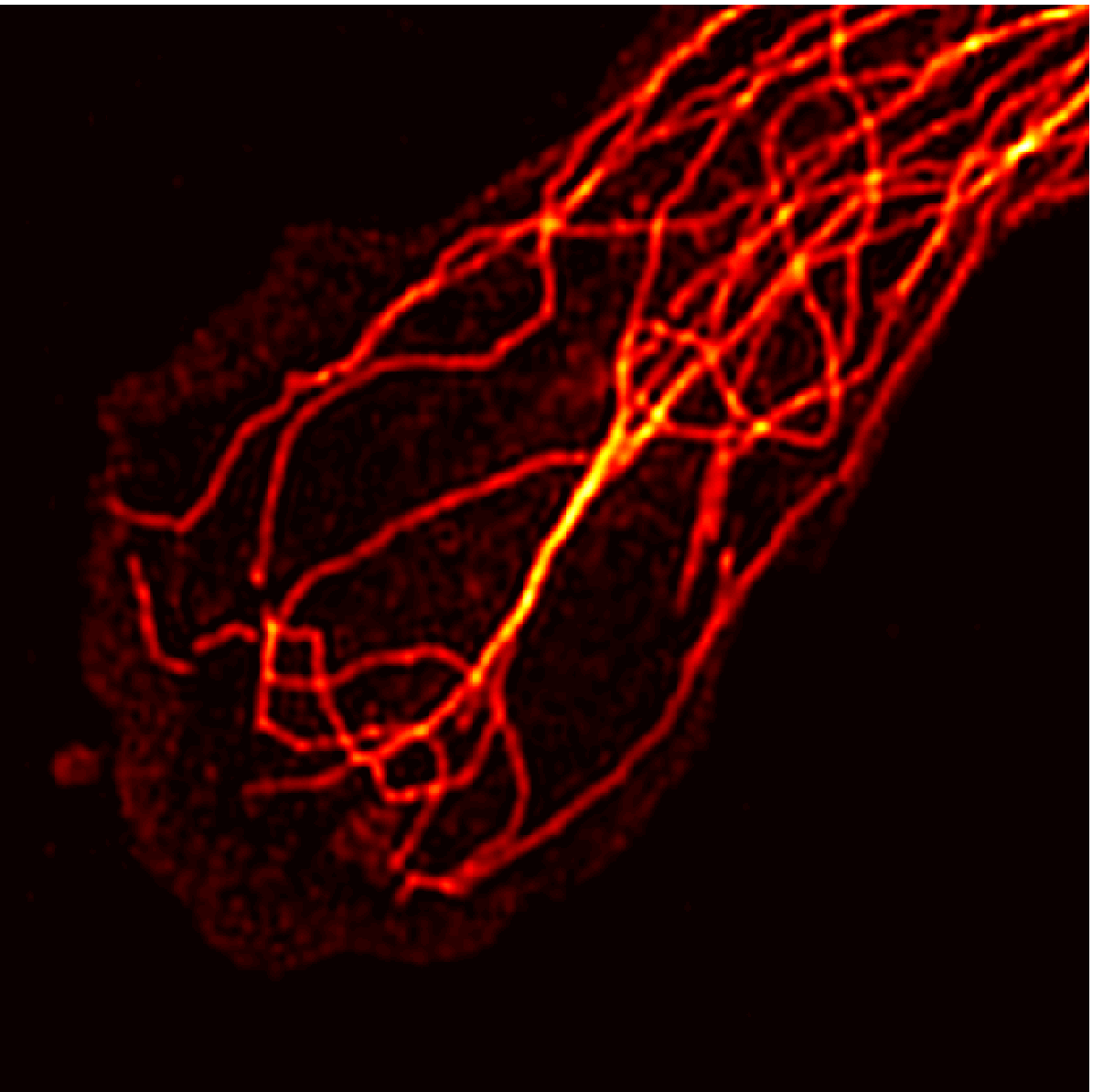}
\end{center}
\caption{Confocal deconvolution: data $Y$ (left) and SMRE
(right).}\label{results:confocal_full}
\end{figure}
 
This becomes remarkably apparent when one takes a closer look: In Figure
\ref{results:confocal_detail} a zoomed-in area of the data (left) and the SMRE
(right) is depicted. Clearly, the reconstruction reveals details that are
not visible (or at least not apparent) in the data. In particular, the
individual microtubule-filaments are separated properly. 

\begin{figure}[h!]
\begin{center}
\includegraphics[width = 0.48\textwidth]{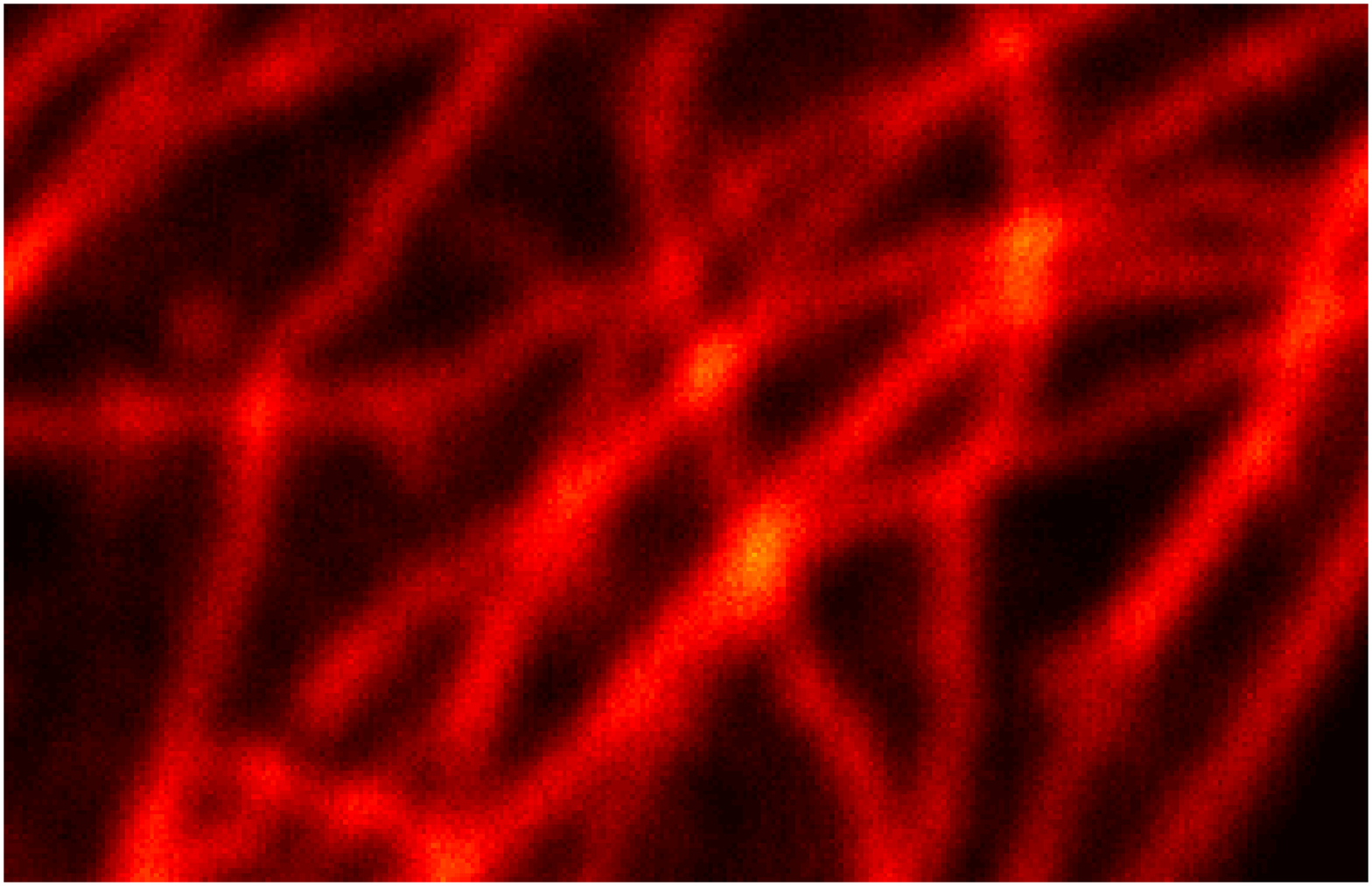}
\hspace{0.004\textwidth}
\includegraphics[width
=0.48\textwidth]{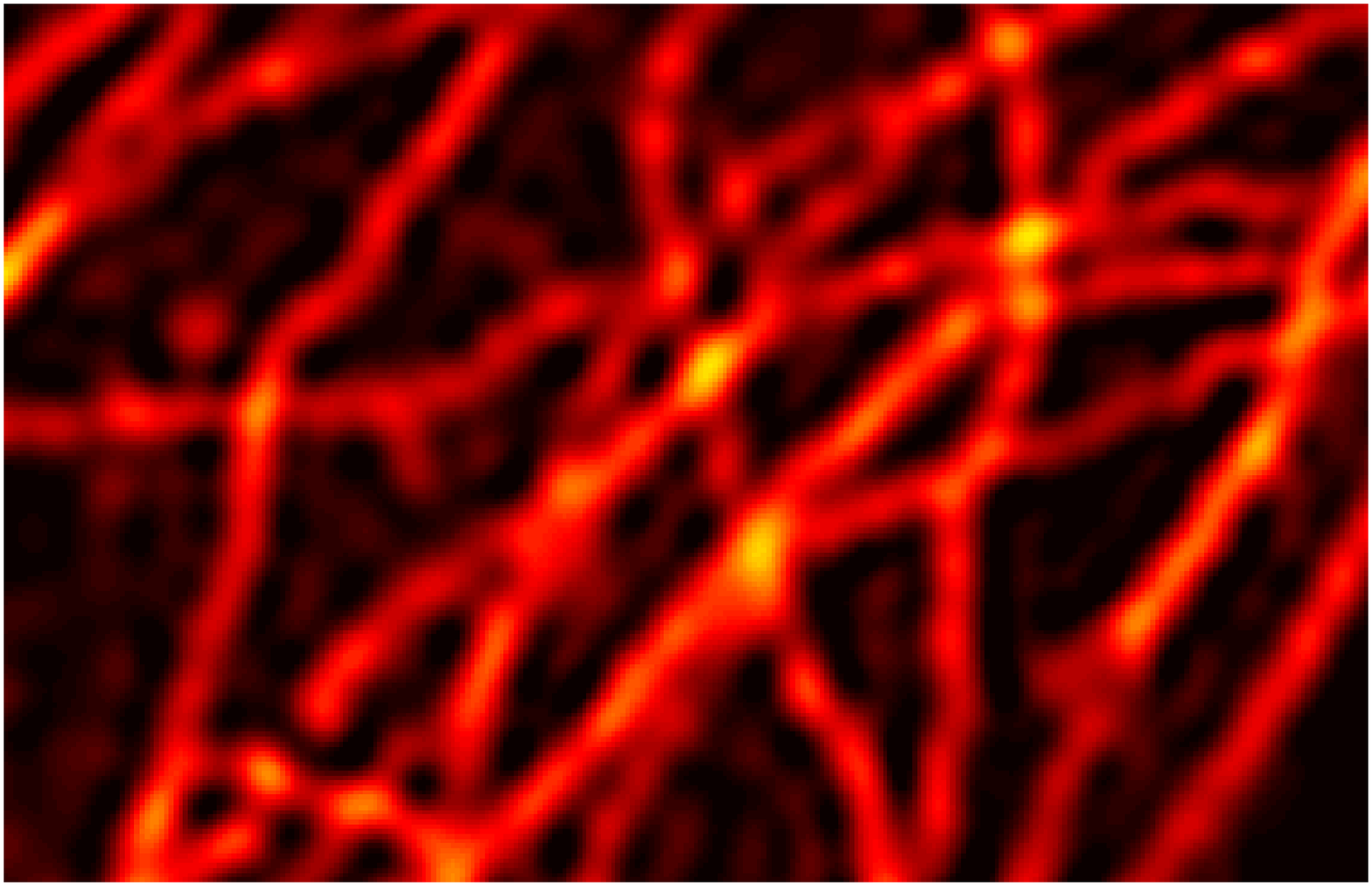}
\end{center}
\caption{Confocal deconvolution (closer look): data $Y$ (left) and SMRE
(right).}\label{results:confocal_detail}
\end{figure}
 
We finally remark, that we proposed a different multi-scale deconvolution method
for confocal microscopy in \cite{FriMarMun12}. In contrast to this work,
we there used a different MR-statistic than $T$ in \eqref{intro:mre} and we used
the standardization $(Y-\beta)\slash \sqrt{\beta}$ in order to transform the
Poisson data $Y$ to normality. The performance of the two approaches for
confocal recording is comparable since the image intensity (=photon
count rate) is relatively high throughout the data and hence standardization
yields a fair approximation to normality. However, for low-count Poisson
data, as we will investigate in the following section, our new approach is
clearly preferable, mostly due to the fact that Anscombe's transform also works
well for small intensities (cf. \cite{BroCaiZhaZhaZhou10}).

\subsubsection{STED microscopy}\label{results:fluor:sted}

The left panel of Figure \ref{results:sted_full} depicts a STED recording of a
PtK2 cell.  For the
convolution kernel a full width at half maximum of $70$nm is assumed, which
corresponds to a standard deviation of approximately $1.1327$ pixels. The right
image of Figure \ref{results:sted_full} depicts an approximate solution of
\eqref{poisson:optprob} that was computed by Algorithm \ref{poisson:uzawa}.
Again we use the system $\S$ of all squares in $\S_0$ up to a maximal side
length of $20$ pixels and the thresholds $c_S$ are chosen according to Proposition
\eqref{smre:conf} with $\alpha = 0.9$. Due to the relatively small convolution
kernel, the impact of the deconvolution is less striking as e.g. for the
confocal recording in Paragraph \ref{results:fluor:confocal}.

 \begin{figure}[h!]
\begin{center}
\includegraphics[width = 0.48\textwidth]{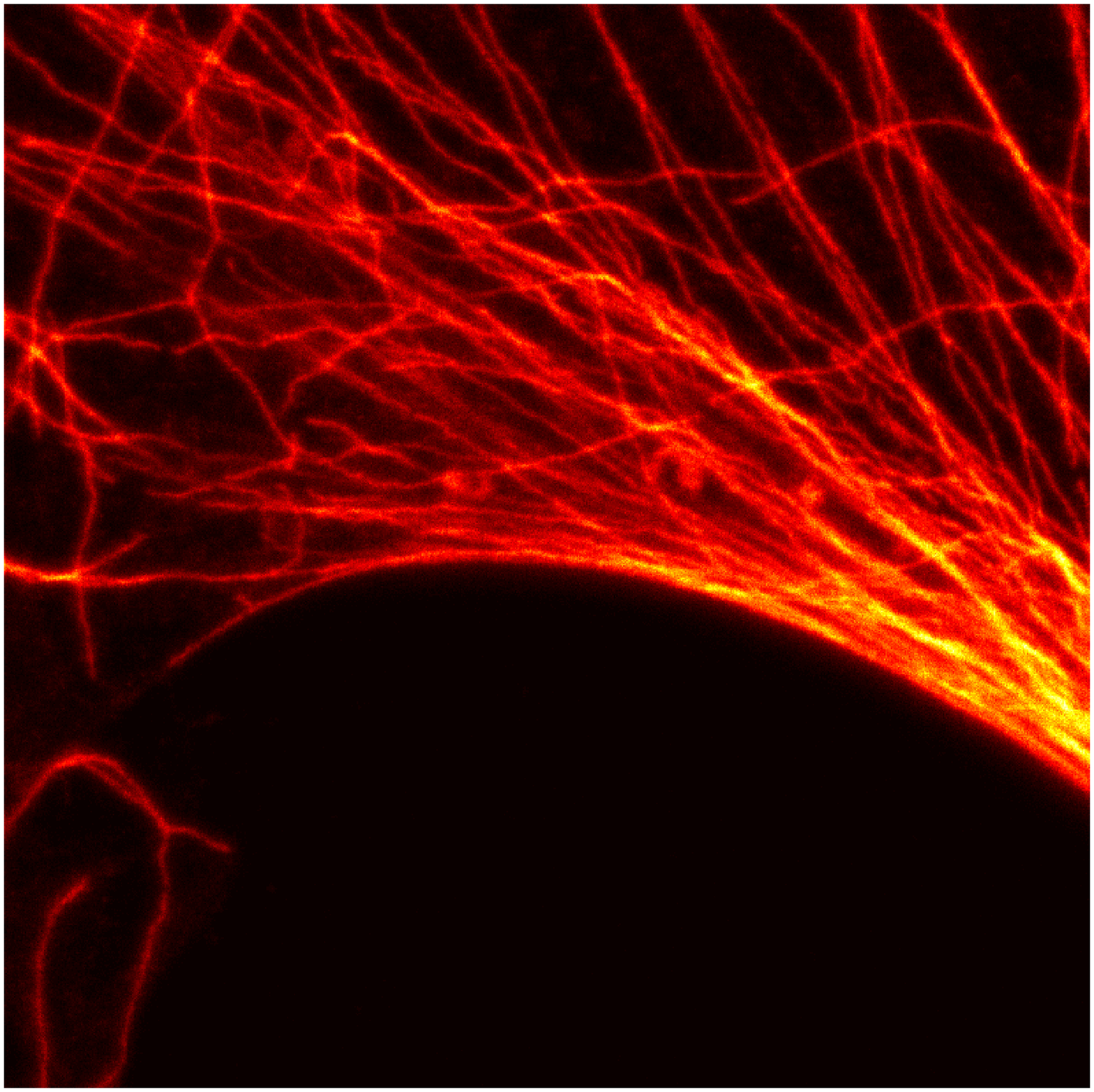}
\hspace{0.004\textwidth}
\includegraphics[width =0.48\textwidth]{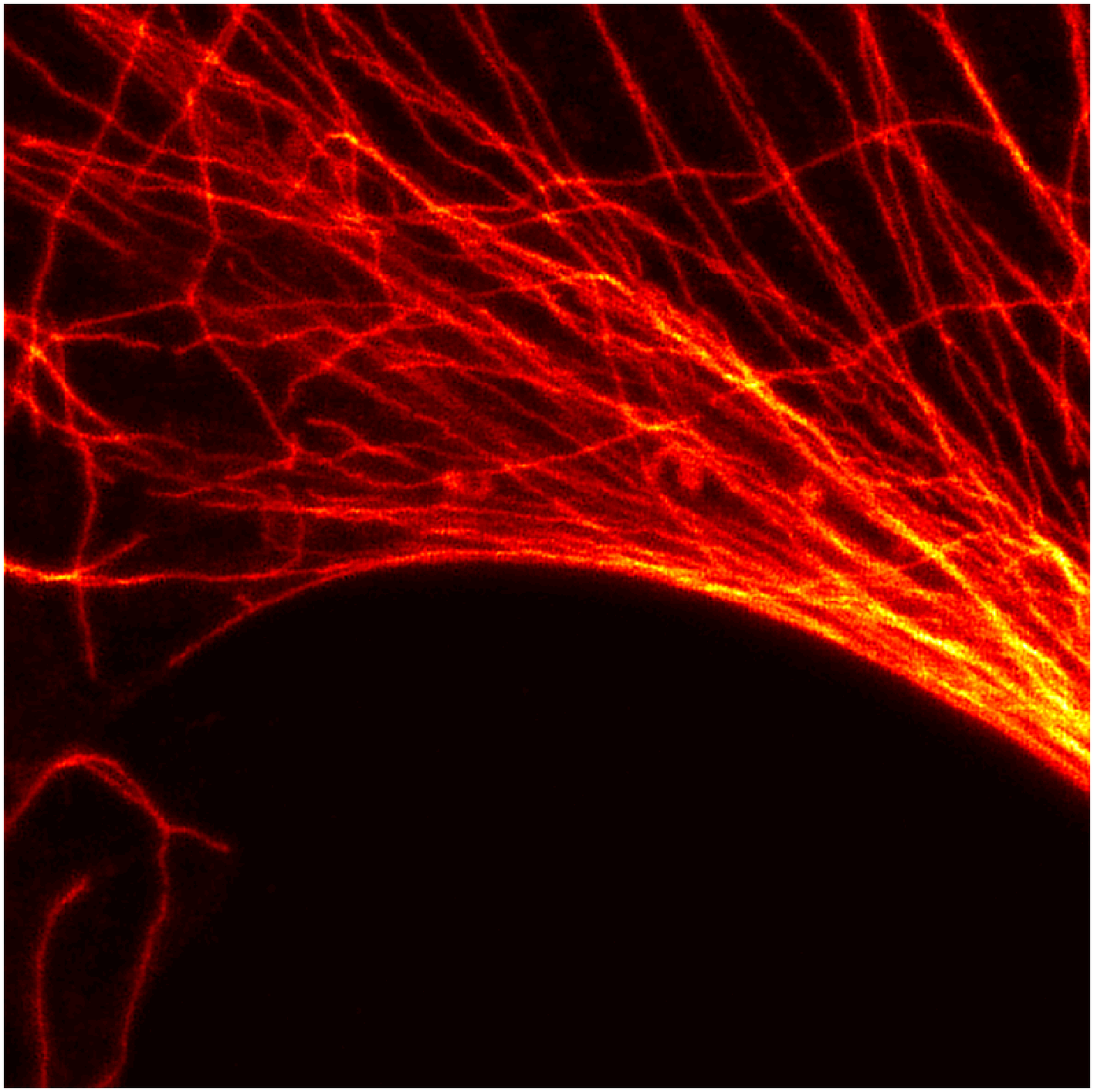}
\end{center}
\caption{STED deconvolution: data $Y$ (left) and SMRE
(right).}\label{results:sted_full}
\end{figure}

 When zooming into the image, the effect becomes more obvious: The left and
 middle panel in Figure \ref{results:sted_details} depicts a detailed view on
 the STED data and the SMRE from Figure \ref{results:sted_full}. The right
 panel in Figure \ref{results:sted_details} shows the \emph{global}
 reconstruction $\hat u_{\text{g}}$, i.e. we computed a solution of
 \eqref{poisson:optprob} w.r.t. to the trivial system $\S = \set{G}$ and the parameter $c_G$ as in Proposition \ref{smre:conf} with
$\alpha = 0.9$. The global reconstruction exhibits typical concentration
phenomena (especially in the upper half of the image) that are due to the
 ill-posedness of the deconvolution. These artefacts are less prominent for the
 SMRE solution.
 
\begin{figure}[h!]
\begin{center}
\includegraphics[height = 0.5\textwidth]{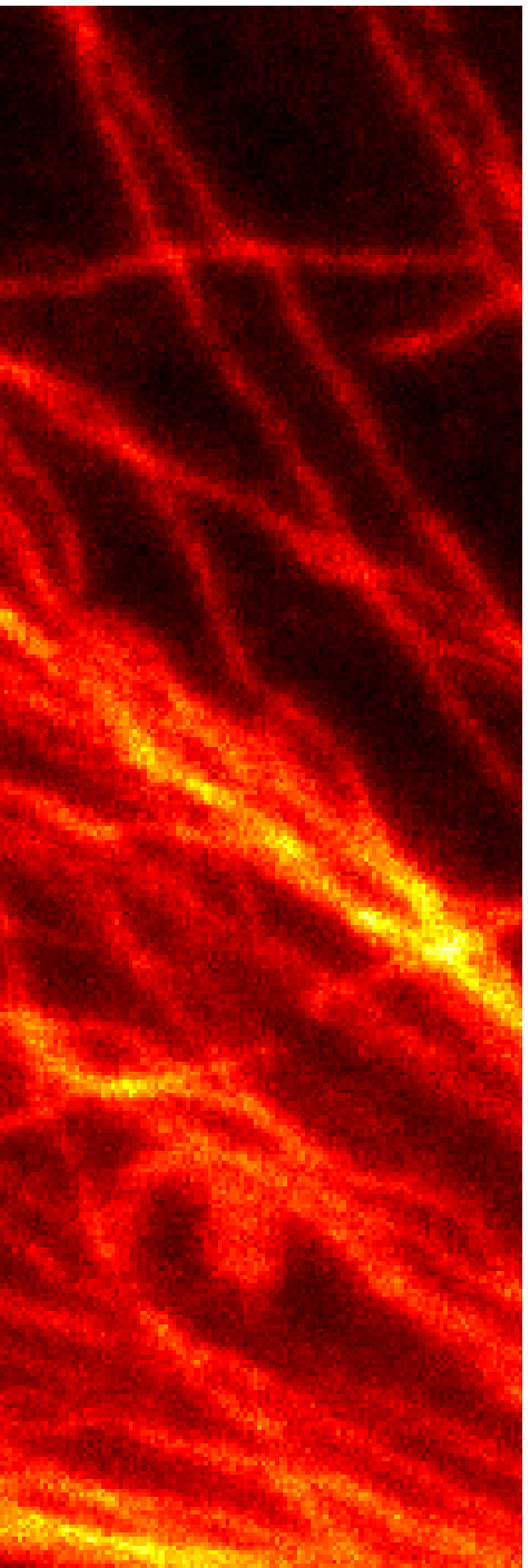}
\hspace{0.001\textwidth}
\includegraphics[height = 0.5\textwidth]{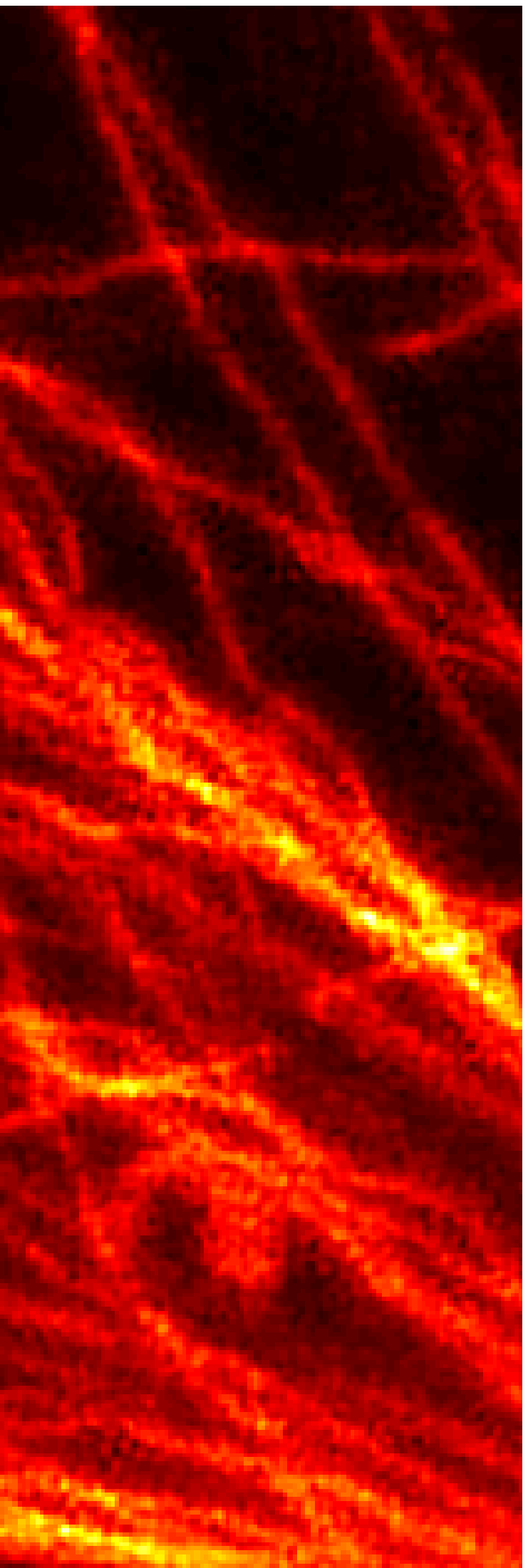}
\hspace{0.001\textwidth}
\includegraphics[height = 0.5\textwidth]{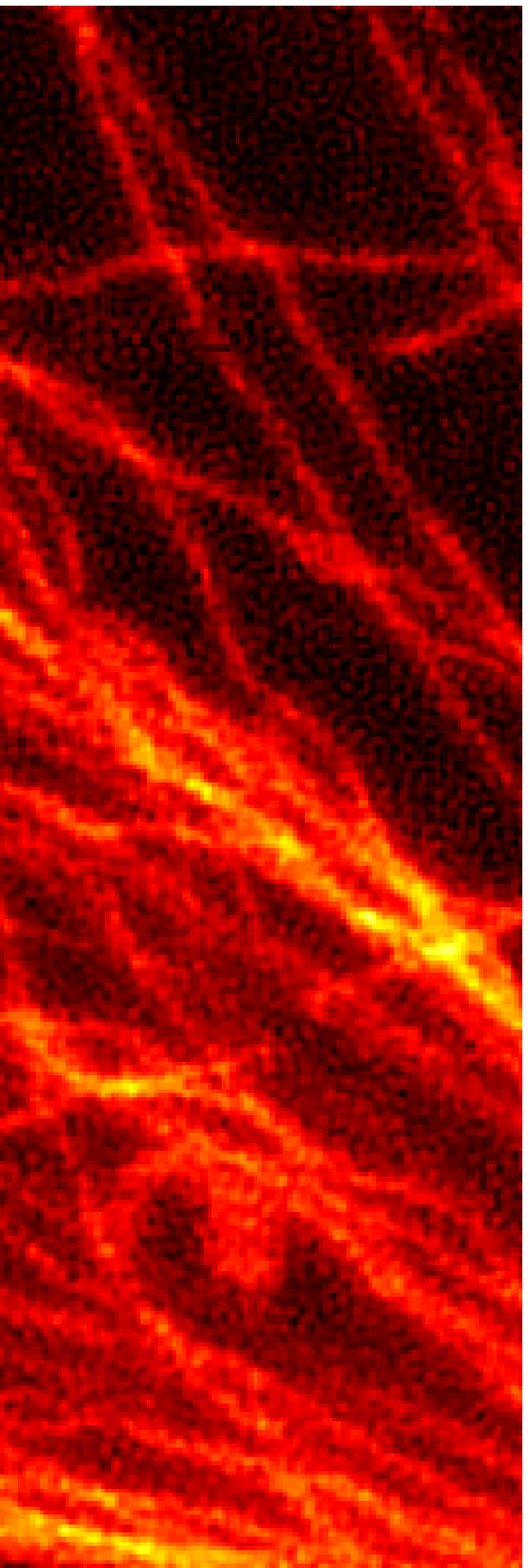}
\end{center}
\caption{STED deconvolution (detail): data $Y$ (left), SMRE (middle) global
reconstruction (right)}\label{results:sted_details}
\end{figure}

 At the same time, some parts of $\hat u_{\text{g}}$ are 
 oversmoothed. In Figure \ref{results:sted_compare} the union of the sets $S\in \S_0$ where
 \begin{equation*}
 c_S\sum_{(i,j)\in S} \abs{2\sqrt{Y_{ij} + 3\slash 8} -  2\sqrt{K\hat
 u_{\text{g}}}} > 1
 \end{equation*}
 are highlighted, where we restrict our consideration to the squares with a
 sidelength of $7$ pixel. A closer view on the highlighted subset confirms the
 oversmoothing (lower zoom-box). Again we note that \emph{at the same time} the
 global image reconstruction has artefacts due to the ill-posedness of the
 deconvolution (upper zoom-box). These can only be avoided by invoking stronger
 regularization. A comparison with the corresponding details of the SMRE (right) shows that our
 locally adaptive approach lacks this undesirable behaviour.  

\begin{figure}[h!]
\begin{center} 
\includegraphics[height = 0.5\textwidth]{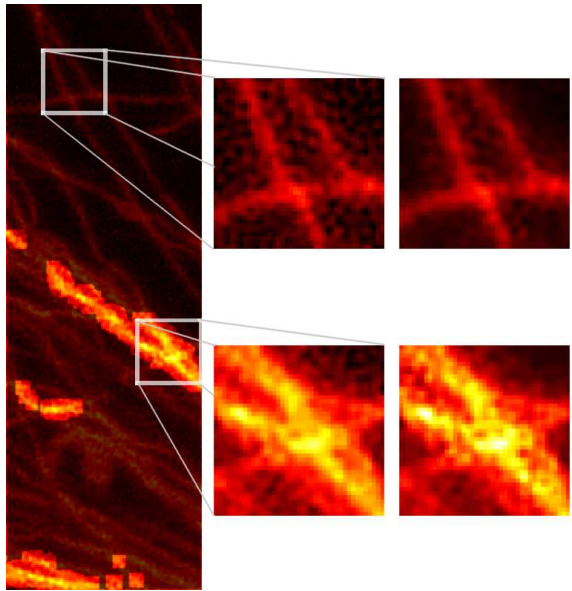}
\end{center}
\caption{Oversmoothed regions in the global estimator $\hat u_{\text{g}}$
identified on the scales $\abs{S} = 7$. Zoomed-in regions show over-
(lower zoom-box) and underregularized (upper zoom-box) parts in the image that
do not occur in the SMRE reconstruction (right boxes).}\label{results:sted_compare}
\end{figure}

\section{Conclusion} 

In this paper we show how statistical multiresolution estimators, that is
solutions of \eqref{intro:optprob}, can be employed for image reconstruction. We
stress that our method, combined with a new automatic selection rule, locally
adapts the amount of regularization according to the multi-scale nature of the
image features. For the solution of the optimization problem
\eqref{intro:optprob} we suggest an inexact alternating direction method of
multipliers combined with Dykstra's projection algorithm. We show how
this estimation paradigm can be extended to the Poisson model which opens up a vast field of
applications, such as Poisson nanoscale fluorescence microscopy. Aside to this
application, the performance of our method is illustrated for standard problems
in imaging such as denoising and inpainting.

\subsection*{Acknowledgments} K.F. and A.M. are supported by the DFG--SNF
Research Group FOR916 \emph{Statistical Regularization and Qualitative
Constraints} (Z-Project). P.M is supported by the BMBF project $03$MUPAH$6$
\emph{INVERS}. A.M.  and P.M. are supported by the SFB755 \emph{Nanoscale
Photonic Imaging} and the SFB803 \emph{Functionality Controlled by Organization
in and between Membranes}.  The authors are indebted to S.~Hell, A.~Egner and
A.~Schoenle  (Department of NanoBiophotonics, Max Planck Institute for
Biophysical Chemistry, G{\"o}ttingen and Laser Laboratorium G{\"o}ttingen) for
providing the microscopy data and for fruitful discussions. 

\bibliographystyle{abbrv}
\bibliography{lit}

\end{document}